\documentclass[11pt]{article}
\usepackage[a4paper, top=1in, bottom=1in, left=1in, right=1in]
{geometry}
\usepackage[T1]{fontenc}
\usepackage{graphicx}
\usepackage{xcolor}
\usepackage{tikz}
\usetikzlibrary{positioning,calc}
\usepackage{microtype}
\usepackage{amsthm,amsmath,amssymb,mathtools}
\usepackage{thmtools}
\usepackage{thm-restate}
\usepackage{hyperref}

\usepackage[noabbrev]{cleveref}
\usepackage{booktabs}
\usepackage{enumitem}
\usepackage{algorithm}
\usepackage{algpseudocode}
\usepackage{float}

\renewcommand{\epsilon}{\varepsilon}

\newtheorem{definition}{Definition}[section]

\newcommand{\cF}{\mathcal{F}}
\newcommand{\cO}{\mathcal{O}}
\newcommand{\cG}{\mathcal{G}}
\newcommand{\SW}{\mathsf{SW}}

\usepackage{etoolbox}

\floatstyle{ruled}
\newfloat{subroutine}{htbp}{los}
\floatname{subroutine}{Subroutine}

\title{Settling PROPm and PROPavg in Graphical Resource Allocation}

\author{Bo Li\thanks{Department of Computing, The Hong Kong Polytechnic University, China. comp-bo.li@polyu.edu.hk} \and
Ankang Sun\thanks{School of Computer Science and Technology, Shandong University, Qingdao, China. ankang.sun@sdu.edu.cn} \and
Ruijie Wang\thanks{EECS, University of Michigan, Ann Arbor, MI, USA. wruijie@umich.edu}}

\begin{document}
\date{}
\maketitle

\begin{abstract}
We study proportional fairness in graphical resource allocation, where agents are vertices, indivisible items are edges, and each item must be allocated to one of its two endpoints. It has been proved that PROP1 orientations always exist and PROPx orientations may not, but it has remained open whether the intermediate relaxations PROPm and PROPavg (both can be satisfied without graphical constraints) can always be satisfied.
In this paper, we resolve this gap. 
We prove that PROPm orientations always exist for goods on multigraphs and can be computed efficiently. The guarantee extends to chores and a mixture of goods and chores. In sharp contrast, we show that PROPavg orientations need not exist, even for simple graphs with binary valuations, and that deciding their existence is NP-complete. We further quantify the efficiency loss of PROPm: for goods, the price of PROPm is exactly $2$, which is one of the few settings where a constant bound on the price of fairness can be obtained. We complement these results with hardness results for welfare optimization under PROPm and for the existence of orientations that are simultaneously PROPm and Pareto optimal.
\\
\noindent\textbf{Keywords: }{Fair Division, Graph Orientation, Proportionality}
\end{abstract}

\section{Introduction}

Fair division of indivisible resources has been extensively studied across a wide range of settings and fairness notions \cite{DBLP:journals/ai/AmanatidisABFLMVW23}. One of the most fundamental fairness requirements is proportionality (PROP), which requires each agent to receive utility at least a $\frac{1}{n}$-fraction of her value for all items, where $n$ is the number of agents. Since exact proportionality is often unachievably demanding for indivisible resources, several relaxations have been proposed. These relaxations typically allow an agent to compensate for a proportionality shortfall by hypothetically adding back one carefully chosen item. The weakest among them is proportionality up to one item (PROP1) \cite{DBLP:conf/sigecom/ConitzerF017}, which requires that each agent can reach her proportional share after the addition of some item allocated to another agent. PROP1 allocations are guaranteed to exist and can be achieved together with Pareto optimality (PO) \cite{Caragiannis2019unreasonable,DBLP:conf/aaai/BarmanK19}. This naturally raises the question of whether stronger relaxations of proportionality can also be guaranteed.

Baklanov et al. \cite{DBLP:conf/aaai/BaklanovGGS21} introduced proportionality up to the maximin item (PROPm), where the hypothetical item added to an agent's bundle is determined by the maximum, over all other agents, of the least-valued item in that agent's bundle. They first proved the existence of PROPm allocations for up to five agents \cite{DBLP:conf/aaai/BaklanovGGS21} and later extended the result to an arbitrary number of agents \cite{DBLP:conf/ijcai/BaklanovGG021}. Subsequently, Kobayashi and Mahara \cite{DBLP:journals/siamdm/KobayashiM25} introduced a stronger notion, proportionality up to the least-valued item on average (PROPavg), in which the added value is the average value of the least-valued items in the bundles of the other agents. They showed that PROPavg allocations are also guaranteed to exist for all instances.
It is worth noting that if the hypothetical compensation is further restricted to the least-valued item among all items allocated to the other agents, i.e., PROPx, existence is no longer guaranteed \cite{moulin2019fair,DBLP:journals/orl/AzizMS20}. We summarize the following hierarchy of these notions by implication:
\[
\text{PROPx}\Longrightarrow \text{PROPavg}\Longrightarrow \text{PROPm}\Longrightarrow \text{PROP1}.
\]

In all of the models discussed above, indivisible resources are allocated in a fully flexible manner: each item may be assigned to any agent. In many applications, however, feasibility is inherently local. For example, a course may be taught only by one of a few qualified instructors, a task may be handled only by one of the adjacent teams, and an infrastructure resource may be shared only by the two parties it connects. A natural abstraction for such settings is a graph: agents are represented by vertices, items by edges, and allocating an item corresponds to orienting the associated edge toward one of its endpoints.

This graphical allocation model is closely related to graph balancing \cite{DBLP:conf/soda/EbenlendrKS08,DBLP:journals/algorithmica/EbenlendrKS14} and to fair division under graphical valuations \cite{DBLP:conf/sigecom/0001FKS23,christodoulou2026envy}, in particular its orientation variant. It captures the tension between local feasibility constraints and global fairness requirements and has been widely studied at the interface of theoretical computer science and economics. The graph-balancing literature mainly focuses on computational questions related to maximizing the minimum utility \cite{DBLP:conf/focs/ChakrabartyCK09,DBLP:conf/stoc/BateniCG09,DBLP:journals/scheduling/VerschaeW14,DBLP:conf/icalp/JansenR19,DBLP:conf/esa/SchwartzY19}, whereas the fair-orientation literature studies fairness notions such as envy-freeness and its relaxations \cite{DBLP:conf/sigecom/0001FKS23,Zeng2025efx,DBLP:conf/ijcai/DeligkasEGK25,DBLP:conf/ijcai/0027WL024,hsu2025polynomial}.
Very recently, Sun et al. \cite{DBLP:journals/corr/abs-2602-18098} initiated the study of proportionality-based fairness in graphical resource allocation. In this model, since each edge is available to exactly two possible recipients, the proportional benchmark is naturally defined as one half of an agent's total value for her incident edges \cite{DBLP:journals/corr/abs-2602-18098}. Sun et al. showed that PROP1 allocations are always achievable, whereas PROPx allocations are not guaranteed to exist. They explicitly left open the intermediate existence of PROPm.

In this paper, we resolve this open problem and provide a systematic study of PROPm and PROPavg in graphical resource allocation, covering their existence, computational complexity, and efficiency guarantees.

\subsection{Our Contribution}

In this paper, we study proportional fairness in graphical resource allocation. Following \cite{DBLP:journals/corr/abs-2602-18098}, we define the proportional share of each agent $i$ as
\[
    \mathsf{PROP}_i=\frac{1}{2} v_i(E_i),
\]
where $v_i(\cdot)$ is agent $i$'s valuation function and $E_i$ is the set of edges incident to $i$. We adopt the notions of PROPm and PROPavg from \cite{DBLP:conf/aaai/BaklanovGGS21} and \cite{DBLP:journals/siamdm/KobayashiM25}, respectively. 
We give a negative answer for PROPavg and a positive answer for PROPm. Our results show a sharp boundary: PROPm is robust in graphical resource allocation, while PROPavg is already too strong even on simple graphs with binary valuations.
For PROPavg, we prove that deciding whether a PROPavg allocation exists or not is NP-complete.
For PROPm, we conduct a complete analysis, which is summarized as follows.

\paragraph{Existence of PROPm Orientations.}
Our first main result is that PROPm orientations always exist for goods on multigraphs and can be computed in polynomial time. The algorithm first resolves pairs of adjacent agents when a local sufficient condition guarantees that both endpoints receive at least their local half-share. The unresolved pairs are then compressed into a simple virtual graph whose endpoint weights record the remaining local deficits and surpluses. A greedy-walk orientation of this virtual graph guarantees that, for every agent, all but possibly one lost virtual edge can be charged to previously won virtual edges. Mapping the virtual orientation back to the original multigraph ensures a PROPm orientation. The algorithm runs in $O(m\log m)$ time.
The existence guarantee extends beyond goods: we prove that a PROPm orientation also exists for chores and a mixture of goods and chores.


\paragraph{The Price of PROPm.}
Fairness and efficiency are often orthogonal criteria, so we next quantify the welfare loss caused by imposing PROPm. We use the standard price-of-fairness benchmark: the worst-case ratio between the maximum social welfare without the fairness constraint and the maximum social welfare among PROPm orientations. For indivisible goods in graphical resource allocation, we prove that the price of PROPm is exactly $2$.
This constant bound is notable in light of known price-of-fairness results for unconstrained indivisible goods. Without graphical constraints, the price of PROP1 can be as large as $\Omega(\sqrt n)$, and the same order of loss also appears for other fairness notions such as EF1 and $\frac{1}{2}$-MMS \cite{DBLP:conf/wine/BarmanB020,DBLP:journals/mst/BeiLMS21}. 
We give a constructive proof. We construct a PROPm orientation such that reversing the assignments of all contested edges yields another PROPm orientation.
An averaging argument then shows that one of them gives at least half of the social welfare. 

\paragraph{Optimization under PROPm.}
We complement the existence and price-of-fairness results by studying the computational complexity of welfare optimization under PROPm. We prove that computing a PROPm orientation with maximum social welfare is APX-hard on multigraphs and NP-hard even on simple graphs. We also show that PROPm is not generally compatible with Pareto optimality in the graphical setting and that deciding whether there exists an orientation that is simultaneously PROPm and Pareto optimal is NP-hard. These results extend, via a corresponding goods-chores duality, to graphical chores.

In the main body of the paper, we only focus on the orientation of goods and defer the study of chores and the mixture of goods and chores to Appendix \ref{sec:chores-mixed}.

\subsection{Related Work}

\paragraph{Max-min and Min-max Graph Balancing.}
Graphical allocation is also closely related to graph balancing and restricted allocation with max-min or min-max objectives. In these problems, each job or good can be assigned only to a limited set of machines or agents, and the special case in which each item is feasible for at most two has the same combinatorial flavor as orienting graph edges. The Santa Claus problem \cite{DBLP:conf/stoc/BansalS06,DBLP:journals/talg/AsadpourFS12,DBLP:conf/icalp/ChengM18} and the (unrelated) job scheduling problem \cite{DBLP:journals/mp/LenstraST90,DBLP:journals/mp/ShmoysT93} have been studied extensively. 
For the min-max objective, it was shown in \cite{DBLP:conf/soda/EbenlendrKS08,DBLP:journals/algorithmica/EbenlendrKS14} and \cite{DBLP:journals/jco/AsahiroJMOZ11} that no approximation ratio better than $1.5$ is possible unless $\mathrm{P}=\mathrm{NP}$. Ebenlendr et al. \cite{DBLP:conf/soda/EbenlendrKS08,DBLP:journals/algorithmica/EbenlendrKS14} gave a polynomial-time $1.75$-approximation algorithm, which was later slightly improved by Jansen and Rohwedder \cite{DBLP:conf/icalp/JansenR19}.
For the max-min objective, Chakrabarty et al. \cite{DBLP:conf/focs/ChakrabartyCK09} proved that the problem is hard to approximate within any factor better than $2$. On the positive side, polynomial-time approximation algorithms with ratios $4$, $2+\epsilon$, and $2$ were developed by Bateni et al. \cite{DBLP:conf/stoc/BateniCG09}, Chakrabarty et al. \cite{DBLP:conf/focs/ChakrabartyCK09}, and Verschae and Wiese \cite{DBLP:journals/scheduling/VerschaeW14}, respectively. A variant incorporating orientation costs was studied in \cite{DBLP:conf/esa/SchwartzY19}. More general restricted-valuation models, where each item may be valuable to an arbitrary subset of agents, have also been investigated, for example in \cite{DBLP:journals/talg/AsadpourFS12,DBLP:conf/icalp/ChengM18,DBLP:conf/soda/Feige08}.

\paragraph{Fair allocations and orientations in graphs.}
Our model is the orientation variant of the graphical valuation model introduced by Christodoulou et al.~\cite{DBLP:conf/sigecom/0001FKS23,christodoulou2026envy}, in which agents are vertices, items are edges, and each agent values only the edges incident to her. They distinguish \emph{allocations}, where an edge may be given to any agent (including one who does not value it), from \emph{orientations}, where every edge must be given to one of its endpoints, so that no item is wasted. They showed that EFX allocations always exist on simple graphs, whereas EFX orientations may fail to exist and deciding whether one exists is NP-complete. The two variants have since been studied along largely separate lines.

On the allocation side, the central question has been whether EFX allocations also exist on multigraphs. Amanatidis et al.~\cite{DBLP:conf/sigecom/AmanatidisFS24} proved that $\frac{2}{3}$-EFX allocations exist on multigraphs with additive valuations, which Kaviani et al.~\cite{kaviani2025improvedapproximateefxguarantees} improved to $\frac{\sqrt{2}}{2}$; Kaviani et al.~\cite{kaviani2024tale} showed that exact EFX allocations exist for restricted additive valuations. Exact EFX allocations were then established for bipartite multigraphs \cite{afshinmehr2024efx,bhaskar2025extending,sgouritsa2025existence}, multicycles \cite{afshinmehr2024efx}, multitrees and multigraphs whose girth is large relative to their chromatic number \cite{bhaskar2025extending}, multigraphs whose shortest cycle of non-parallel edges has length at least 6 \cite{sgouritsa2025existence}, and triangle-free multigraphs \cite{afshinmehr2026trianglefree}. 
Recently, Afshinmehr et al.~\cite{afshinmehr2026multigraphs} and Christodoulou et al.~\cite{christodoulou2026efxmulti} resolved the general case, showing that EFX allocations always exist on multigraphs with cancelable valuations.
The model has also been extended to (multi-)hypergraphs \cite{lianeas2026efx,kakatelis2026almost}. 
Misra and Sethia~\cite{misra2024envy} studied envy-free and efficient allocations under graphical valuations. 
Beyond envy-based notions, Christodoulou and Mastrakoulis~\cite{christodoulou2025exact} showed that exact MMS allocations exist on multigraphs with additive valuations (for MMS, one may restrict attention to orientations without loss of generality), and separated PMMS allocations from PMMS orientations.

On the orientation side, which is the focus of this paper, Zeng and Mehta~\cite{Zeng2025efx} related EFX orientability of simple graphs to the chromatic number: every graph with chromatic number at most 2 admits an EFX orientation under any valuations, whereas this fails for chromatic number at least 4. 
Zeng~\cite{zeng2026polynomial} subsequently gave a complete characterization of such strongly EFX orientable graphs, which can be checked in polynomial time.
Deligkas et al.~\cite{DBLP:conf/ijcai/DeligkasEGK25} proved that EF1 orientations always exist for monotone valuations, and that deciding the existence of an EFX orientation remains NP-complete even on graphs with vertex cover number $8$ or on multigraphs with $10$ vertices.
Bla{\v z}ej et al.~\cite{Blavzej2025tractable} and Kanellopoulos et al.~\cite{DBLP:journals/corr/abs-2512-25033} studied tractable graph structures and the parameterized complexity of EFX and EF orientations; the latter strengthened the hardness to vertex cover number $4$ and to multigraphs with $4$ vertices. Filos-Ratsikas et al.~\cite{filosratsikas2026efkx} showed that even EF$k$X orientations may not exist and that deciding their existence is NP-complete.
For multigraphs, Afshinmehr et al.~\cite{afshinmehr2024efx} characterized the existence of EFX orientations on bipartite multigraphs, and Hsu~\cite{hsu2025efxmulti} identified non-trivial odd multitrees as the obstruction for bi-valued symmetric multigraphs. 
For chores and mixed manna, Zhou et al.~\cite{DBLP:conf/ijcai/0027WL024} gave a complete picture of EFX allocations and orientations on simple graphs.
Zhou et al.~\cite{DBLP:conf/ijcai/0027WL024}, Sun et al.~\cite{DBLP:journals/corr/abs-2602-18098}, and Hsu and King~\cite{hsu2025polynomial} showed that EF1 and EFX orientations of chores can be found in polynomial time on simple graphs whenever they exist, while the corresponding decision problems are NP-complete on multigraphs.
Li et al.~\cite{li2025subsidy,li2026eliminating} studied the minimum subsidy for EF orientations, showing that a total subsidy of $\lfloor n/2\rfloor$ suffices on multigraphs with additive valuations. 
Li et al.~\cite{li2026efficient} generalized the orientation model for chores to position-aware valuations, where vertices are positions to which agents are assigned, and studied the complexity and approximability of Pareto optimal, cost-minimizing, and MMS assignments. 
Biburger et al.~\cite{biburger2026online} studied the approximability of online fair orientations.

Finally, graphs play other roles in fair division. 
A graph may restrict which agents can envy each other \cite{payan2023relaxations}, or require each agent's bundle to be connected in a graph over the items \cite{bouveret2017fair}.



\section{Preliminaries}
For a positive integer $k$, let $[k]=\{1,\ldots,k\}$. We study fair orientations of a multigraph $G=(N,E)$, where the vertices are agents $N = [n]$, and the edges are indivisible items. Write $m=|E|$. 
For agents $i,j\in N$, let $E_{ij}$ denote the set of edges between $i$ and $j$. In a simple graph, $|E_{ij}|\le 1$. Let $E_i$ be the set of edges incident to agent $i$, and let $N_i$ be the set of neighbors of $i$.
An orientation $\pi=(\pi_i)_{i\in N}$ assigns each edge to one of its endpoints. Thus $\pi_i\subseteq E_i$ is the set of edges assigned to agent $i$, and the sets $\pi_i$ form a partition of $E$. Each agent $i$ has an additive valuation $v_i(\cdot) : 2^E \to \mathbb R$ over incident edges, so for every $S\subseteq E_i$, $v_i(S)=\sum_{e\in S}v_i(e)$. 
We write an instance as $I=(G, v)$, where $v=( v_1, \ldots, v_n)$ and $v_i = (v_i(e))_{e \in E_i}$ for $i \in [n]$. 
Most of this paper concerns goods instances, in which $v_i(e)\ge 0$ for every agent $i$ and every edge $e\in E_i$. To keep the preliminaries focused, we state the solution concepts for goods here and defer the corresponding definitions for chores and mixed items to Appendix~\ref{sec:chores-mixed}.

For an agent $i$, since each edge is available to two endpoints, its proportional share is given as $\mathsf{PROP}_i=\frac{1}{2} v_i(E_i)$.
For an orientation $\pi$ and an agent $i$, define
\[
    N^{\pi}_+(i)=\{j\in N_i: E_{ij}\cap \pi_j\neq \emptyset\}
    \quad\text{and}\quad
    N^{\pi}_-(i)=\{j\in N_i: E_{ij}\cap \pi_i\neq \emptyset\}.
\]
Thus $N^{\pi}_+(i)$ is the set of neighbors to whom $i$ loses at least one edge, and $N^{\pi}_-(i)$ is the set of neighbors from whom $i$ receives at least one edge under $\pi$.

For fairness notions, we first define PROPm, which asks an agent to either receive her proportional share or reach it after recovering one lost good. When several parallel edges are lost to the same neighbor, the guarantee uses the least valuable such edge and then chooses the best neighbor.
\begin{definition}[PROPm]\label{def:propm}
An orientation $\pi$ of a goods instance is \emph{PROPm} if, for every agent
$i\in N$, either $N^{\pi}_+(i)=\emptyset$, or $v_i(\pi_i)+ \max_{j\in N^{\pi}_+(i)} \min_{e\in E_{ij}\cap \pi_j}v_i(e) \geq \mathsf{PROP}_i$.
\end{definition}

If $N^{\pi}_+(i)=\emptyset$, then agent $i$ receives all incident edges and has a value of at least her proportional share.
When $N^{\pi}_+(i)\neq\emptyset$, refer to the second term on the left-hand side of the inequality in Definition~\ref{def:propm} as the \emph{correction} of agent $i$ under $\pi$.
PROPavg also adapts the local correction, but it replaces the best-neighbor correction in PROPm with the average correction across all neighbors to whom the agent loses edges. 
Since the average-minimum correction of PROPavg is no larger than the maximin correction of PROPm, PROPavg implies PROPm.

\begin{definition}[PROPavg]\label{def:prop-avg}
An orientation $\pi$ of a goods instance is \emph{PROPavg} if, for every agent
$i\in N$, either $N^{\pi}_+(i)=\emptyset$, or
$ v_i(\pi_i)+ \frac{1}{|N^{\pi}_+(i)|}\sum_{j\in N^{\pi}_+(i)} \min_{e\in E_{ij}\cap \pi_j}v_i(e) \geq \mathsf{PROP}_i$.
\end{definition}

For the unconstrained PROPm or PROPavg allocation, correction for agent $i$ is defined by evaluating the minimum value over the entire bundle $\pi_j$ of another agent $j$. 
However, applying this to graphical orientations is problematic. Because agent $i$ only values her incident edges, if a neighbor $j$ receives any edge not incident to $i$, the minimum value $v_i(e)$ over $e \in \pi_j$ would trivially be zero. 
This would cause the PROPm and PROPavg corrections to almost always collapse to zero and reduce these relaxations to PROP. 
Therefore, to adapt these relaxations, we restrict the evaluation of a neighbor's bundle to the edges that agent $i$ actually competed for and lost to that neighbor, i.e., $E_{ij} \cap \pi_j$.

It is also worth noting that, though the $\frac{1}{2}$-share required by PROPm and PROPavg orientations is no less than the $\frac{1}{n}$-share required by the unconstrained PROPm and PROPavg allocation, PROPm orientations do not imply PROPm allocations even assuming agents have zero value for non-incident edges. The same holds for PROPavg.
Two examples show that the notions are incomparable. In both examples, all unspecified endpoint values are zero. 
First, an orientation can be PROPavg (thus PROPm) without being a PROPm allocation. Let $N=\{1,2,3\}$. Edges $a=(1,2)$ and $b=(1,3)$ have $v_1(a)=v_1(b)=1$; $c, d \in E_{23}$ are dummy edges. 
Consider the orientation $\pi_1=\emptyset$, $\pi_2=\{a,c\}$, and $\pi_3=\{b,d\}$. 
Agent 1's value plus correction is $0 + \frac{1}{2}(1 + 1) = 1$, which meets $\mathsf{PROP}_1  = 1$. 
However, the unrestricted PROPm evaluation includes the dummy edges, yielding $0 + \max(\min(1,0), \min(1,0)) = 0$.
This fails the $\frac{1}{n}$-share of $\frac{2}{3}$.
Conversely, an orientation can be a PROPavg (thus PROPm) allocation but not a PROPm orientation. 
Let $N=\{1,2,3\}$. Edges $a,b,c \in E_{12}$ are each worth $5$ to agent 1, $d \in E_{13}$ is worth $4$, and $f \in E_{23}$ is a dummy edge. 
Consider the allocation $\pi_1=\{d\}$, $\pi_2=\{a,b,c\}$, and $\pi_3=\{f\}$. 
For agent 1, the unrestricted PROPavg evaluation is $4 + \frac{5+0}{2} = \frac{13}{2}$, which exceeds the $\frac{1}{n}$-share of $\frac{19}{3}$. 
However, under Definition~\ref{def:propm}, agent 1's orientation $\text{PROPm}$ share is $\mathsf{PROP}_1 = \frac{19}{2}$. Her orientation evaluation remains $4 + 5 = 9$, but since $9 < \frac{19}{2}$, it is not a PROPm orientation.

We next define social welfare and Pareto optimality to study the efficiency of orientations.

\begin{definition}[Social welfare]
    For an orientation $\pi$, the social welfare is $\mathsf{SW}(\pi)=\sum_{i\in N}v_i(\pi_i)$.
\end{definition}

\begin{definition}[Pareto optimal]
    An orientation $\pi'$ \emph{Pareto dominates} an orientation $\pi$ if $v_i(\pi'_i)\ge v_i(\pi_i)$ for every agent $i\in N$, and $v_j(\pi'_j)>v_j(\pi_j)$ for at least one agent $j\in N$. 
    An orientation $\pi$ is \emph{Pareto optimal} (PO) if no orientation Pareto dominates it.
\end{definition}

We study the interaction between fairness and efficiency in two ways: (1) we ask how much social welfare would be sacrificed under fairness constraints, and (2) we ask whether there exists an orientation that is both fair and Pareto optimal.
In order to quantify the welfare loss under fairness constraints, we apply the framework of the \emph{price of fairness}. Intuitively, it is the supremum ratio over all instances between the unconstrained optimum welfare and the optimal welfare of all fair allocations.

\begin{definition}[Price of fairness]
    The price of fairness with respect to the fairness notion $\cF$ is
    \[
    \text{Price of } \cF = \sup\limits_{I} \frac{\max_{\pi\in \cO(I)} \SW(\pi)}{\max_{\pi\in \cO_{\cF}(I)}\SW(\pi)},
    \]
    where $\cO(I)$ is the set of all orientations and $\cO_{\cF}(I)$ is the set of $\cF$ orientations. The supremum ranges over goods instances with positive optimum welfare and a nonempty $\cO_{\cF}(I)$; a zero denominator gives ratio $+\infty$.
\end{definition}

\section{Existence and Computation of PROPm Orientations for Goods} \label{sec:PROPmExists}

We begin with PROPm orientations for goods. Our main result in this section is that, in the allocation of goods, PROPm orientations always exist and can be found efficiently.

\begin{restatable}{theorem}{thmPropmExists}\label{thm:propmExists}
For goods and multigraphs, one can compute PROPm orientations in polynomial time.
\end{restatable}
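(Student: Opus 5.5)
We follow the outline in the introduction: resolve some adjacent pairs locally, compress the remaining pairs into a simple virtual graph, orient that graph greedily, and map the orientation back.

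\emph{Pair-level quantities.} For each agent $i$ and neighbor $j$, set $P_{ij}=\frac12 v_i(E_{ij})$, so that $\mathsf{PROP}_i=\sum_{j\in N_i}P_{ij}$. PROPm only needs $i$'s total shortfall to be covered by one recovered edge. So it suffices to give every pair $\{i,j\}$ a split of $E_{ij}$ such that each endpoint's local value compares well with its local half-share.

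\emph{Step 1: resolvable pairs.} Call a pair \emph{resolved} if $E_{ij}$ can be split into $A\cup B$ with $v_i(A)\ge P_{ij}$ and $v_j(B)\ge P_{ji}$. In that case both endpoints get at least their local share and the pair contributes no deficit. One sufficient test comes from a cut-and-choose style argument: sort the edges by the ratio $v_i(e)/v_j(e)$ and take a prefix for $i$. Then $i$ gets at least half of her value and $j$ gets at least half of his, up to the single crossing edge. When the crossing edge spoils the inequality, we still get a split where one endpoint, say $i$, meets her share. The other endpoint $j$ has a deficit $d_{ji}$ that is at most the value to $j$ of one edge $i$ takes. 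After a local adjustment, we may take $d_{ji}\le\min_{e\in E_{ij}\cap\pi_i} v_j(e)$.

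For an unresolved pair, we prepare two candidate splits, one favoring $i$ and one favoring $j$. In the split favoring $j$, $j$ has a surplus $s_{ji}\ge 0$ over $P_{ji}$ and $i$ has a deficit $d_{ij}$ bounded by her least-valued edge lost in that pair. The split favoring $i$ is symmetric.

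\emph{Step 2: virtual graph and greedy walk.} Each unresolved pair becomes one virtual edge, carrying the weights $(s,d)$ at each endpoint. Orienting the virtual edge toward $i$ means using the split favoring $i$. Agent $i$ is PROPm if the sum of her surpluses on won virtual edges, minus the sum of her deficits on lost virtual edges, is at least $-\max$ of her per-pair corrections. It therefore suffices that all but one lost deficit is paid for by won surpluses.

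To get this, we orient by walks, as in the proof that every graph has an orientation with in/out degrees differing by at most one. We repeatedly take maximal trails (Euler-type decomposition after pairing odd vertices) and orient them consistently. Then at every vertex, losses and wins alternate along trails, and each vertex loses at most one more virtual edge than it wins.

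We then need each won surplus to dominate the matched lost deficit. For this we choose the trail order greedily, processing pairs in decreasing order of deficit value, in the spirit of a ``losses charged to earlier wins'' argument. The $O(m\log m)$ bound suggests exactly this sorting.

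\emph{Main obstacle.} The hard part is Step 2's charging inequality, $s_{ij}\ge d_{ik}$ for the matched pairs. Surpluses in unresolved pairs can be small, so I would define ``unresolved'' so that a pair is unresolved only when a single edge carries most of the value, for example when $\max_e v_i(e)>P_{ij}$. In that regime, the split favoring $i$ gives her a surplus of at least that edge's value minus $P_{ij}$. Then I would sort virtual edges at each vertex by weight and make the greedy walk always leave a vertex along its heaviest unused edge after entering it. That way each loss is preceded by a win of at least comparable weight, and only the final uncovered loss is absorbed by the PROPm correction.
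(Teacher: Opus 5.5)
Your architecture (resolve some pairs locally, one virtual edge per remaining pair, greedy walk, map back) is the paper's, and your final description of the walk is essentially \textsc{Greedy-Walk}. The gap is the ingredient that makes the walk's charging argument valid. The walk only guarantees that each lost virtual edge at $u_i$, except one, is preceded by a won edge chosen while the lost edge was still available, i.e.\ a won edge with at least the same sort key. To turn this into ``deficits are paid by surpluses'' you need, for every unresolved pair and each endpoint $\ell$, a single key $w_\ell$ with
\[
\text{deficit of } \ell \text{ when losing} \;\le\; \tfrac12 w_\ell \;\le\; \text{surplus of } \ell \text{ when winning}.
\]
You also need the deficit when losing to be at most $\ell$'s minimum value over the edges it loses in that pair, so that the one uncharged loss is covered by the PROPm correction. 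With two independently chosen candidate splits, $s_{ij}$ and $d_{ij}$ are unrelated, so sorting by deficit does not give $s\ge d$ for matched pairs. Your ``local adjustment'' and ``Main obstacle'' paragraphs name this inequality but do not establish it.

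The paper gets all three properties by using one split and its complement. Take an inclusion-minimal $i$-majority set $P_{ij}$ (the paper's $P_{ij}$, not your half-share). Resolve the pair if $v_j(P_{ij})<\frac12 v_j(E_{ij})$. Also resolve it if some $e\in P_{ij}$ has $v_j(e)\le v_j(P_{ij})-\frac12 v_j(E_{ij})$, by giving $P_{ij}\setminus\{e\}$ to $j$; minimality protects $i$. Otherwise, for both $\ell\in\{i,j\}$ we get $0\le v_\ell(P_{ij})-\frac12 v_\ell(E_{ij})<\min_{e\in P_{ij}}v_\ell(e)$. Each endpoint either receives $P_{ij}$ or loses exactly $P_{ij}$. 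Hence $w_\ell=2\bigl(v_\ell(P_{ij})-\frac12 v_\ell(E_{ij})\bigr)$ is simultaneously twice the surplus, twice the deficit, and at most twice the correction from that neighbor.

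Your proposed patch, calling a pair unresolved only when one edge is worth more than half of $v_i(E_{ij})$, does not work. Take three parallel edges each worth $1$ to both endpoints. No split gives both endpoints $\frac32$, yet no edge exceeds $\frac32$, so the pair lies in neither of your classes. It cannot be split by a fixed local rule either. An agent adjacent to $k\ge 3$ such pairs may be short by $\frac12$ on each, a total of $\frac k2>1$, while PROPm lets her recover only a single edge of value $1$. Such pairs must go through the virtual graph, where complementary splits handle them with $w_\ell=1$.

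Two smaller points. First, your definition of ``resolved'' (existence of a split meeting both local half-shares) is NP-hard to test exactly: with $v_i=v_j$ it is \textsc{Partition}. So a polynomial-time algorithm can only use a sufficient test like the one above. Second, the Euler-trail idea helps only if the transitions at each vertex are fixed by pairing consecutive edges in weight order, as \textsc{Orient-Virtual-Graph} later does. Choosing the order in which trails are processed does not control which won edge is matched with which lost edge.
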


The formal introduction of the algorithm is presented in \textsc{Compute-PROPm} (Algorithm~\ref{alg:main}). At a high level, the algorithm works as follows:  it computes a PROPm orientation by first resolving easy sets of parallel edges locally, then handling the remaining sets of edges via a virtual graph. 

For two adjacent agents $i$ and $j$, \textsc{Min-Majority-Partition} (Subroutine~\ref{sub:min_majority}) finds an inclusion-minimal set $P_{ij}\subseteq E_{ij}$ that gives one endpoint, say $i$, at least half of
its value for $E_{ij}$. Then \textsc{Pairwise-Resolve} (Subroutine~\ref{sub:pairwise}) tests sufficient conditions for allocating $E_{ij}$ so that both endpoints receive at least their half-share. When a condition holds, all edges in $E_{ij}$ are oriented immediately
and removed from further consideration.

The remaining edge sets are handled by \textsc{Build-Virtual-Graph} (Subroutine~\ref{sub:virtual_graph}). Each
unresolved edge set $E_{ij}$ is represented by one virtual edge $\{u_i,u_j\}$.
Its weight at endpoint $u_i$ is
\[
    w_i(\{u_i,u_j\})
    =
    2\left(v_i(P_{ij})-\frac{1}{2} v_i(E_{ij})\right).
\]
This weight reflects the local imbalance of giving $P_{ij}$ to $i$ rather than $E_{ij}\setminus P_{ij}$ to $i$. 
If the virtual edge is oriented toward $u_i$, then $i$ receives $P_{ij}$ and gains surplus $\frac{1}{2}w_i(\{u_i,u_j\})$ on $E_{ij}$.
If it is oriented away from $u_i$, then $i$ receives $E_{ij}\setminus P_{ij}$ and incurs deficit $\frac{1}{2}w_i(\{u_i,u_j\})$ on $E_{ij}$.

Finally, \textsc{Greedy-Walk} (Subroutine~\ref{sub:greedy_walk}) orients the virtual graph so that, for each agent, the total deficit from unreceived virtual edges is covered by the surplus from received virtual edges, except possibly for the first unreceived edge. 
After converting the virtual orientation back to the original multigraph, the remaining unreceived edge corresponds to a neighbor whose received bundle provides the PROPm correction. Thus, every agent either receives her proportional share or satisfies PROPm using one allowed maximin lost bundle.

Throughout this section, for every unordered adjacent pair $\{i,j\}$ we fix one of the two endpoints and denote it by $i$; the other endpoint is denoted by $j$. This convention is used only for naming the set $P_{ij}$.
For a set of parallel edges $E_{ij}$ and an endpoint $i$, call a set $P\subseteq E_{ij}$ an $i$-\emph{majority set} if $v_i(P) \ge \frac{1}{2} v_i(E_{ij})$. 
It is said to be \emph{inclusion-minimal} if $v_i(P\setminus\{e\})<\frac{1}{2} v_i(E_{ij})$ for every $e\in P$.

\begin{algorithm}[t]
\caption{\textsc{Compute-PROPm}}
\label{alg:main}
\begin{algorithmic}[1]
\Require A multigraph goods instance $I=(G,\mathbf v)$.
\Ensure A PROPm orientation $\pi$.
\State Initialize $\pi_i\gets\emptyset$ for every agent $i$.
\State $(G',\pi,\mathcal P)\gets\textsc{Pairwise-Resolve}(G,\mathbf v,\pi)$. \Comment{Subroutine~\ref{sub:pairwise}}
\State $\mathcal G\gets\textsc{Build-Virtual-Graph}(G',\mathbf v,\mathcal P)$. \Comment{Subroutine~\ref{sub:virtual_graph}}
\State $\Pi\gets\textsc{Greedy-Walk}(\mathcal G)$.\label{step::alg::main-greedy} \Comment{Subroutine~\ref{sub:greedy_walk}}
\For{each virtual edge $\{u_i,u_j\}$ corresponding to the unresolved pair $E_{ij}$}\label{step::alg::main-final-start}
    \If{$\{u_i,u_j\}$ is oriented toward $u_i$ in $\Pi$}
        \State $\pi_i\gets \pi_i\cup P_{ij}$ and
        $\pi_j\gets \pi_j\cup(E_{ij}\setminus P_{ij})$.
    \Else\, $\pi_j\gets \pi_j\cup P_{ij}$ and
        $\pi_i\gets \pi_i\cup(E_{ij}\setminus P_{ij})$.
    \EndIf
\EndFor\label{step::alg::main-final-end}
\State \Return $\pi$.
\end{algorithmic}
\end{algorithm}

\begin{subroutine}[t]
\caption{\textsc{Min-Majority-Partition}}
\label{sub:min_majority}
\begin{algorithmic}[1]
\Require A parallel edge set $E_{ij}$ and a valuation $v_i$.
\Ensure An inclusion-minimal $i$-majority set $P\subseteq E_{ij}$.
\State Let $e_1,\ldots,e_{|E_{ij}|}$ be an arbitrary ordering of $E_{ij}$.
\State $P\gets E_{ij}$.
\For{$k=1$ to $|E_{ij}|$}
    \If{$v_i(P\setminus\{e_k\})\ge \frac{1}{2} v_i(E_{ij})$}
        \State $P\gets P\setminus\{e_k\}$.
    \EndIf
\EndFor
\State \Return $P$.
\end{algorithmic}
\end{subroutine}

\begin{restatable}{lemma}{lemPartition}\label{lem:partition}
For every adjacent pair $\{i,j\}$, \textsc{Min-Majority-Partition} (Subroutine~\ref{sub:min_majority}) returns a
inclusion-minimal $i$-majority set $P_{ij}$. Moreover, exactly one of the
following two cases holds:
\begin{enumerate}[leftmargin=2.4em]
    \item[(C1)]
    $v_i(P_{ij})\ge \frac{1}{2} v_i(E_{ij})$ and
    $v_j(P_{ij})<\frac{1}{2} v_j(E_{ij})$.
    \item[(C2)]
    $v_i(P_{ij})\ge \frac{1}{2} v_i(E_{ij})$ and
    $v_j(P_{ij})\ge \frac{1}{2} v_j(E_{ij})$.
\end{enumerate}
\end{restatable}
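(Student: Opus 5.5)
The proof has two parts: correctness of Subroutine~\ref{sub:min_majority}, and the dichotomy (C1)/(C2). For correctness, write $h=\frac{1}{2}v_i(E_{ij})$ and maintain the invariant that $P$ is an $i$-majority set. It holds initially, since $v_i(E_{ij})\ge h$ because valuations of goods are nonnegative. It is preserved because an edge is removed only when the remaining set still has value at least $h$. So the returned $P_{ij}$ satisfies $v_i(P_{ij})\ge h$.

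For inclusion-minimality, fix $e_k\in P_{ij}$. Since $e_k$ survived step $k$, the set $P^{(k)}$ current at that step satisfied $v_i(P^{(k)}\setminus\{e_k\})<h$. The sets only shrink afterwards, so $P_{ij}\subseteq P^{(k)}$. By nonnegativity and additivity,
\[
v_i(P_{ij}\setminus\{e_k\})\le v_i(P^{(k)}\setminus\{e_k\})<h .
\]
This monotonicity step is the only place the goods assumption is really used, and it is the one point that needs care. Without nonnegativity, later removals could increase the value and break minimality.

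For the dichotomy, both (C1) and (C2) contain the clause $v_i(P_{ij})\ge \frac12 v_i(E_{ij})$, which was just established. They then differ only in whether $v_j(P_{ij})<\frac12 v_j(E_{ij})$ or $v_j(P_{ij})\ge\frac12 v_j(E_{ij})$. These are complementary real inequalities, so exactly one of them holds, and hence exactly one of (C1) and (C2) holds. No obstacle arises here beyond the invariant argument above.
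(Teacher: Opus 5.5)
Your proposal is correct and follows essentially the same route as the paper's proof: the invariant that $P$ remains an $i$-majority set, minimality from the fact that $P$ only shrinks together with nonnegativity of goods values (your $P^{(k)}$ is the paper's $P^{(k-1)}$), and the dichotomy as a complementary case split on $v_j(P_{ij})$. You also state explicitly that the invariant holds at the start because $v_i(E_{ij})\ge 0$, a step the paper leaves implicit.
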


The proof of \Cref{lem:partition} is deferred to Appendix~\ref{sec:lem-partitionAppendix}.

\begin{subroutine}[t]
\caption{\textsc{Pairwise-Resolve}}
\label{sub:pairwise}
\begin{algorithmic}[1]
\Require A multigraph $G$, valuation profile $\mathbf v$, and a partial orientation $\pi$.
\Ensure A reduced graph $G'$, an updated partial orientation $\pi$, and stored sets $\mathcal P$.
\State $G'\gets G$ and $\mathcal P\gets\emptyset$.
\For{each (unordered) adjacent pair $\{i,j\}$ in $G$}
    \State $P_{ij}\gets\textsc{Min-Majority-Partition}(E_{ij},v_i)$.
    \If{$P_{ij}=\emptyset$} \Comment{This happens when $v_i(E_{ij})=0$}
        \State $\pi_i\gets \pi_i\cup P_{ij}$ and
        $\pi_j\gets \pi_j\cup(E_{ij}\setminus P_{ij})$.
        \State Remove $E_{ij}$ from $G'$.
    \ElsIf{(C1) holds}
        \State $\pi_i\gets \pi_i\cup P_{ij}$ and
        $\pi_j\gets \pi_j\cup(E_{ij}\setminus P_{ij})$.
        \State Remove $E_{ij}$ from $G'$.
    \ElsIf{(C2) holds and there exists $e\in P_{ij}$ such that
    $v_j(e)\le v_j(P_{ij})-\frac{1}{2} v_j(E_{ij})$}
        \State Choose such an edge $e$.
        \State $\pi_i\gets
        \pi_i\cup\big(E_{ij}\setminus(P_{ij}\setminus\{e\})\big)$.
        \State $\pi_j\gets \pi_j\cup(P_{ij}\setminus\{e\})$.
        \State Remove $E_{ij}$ from $G'$.
    \Else \, store $P_{ij}$ in $\mathcal P$.
    \EndIf
\EndFor
\State Remove isolated vertices from $G'$.
\State \Return $(G',\pi,\mathcal P)$.
\end{algorithmic}
\end{subroutine}

\begin{restatable}{lemma}{lemPairWise}\label{lem:pair-wise}
Every pair removed by \textsc{Pairwise-Resolve} (Subroutine~\ref{sub:pairwise}) is oriented so that both
endpoints receive at least half value for the edges between them.
\end{restatable}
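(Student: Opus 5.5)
The plan is to go through the three branches of \textsc{Pairwise-Resolve} that remove $E_{ij}$ and, in each, check two inequalities: $v_i(\pi_i\cap E_{ij})\ge \frac12 v_i(E_{ij})$ and $v_j(\pi_j\cap E_{ij})\ge \frac12 v_j(E_{ij})$. Throughout, write $V_i=v_i(E_{ij})$ and $V_j=v_j(E_{ij})$. The only facts needed are additivity, nonnegativity of goods, and \Cref{lem:partition}, which says $P_{ij}$ is an $i$-majority set and that (C1) or (C2) holds.

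\emph{Case $P_{ij}=\emptyset$.} The comment in the subroutine explains when this happens. Since $v_i(\emptyset)=0\ge\frac12 V_i$ forces $V_i=0$, agent $i$'s requirement $0\ge 0$ holds trivially. Agent $j$ receives all of $E_{ij}$, so $v_j(E_{ij})\ge\frac12 V_j$ by nonnegativity.

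\emph{Case (C1).} Agent $i$ receives $P_{ij}$, and $v_i(P_{ij})\ge\frac12 V_i$ by the $i$-majority property. Agent $j$ receives $E_{ij}\setminus P_{ij}$. By additivity and $v_j(P_{ij})<\frac12 V_j$,
\[
v_j(E_{ij}\setminus P_{ij})=V_j-v_j(P_{ij})>\tfrac12 V_j .
\]

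\emph{Third branch.} Here (C2) holds and we fix the chosen $e\in P_{ij}$ with $v_j(e)\le v_j(P_{ij})-\frac12 V_j$.

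For agent $j$, she receives $P_{ij}\setminus\{e\}$, and
\[
v_j(P_{ij}\setminus\{e\})=v_j(P_{ij})-v_j(e)\ge \tfrac12 V_j .
\]

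For agent $i$, she receives $E_{ij}\setminus(P_{ij}\setminus\{e\})$. If $P_{ij}\setminus\{e\}$ were still an $i$-majority set, this would contradict the inclusion-minimality of $P_{ij}$ guaranteed by \Cref{lem:partition}. Hence $v_i(P_{ij}\setminus\{e\})<\frac12 V_i$. By additivity,
\[
v_i\bigl(E_{ij}\setminus(P_{ij}\setminus\{e\})\bigr)=V_i-v_i(P_{ij}\setminus\{e\})>\tfrac12 V_i .
\]

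No other branch removes $E_{ij}$; the remaining pairs are stored in $\mathcal P$. This covers all cases. The only step needing care is this last one, where inclusion-minimality is exactly what guarantees agent $i$ her half after $e$ is handed back. I would also remark that these are local guarantees only: different pairs are processed independently and their edge sets are disjoint, so there is no interaction between pairs to worry about.
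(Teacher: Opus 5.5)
Your proof is correct and follows the paper's argument essentially verbatim. You use the same three removal branches: $P_{ij}=\emptyset$ forces $v_i(E_{ij})=0$, the majority property together with (C1) handles the second branch, and in the (C2) branch the inclusion-minimality of $P_{ij}$ gives agent $i$ strictly more than half once $P_{ij}\setminus\{e\}$ goes to $j$.
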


\begin{proof}
If $P_{ij}=\emptyset$, then $v_i(E_{ij})=0$, because $P_{ij}$ is an
$i$-majority set. Assigning $P_{ij}$ to $i$ and $E_{ij}\setminus P_{ij}$ to
$j$ gives agent $i$ value $\frac{1}{2}v_i(E_{ij})=0$ and gives agent $j$ her full
value for $E_{ij}$.
If (C1) holds, then assigning $P_{ij}$ to $i$ gives agent $i$ value at least
$\frac{1}{2}v_i(E_{ij})$. Since $v_j(P_{ij})<\frac{1}{2}v_j(E_{ij})$, assigning
$E_{ij}\setminus P_{ij}$ to $j$ gives agent $j$ value strictly greater than
$\frac{1}{2}v_j(E_{ij})$.
Suppose (C2) holds and there exists $e\in P_{ij}$ such that
$v_j(e)\le v_j(P_{ij})-\frac{1}{2}v_j(E_{ij})$.
Then $v_j(P_{ij}\setminus\{e\})\ge \frac{1}{2}v_j(E_{ij})$.
By the minimality of $P_{ij}$ for agent $i$, we have $v_i(P_{ij}\setminus\{e\})<\frac{1}{2}v_i(E_{ij})$.
Hence assigning $E_{ij}\setminus(P_{ij}\setminus\{e\})$ to $i$ gives agent
$i$ value strictly greater than $\frac{1}{2}v_i(E_{ij})$, while assigning
$P_{ij}\setminus\{e\}$ to $j$ gives agent $j$ value at least
$\frac{1}{2}v_j(E_{ij})$.
\end{proof}

\begin{subroutine}[t]
\caption{\textsc{Build-Virtual-Graph}}
\label{sub:virtual_graph}
\begin{algorithmic}[1]
\Require Reduced graph $G'$, valuation profile $\mathbf v$, and stored sets $\mathcal P$.
\Ensure A simple virtual graph $\mathcal G=(\mathcal N,\mathcal E)$ with endpoint weights.
\State Create one virtual vertex $u_i$ for every agent $i$ in $G'$.
\State Initialize $\mathcal E\gets\emptyset$.
\For{each adjacent pair $\{i,j\}$ in $G'$}
    \State Add the virtual edge $\{u_i,u_j\}$ to $\mathcal E$.
    \State $w_i(\{u_i,u_j\})\gets
    2\left(v_i(P_{ij})-\frac{1}{2} v_i(E_{ij})\right)$.
    \State $w_j(\{u_i,u_j\})\gets
    2\left(v_j(P_{ij})-\frac{1}{2} v_j(E_{ij})\right)$.
\EndFor
\State \Return $\mathcal G$.
\end{algorithmic}
\end{subroutine}

\begin{subroutine}[t]
\caption{\textsc{Greedy-Walk}}
\label{sub:greedy_walk}
\begin{algorithmic}[1]
\Require A simple virtual graph $\mathcal G=(\mathcal N,\mathcal E)$ with nonnegative endpoint weights.
\Ensure An orientation $\Pi$ of $\mathcal E$.
\State $\Pi\gets\emptyset$ and $\mathcal E_{\mathrm{rem}}\gets\mathcal E$.
\If{$\mathcal E_{\mathrm{rem}}=\emptyset$} \State \Return $\Pi$. \EndIf
\State Let $u$ be an arbitrary vertex incident to an edge in $\mathcal E_{\mathrm{rem}}$.
\While{$\mathcal E_{\mathrm{rem}}\neq\emptyset$}
    \If{$u$ has no incident edge in $\mathcal E_{\mathrm{rem}}$}\Comment{Handle isolated vertex}
        \State Let $u$ be an arbitrary vertex incident to an edge in $\mathcal E_{\mathrm{rem}}$.
    \EndIf
    \State Let $\{u,v\}\in
    \arg\max_{\{u,z\}\in\mathcal E_{\mathrm{rem}}}w_u(\{u,z\})$.
    \State Orient $\{u,v\}$ toward $u$.
    \State Remove $\{u,v\}$ from $\mathcal E_{\mathrm{rem}}$.
    \State $u\gets v$. \Comment{Transfer the turn to the neighbor}
\EndWhile
\State \Return $\Pi$.
\end{algorithmic}
\end{subroutine}

For an orientation $\Pi$ of $\cG$, let $W_i(\Pi)$ be the set of virtual edges
incident to $u_i$ that are oriented toward $u_i$, and let $L_i(\Pi)$ be the
set of virtual edges incident to $u_i$ that are oriented away from $u_i$.
For simplicity, we call each edge in $W_i(\Pi)$ a \emph{won} edge and call each edge in $L_i(\Pi)$ a \emph{lost} edge.

\begin{restatable}{lemma}{lemGreedyWalk} \label{lem:greedy-walk}
For every virtual vertex $u_i$, $\Pi$ returned by \textsc{Greedy-Walk} (Subroutine~\ref{sub:greedy_walk})
satisfies
\[
    \sum_{e\in L_i(\Pi)} w_i(e)
    -
    \sum_{e\in W_i(\Pi)} w_i(e)
    \le
    \max_{e\in L_i(\Pi)} w_i(e),
\]
where the maximum is defined as $0$ if $L_i(\Pi)=\emptyset$.
\end{restatable}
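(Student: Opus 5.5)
The plan is to track the walk from the point of view of a fixed virtual vertex $u_i$ and pair each lost edge with a won edge of at least the same weight. The only exception is the very first edge of $u_i$ that gets oriented away from it.

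The first step is to record how an edge incident to $u_i$ can become oriented. Either the walk is at $u_i$ and $u_i$ picks it (a \emph{won} edge), or the walk is at some neighbor $u_j$ and $u_j$ picks $\{u_i,u_j\}$, which makes it a \emph{lost} edge for $u_i$. In the second case the turn is immediately transferred to $u_i$.

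So every lost edge of $u_i$ is followed by an arrival of the walk at $u_i$. At that moment, if $u_i$ still has a remaining incident edge, it wins one, namely the one maximizing $w_i$ over its remaining incident edges. If $u_i$ has no remaining edge, the lost edge was the last edge of $u_i$ to be oriented. (The "isolated vertex" jump cannot occur at $u_i$ while $u_i$ still has remaining edges, since the walk only jumps away when the current vertex is isolated in $\mathcal E_{\mathrm{rem}}$.)

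Next I order the lost edges of $u_i$ chronologically as $\ell_1,\ell_2,\dots,\ell_k$, and let $a_t$ be the edge $u_i$ wins right after losing $\ell_t$, whenever it exists. The pairing I want is
\[
  \ell_{t+1} \;\mapsto\; a_t \qquad (t=1,\dots,k-1).
\]
At the time $a_t$ is chosen, $\ell_{t+1}$ is still in $\mathcal E_{\mathrm{rem}}$ and is incident to $u_i$. By the greedy choice, $w_i(a_t)\ge w_i(\ell_{t+1})$. Moreover $a_t$ exists for every $t\le k-1$, because $\ell_{t+1}$ itself is a remaining incident edge at that moment. The $a_t$ are distinct won edges, so
\[
  \sum_{t=2}^{k} w_i(\ell_t)\;\le\;\sum_{t=1}^{k-1} w_i(a_t)\;\le\;\sum_{e\in W_i(\Pi)} w_i(e).
\]
The second inequality uses the hypothesis that the weights are nonnegative. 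Rearranging gives
\[
  \sum_{e\in L_i(\Pi)} w_i(e)-\sum_{e\in W_i(\Pi)} w_i(e)\;\le\; w_i(\ell_1)\;\le\;\max_{e\in L_i(\Pi)} w_i(e),
\]
and the case $L_i(\Pi)=\emptyset$ is immediate, since the left side is then at most $0$.

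The main obstacle is making the timing claims rigorous. Specifically, I need that $a_t$ is chosen before $\ell_{t+1}$ is oriented, and that it is chosen from a set still containing $\ell_{t+1}$. Both follow from the fact that after $\ell_t$ the very next step of the walk is taken by $u_i$. The inequality must be applied at that step, before any other edge is removed.

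Nonnegativity of the endpoint weights is assumed in the subroutine's input specification, so I will use it as given. It is needed both for discarding the unpaired won edges and for bounding by the maximum.
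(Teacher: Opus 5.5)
Your proposal is correct and takes essentially the same route as the paper. Both arguments rest on two facts: losing an edge hands the turn to $u_i$, so $u_i$ never loses twice in a row, and each non-first lost edge can be charged to a distinct won edge that was chosen greedily while that lost edge was still available. The only difference is cosmetic: you charge $\ell_{t+1}$ to the win immediately after $\ell_t$, while the paper charges it to the win immediately before $\ell_{t+1}$, and both choices are valid for the same reason.
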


We analyze the subsequence of edges incident to $u_i$ processed during \textsc{Greedy-Walk} (Subroutine~\ref{sub:greedy_walk}). Because the walk traverses an edge immediately after orienting it, $u_i$ cannot lose two consecutive incident edges. Hence, every lost edge, except possibly the first one, is immediately preceded by a won edge. Since the algorithm greedily selects the heaviest available incident edge, each such won edge has a weight at least that of the subsequent lost edge. The detailed proof is deferred to Appendix~\ref{sec:lem-greedy-walkAppendix}.

\begin{restatable}{lemma}{lemReduced}\label{lem:reduced}
Let $\pi'$ be the orientation of the reduced graph $G'$ obtained by mapping
back the virtual orientation returned by \textsc{Greedy-Walk} (Subroutine~\ref{sub:greedy_walk}). For every agent
$i$, define $\Delta_i'=\frac{1}{2}v_i(E_i')-v_i(\pi_i')$. Then, either
\[
    \Delta_i'\le 0, \quad \text{or}  \quad \Delta_i'
    \le
    \max_{j\in N^{\pi'}_+(i)}
    \min_{e\in E_{ij}\cap \pi'_j}v_i(e).
\]
\end{restatable}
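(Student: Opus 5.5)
The plan is to decompose $\Delta_i'$ pair by pair over the unresolved pairs incident to $i$ in $G'$, rewrite it in terms of the virtual weights, and then invoke \Cref{lem:greedy-walk}. In every unresolved pair, case (C2) of \Cref{lem:partition} holds and the pair fails the test in \textsc{Pairwise-Resolve}. Hence both weights $w_i,w_j$ of the virtual edge are nonnegative, so \textsc{Greedy-Walk} receives valid input.

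For each unresolved pair $E_{ij}$ incident to $i$, set $\delta_{ij}=\frac12 v_i(E_{ij})-v_i(\pi'_i\cap E_{ij})$. Then $\Delta_i'=\sum_j \delta_{ij}$. If the virtual edge is won by $u_i$, agent $i$ receives the majority set of that pair. (When $i$ is the fixed endpoint this is $P_{ij}$; when $i$ plays the role of $j$, it receives $P_{ij}$, which is a $j$-majority set by (C2).) In both cases $\delta_{ij}=-\frac12 w_i(e)$. If the edge is lost, $i$ receives the complement, and $\delta_{ij}=+\frac12 w_i(e)$. Summing gives
\[
\Delta_i'=\tfrac12\Big(\sum_{e\in L_i(\Pi)}w_i(e)-\sum_{e\in W_i(\Pi)}w_i(e)\Big)\le \tfrac12\max_{e\in L_i(\Pi)}w_i(e)
\]
by \Cref{lem:greedy-walk}. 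If $L_i(\Pi)=\emptyset$, then $\Delta_i'\le 0$ and we are done.

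Otherwise, let $e^*=\{u_i,u_j\}\in L_i(\Pi)$ attain the maximum. Then $j\in N^{\pi'}_+(i)$, provided the set $S$ that $j$ gets from $E_{ij}$ is nonempty. It remains to show $\frac12 w_i(e^*)\le \min_{e\in S}v_i(e)$, and this is the main step; it splits into two cases.

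\emph{Case 1: $i$ is the fixed endpoint.} Then $S=P_{ij}$. Minimality of $P_{ij}$ gives $v_i(P_{ij})-v_i(e)<\frac12 v_i(E_{ij})$ for each $e\in P_{ij}$, so $\frac12 w_i = v_i(P_{ij})-\frac12 v_i(E_{ij})< v_i(e)$. Nonemptiness holds because $P_{ij}=\emptyset$ was resolved earlier.

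\emph{Case 2: $i$ is the non-fixed endpoint} (call the other one $k$, the fixed endpoint). Then $S=P_{ki}$, and we need $v_i(e)\ge v_i(P_{ki})-\frac12 v_i(E_{ki})$ for all $e\in P_{ki}$. This is exactly the negation of the third condition of \textsc{Pairwise-Resolve}, which failed for this unresolved pair. Hence $v_i(e)>\frac12 w_i$ strictly, and $P_{ki}\neq\emptyset$.

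Combining, $\Delta_i'\le \frac12 w_i(e^*)\le \min_{e\in E_{ij}\cap\pi'_j}v_i(e)\le \max_{j\in N^{\pi'}_+(i)}\min_{e\in E_{ij}\cap\pi'_j} v_i(e)$. The only delicate bookkeeping is tracking which endpoint plays the fixed role, since the minimality argument and the failed-test argument are used in the two cases respectively.
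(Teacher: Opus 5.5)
Your proof is correct and follows essentially the same route as the paper. You decompose $\Delta_i'$ over unresolved pairs into $\pm\frac{1}{2}w_i$ terms, apply Lemma~\ref{lem:greedy-walk}, and bound the heaviest lost weight by $\min_{e\in P_{ij}}v_i(e)$, using minimality of $P_{ij}$ for the fixed endpoint and the failed third test of \textsc{Pairwise-Resolve} for the other endpoint. Your explicit case split on which endpoint is fixed is a more careful spelling-out of the paper's symmetric bound $0\le w_\ell\le 2\min_{e\in P_{ij}}v_\ell(e)$ for $\ell\in\{i,j\}$.
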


\begin{proof}
Consider an unresolved pair $\{i,j\}$ with stored set $P_{ij}$. Since this
pair was not removed during \textsc{Pairwise-Resolve} (Subroutine~\ref{sub:pairwise}), condition (C2) holds and no
edge $e\in P_{ij}$ satisfies
$
    v_j(e)\le v_j(P_{ij})-\frac{1}{2}v_j(E_{ij})
$.
Hence, for every $e\in P_{ij}$, it holds that
$
    v_j(P_{ij})-\frac{1}{2}v_j(E_{ij})<v_j(e).
$
On the other hand, by minimality of $P_{ij}$ for agent $i$, for every
$e\in P_{ij}$,
$
    v_i(P_{ij})-\frac{1}{2}v_i(E_{ij})<v_i(e).
$
Together with (C2), this implies that for any $\ell\in\{i,j\}$,
\[
    0\le
    2\left(v_\ell(P_{ij})-\frac{1}{2}v_\ell(E_{ij})\right)
    \le
    2\min_{e\in P_{ij}}v_\ell(e).
\]
Thus, the virtual endpoint weights are nonnegative and satisfy
$
    w_\ell(\{u_i,u_j\})
    \le
    2\min_{e\in P_{ij}}v_\ell(e)
$.
Fix an agent $i$. If a virtual edge $e=\{u_i,u_j\}$ is oriented to $u_i$, then
agent $i$ receives $P_{ij}$ on the corresponding pair, contributing
$
    \frac{1}{2}v_i(E_{ij})-v_i(P_{ij})
    =
    -\frac{1}{2} w_i(e)
$
to $\Delta_i'$. If the virtual edge is oriented to $u_j$, then agent $i$ receives
$E_{ij}\setminus P_{ij}$, contributing
$
    \frac{1}{2}v_i(E_{ij})-v_i(E_{ij}\setminus P_{ij})
    =
    \frac{1}{2} w_i(e)
$
to $\Delta_i'$. By Lemma~\ref{lem:greedy-walk},
\[
\begin{aligned}
    \Delta_i'
    =
    \frac{1}{2}\sum_{e\in L_i(\Pi)}w_i(e)
    -
    \frac{1}{2}\sum_{e\in W_i(\Pi)}w_i(e)
    \le
    \frac{1}{2}\max_{e\in L_i(\Pi)}w_i(e),
\end{aligned}
\]

If $L_i(\Pi)=\emptyset$, then $\Delta_i'\le 0$. Otherwise, let
$e^*=\{u_i,u_j\}\in L_i(\Pi)$ maximize $w_i(e)$ over $L_i(\Pi)$. Since
$i$ loses this virtual edge, the algorithm assigns $P_{ij}$ to $j$. Hence
$
    E_{ij}\cap \pi'_j=P_{ij}.
$
Using the bound on virtual weights, we have
\[
    \Delta_i'
    \le
    \frac{1}{2} w_i(e^*)
    \le
    \min_{e\in P_{ij}}v_i(e)
    =
    \min_{e\in E_{ij}\cap \pi'_j}v_i(e).
\]
Taking the maximum over all neighbors that receive edges from $i$ gives the
claimed inequality.
\end{proof}

\begin{proof}[Proof of Theorem~\ref{thm:propmExists}]
Run Algorithm~\ref{alg:main} and let $\pi$ be the returned orientation. Fix an agent $i$. Let $E_i'$ be the set of edges
incident to $i$ in the reduced graph $G'$, and let $\pi_i'$ be the part of
$i$'s final bundle that comes from $E_i'$.
By Lemma~\ref{lem:pair-wise}, for every pair $\{i,j\}$ with $E_{i,j}$ removed in \textsc{Pairwise-Resolve}, the edges in $E_{i,j}$ are allocated so that both endpoints receive at least half of their value for $E_{i,j}$.
Therefore, the removed edges can only decrease
$\frac{1}{2}v_i(E_i)-v_i(\pi_i)$ relative to the same quantity restricted to the
reduced graph. Hence,
$
    \frac{1}{2}v_i(E_i)-v_i(\pi_i)
    \le
    \frac{1}{2}v_i(E_i')-v_i(\pi_i').
$

Let $\Delta_i'=\frac{1}{2}v_i(E_i')-v_i(\pi_i')$. If $\Delta_i'\le 0$, then $v_i(\pi_i)\ge \frac{1}{2}v_i(E_i)$, so agent $i$ receives her proportional share and satisfies PROPm.
Now suppose $\Delta_i'>0$. By Lemma~\ref{lem:reduced},
$
    \Delta_i'
    \le
    \max_{j\in N^{\pi'}_+(i)}
    \min_{e\in E_{ij}\cap \pi'_j}v_i(e)
$.
For every unresolved pair $\{i,j\}$, the allocation of $E_{ij}$ under $\pi$ is the same as that under $\pi'$. Hence $N^{\pi'}_+(i)\subseteq N^{\pi}_+(i)$, and for every
$j\in N^{\pi'}_+(i)$, we have
$E_{ij}\cap \pi'_j=E_{ij}\cap \pi_j$. Therefore, $\Delta_i'$ is at most the
PROPm correction of agent $i$ under the final orientation $\pi$.
Combining this with
$\frac{1}{2}v_i(E_i)-v_i(\pi_i)\le \Delta_i'$, we obtain
\[
    v_i(\pi_i)
    +
    \max_{j\in N^{\pi}_+(i)}
    \min_{e\in E_{ij}\cap \pi_j}v_i(e)
    \ge
    \frac{1}{2}v_i(E_i)
    =
    \mathsf{PROP}_i.
\]
Thus, agent $i$ satisfies PROPm. The returned orientation is PROPm for any $i$.

Algorithm~\ref{alg:main} always terminates in
$O(m\log m)$ time. For each adjacent pair $\{i,j\}$, \textsc{Min-Majority-Partition} (Subroutine~\ref{sub:min_majority}) runs in
$O(|E_{ij}|)$ time, and the checks in \textsc{Pairwise-Resolve} (Subroutine~\ref{sub:pairwise}) require $O(|E_{ij}|)$ time. Summing over all adjacent pairs yields a pairwise-resolution phase that takes $O(m)$ time.
The virtual graph has at most $m$ edges. Building it and converting the virtual orientation back to the original multigraph both take $O(m)$ time.
For \textsc{Greedy-Walk} (Subroutine~\ref{sub:greedy_walk}), sort the incident virtual edges of every virtual vertex in decreasing order of the corresponding endpoint weight. The total sorting time is $O(m\log m)$. After sorting, each virtual edge is processed and skipped only a constant number of times, so the greedy walk takes $O(m)$ time. The total running time is therefore
$O(m\log m)$.
\end{proof}

\section{The Interaction between PROPm and Efficiency}
In this section, we study the interaction between fairness and efficiency. We first quantify the social welfare loss under the fairness constraint by establishing the tight bound for the price of fairness.
We then present the computational complexity of computing the PROPm orientation with maximum social welfare.
Finally, we study the compatibility between PROPm and PO.

\subsection{The Price of PROPm} \label{sec:priceOfPROPm}

To consider the price of PROPm, we first notice that the PROPm orientation returned by Algorithm~\ref{alg:main} may not have a bounded social welfare guarantee. 
In this section, we design a more involved algorithm that ensures social welfare at least $\frac{1}{2} \sum_{e=\{i,j\}\in E}\big(v_i(e)+v_j(e)\big)$.


\begin{restatable}{theorem}{thmPropmPrice} \label{thm:propmPrice}
    The price of PROPm orientations for goods is 2, which is tight. Moreover, we can compute a PROPm orientation achieving at least half of the optimal welfare in polynomial time.
\end{restatable}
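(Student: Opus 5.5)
We prove the upper bound by averaging a PROPm orientation against its reversal on the contested pairs, and the lower bound by a single multi-edge instance.

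\textbf{Averaging.} We will construct a PROPm orientation $\pi$ and a second orientation $\bar\pi$, obtained by reversing every \emph{contested} pair, which is also PROPm, such that
\[
\SW(\pi)+\SW(\bar\pi)\ \ge\ \sum_{e=\{i,j\}\in E}\big(v_i(e)+v_j(e)\big)\ \ge\ \max_{\pi^\ast}\SW(\pi^\ast).
\]
One of the two then attains at least half of the optimum welfare.

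\textbf{Uncontested pairs.} We reuse \textsc{Pairwise-Resolve}. Every pair $E_{ij}$ it removes is oriented identically in $\pi$ and $\bar\pi$. By Lemma~\ref{lem:pair-wise}, each endpoint receives at least half of its own value for $E_{ij}$. So such a pair contributes at least $\frac12\big(v_i(E_{ij})+v_j(E_{ij})\big)$ to each of $\SW(\pi)$ and $\SW(\bar\pi)$, hence at least its full total to the sum. By the same argument as in the proof of Theorem~\ref{thm:propmExists}, these pairs can only decrease each agent's deficit $\frac12 v_i(E_i)-v_i(\pi_i)$.

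\textbf{Contested pairs.} These are the unresolved pairs, i.e.\ the virtual edges of \textsc{Build-Virtual-Graph}. In $\pi$ a virtual edge gives $P_{ij}$ to one endpoint, and in $\bar\pi$ it gives $P_{ij}$ to the other. Thus on each contested pair, $\pi$ and $\bar\pi$ together give every edge once to each endpoint, so these pairs contribute exactly $v_i(E_{ij})+v_j(E_{ij})$ to the sum. It remains to orient the virtual graph so that both the orientation and its reversal satisfy the conclusion of Lemma~\ref{lem:reduced}. The proof of Lemma~\ref{lem:reduced} used only two facts:
\begin{itemize}
\item the bound $w_\ell(\{u_i,u_j\})\le 2\min_{e\in P_{ij}}v_\ell(e)$, which does not depend on the orientation;
\item the inequality of Lemma~\ref{lem:greedy-walk}.
\end{itemize}
So it suffices to find a virtual orientation $\Pi$ such that both $\Pi$ and its reverse satisfy the inequality of Lemma~\ref{lem:greedy-walk}.

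\textbf{Alternating orientation (the key step).} At each virtual vertex $u_i$:
\begin{enumerate}
\item Sort the incident virtual edges by $w_i$ in nonincreasing order, $w^{(1)}\ge w^{(2)}\ge\cdots$.
\item Pair them as $(1,2),(3,4),\dots$, leaving at most one edge unpaired.
\end{enumerate}
These local pairings link virtual edges into a structure of maximum degree $2$, which decomposes into trails (open or closed) in the virtual graph. Consecutive edges of a trail meet at a vertex where they are paired.

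Orient every trail consistently along its direction of traversal. Then at every vertex each pair consists of one won edge and one lost edge, and reversing $\Pi$ preserves this. This also works for closed trails of odd length, since the alternation is per vertex, not per trail.

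The counting argument is as follows. The signed quantity $\sum_{L}w_i-\sum_{W}w_i$ equals $\sum_k \pm\big(w^{(2k-1)}-w^{(2k)}\big)$, plus possibly $\pm w^{(\text{last})}$.
\begin{itemize}
\item Each term $w^{(2k-1)}-w^{(2k)}$ is nonnegative, so the sum is at most the telescoping-type bound $w^{(1)}-w^{(2)}+w^{(3)}-\cdots\le w^{(1)}$.
\item More precisely, let $t$ be the first pair containing a lost edge. All terms from pair $t$ onward sum to at most $w^{(2t-1)}\le$ the weight of the lost edge in pair $t$, up to the adjacent index.
\end{itemize}
This yields $\sum_L w_i-\sum_W w_i\le\max_{L}w_i$, and symmetrically for the reversal.

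This step is the main obstacle. Two points must be checked carefully:
\begin{itemize}
\item that the pairing decomposition really yields trails whose consistent orientation splits every pair into one in-edge and one out-edge, including at the vertex where a closed trail begins;
\item the case analysis for the unpaired last edge.
\end{itemize}
Everything runs in polynomial time (sorting plus a linear traversal).

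\textbf{Tightness.} Take two agents joined by $k$ parallel edges, with $v_1(e)=1$ and $v_2(e)=\varepsilon$ for every edge. If agent~$2$ receives $t$ edges, PROPm for agent~$2$ requires $t\varepsilon+\varepsilon\ge k\varepsilon/2$, i.e.\ $t\ge k/2-1$. Hence every PROPm orientation has welfare at most $k-t+t\varepsilon\le k/2+1+k\varepsilon$, while the optimum welfare is $k$. Letting $\varepsilon\to0$ and then $k\to\infty$ drives the ratio to $2$. Together with the upper bound, the price of PROPm is exactly $2$.
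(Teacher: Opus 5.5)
Your proof is essentially the paper's. The trails you build from the sorted pairings are exactly the path and cycle components of the auxiliary graph $H$ in \textsc{Orient-Virtual-Graph}, read back in $\mathcal G$. The paper's conversion rule also orients each trail consistently, just in the opposite traversal direction, which is immaterial. Your observation that $\pi$ and $\bar\pi$ together give every contested edge once to each endpoint is the paper's choice of the better of $\Pi$ and $\widehat{\Pi}$, phrased in original values rather than virtual weights. Your tightness instance (two agents joined by $k$ parallel edges with values $1$ and $\varepsilon$) is correct and a little simpler than the paper's star with $2k$ leaves. The star has the advantage of showing that the bound is already tight on simple graphs.

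One step is not correct as written. The telescoping bound $\sum_{e\in L_i}w_i(e)-\sum_{e\in W_i}w_i(e)\le w^{(1)}$ is not the required inequality when $e^{(1)}$ is won. Your refinement does not repair it: every pair is split into one won and one lost edge, so your index $t$ is always $1$. In that case $w^{(1)}$ can exceed the weight of the lost edge of the first pair.

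The fix is one line. For $d\ge 2$, let $\ell_1$ and $g_1$ be the lost and won edges of the first pair. That pair contributes $w_i(\ell_1)-w_i(g_1)$. All later pairs, together with a possible unpaired last edge, contribute at most $w^{(3)}$ (or $0$ if $d=2$). This is at most $w^{(2)}\le w_i(g_1)$ by sortedness and nonnegativity of the weights. Hence the total is at most $w_i(\ell_1)\le\max_{e\in L_i}w_i(e)$, and the same holds for the reversal. This is precisely the paper's charging of each lost edge to the won edge of the preceding pair.
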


We begin with a constructive proof of the upper bound. 

\begin{restatable}{proposition}{propPriceUpperBound}\label{prop:priceUpperBound}
For allocations of goods, there exists a PROPm orientation achieving at least half of the optimal welfare. Moreover, such a PROPm orientation can be computed in polynomial time.
\end{restatable}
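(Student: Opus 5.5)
The plan follows the strategy announced in the introduction. I will construct a PROPm orientation $\pi$ together with a designated set $C\subseteq E$ of \emph{contested} edges, chosen so that the orientation $\bar\pi$ obtained from $\pi$ by reversing every edge of $C$ is also PROPm. All edges outside $C$ will be oriented identically in $\pi$ and $\bar\pi$, and each of them will go to an endpoint that values it at least as much as the other endpoint. Then
\[
\SW(\pi)+\SW(\bar\pi)\ \ge\ \sum_{e=\{i,j\}\in C}\big(v_i(e)+v_j(e)\big)+2\sum_{e\notin C}\max\{v_i(e),v_j(e)\}\ \ge\ \sum_{e=\{i,j\}\in E}\big(v_i(e)+v_j(e)\big).
\]
The right-hand side is at least the optimum welfare $\sum_e\max\{v_i(e),v_j(e)\}$, so one of $\pi,\bar\pi$ achieves at least half of it.

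The construction reuses the structure of Algorithm~\ref{alg:main} pair by pair. For each adjacent pair $\{i,j\}$, I first try to split $E_{ij}$ so that both endpoints receive at least half of their value on $E_{ij}$. The aim is to do this in a way that is invariant under swapping the two bundles: the partition $E_{ij}=A\cup B$ should satisfy $v_i(A),v_i(B),v_j(A),v_j(B)$ all at least half of the respective totals. Such a pair can then be placed in $C$, because both orientations give each endpoint its local half-share. This generally fails, however, so a more careful scheme is needed. Edges on which the endpoints disagree strongly can be given to the higher-valuing endpoint, provided doing so does not hurt that pair's local guarantee. The remaining edges, whose values are ``balanced'', form the contested part.

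For pairs that cannot be resolved symmetrically, I will fall back on the virtual-graph approach: the unresolved pairs become virtual edges with weights $w_\ell\le 2\min_{e\in P_{ij}}v_\ell(e)$, exactly as in Lemma~\ref{lem:reduced}. The key observation is that reversing all virtual edges swaps the won set $W_i$ and the lost set $L_i$. So I want an orientation $\Pi$ of the virtual graph in which, for every vertex, both
\[
\sum_{L_i}w_i-\sum_{W_i}w_i\le\max_{L_i}w_i \quad\text{and}\quad \sum_{W_i}w_i-\sum_{L_i}w_i\le\max_{W_i}w_i
\]
hold. In other words, each vertex's won and lost weights must be balanced up to one edge in both directions. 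I would try to obtain this by pairing consecutive incident edges along an Euler-type walk: add a dummy vertex joining the odd-degree vertices and take an Eulerian circuit, so that each vertex alternates between winning and losing. Sorting each vertex's incident edges by weight and pairing adjacent edges in that order, in the style of discrepancy or Shapley--Scarf pairing arguments, should then give the two-sided bound.

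Mapping back, each agent lies in one of two cases in either orientation: it meets its half-share on the reduced graph, or its deficit is at most $\frac12\max_{L_i}w_i\le\min_{e\in E_{ij}\cap\pi_j}v_i(e)$ as in Lemma~\ref{lem:reduced}. Hence both $\pi$ and $\bar\pi$ are PROPm. The main obstacle is the simultaneous requirement: the orientation must be two-sided balanced at every vertex, and it must also have the welfare property for the non-contested edges. The greedy walk only gives a one-sided bound, and edges given to the higher-valuing endpoint must not break either local guarantee. The first thing I would try is to weight-sort the edges at each vertex, pair consecutive ones into ``transitions'' through the vertex, decompose the resulting pairing graph into paths and cycles, and orient each of these alternately. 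Polynomial time then follows since every step is a sort plus a linear-time decomposition.
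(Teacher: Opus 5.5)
The virtual-graph half of your plan is the paper's \textsc{Orient-Virtual-Graph}. Sorting the incident virtual edges at each $u_i$ by $w_i$, pairing $(e_1,e_2),(e_3,e_4),\dots$, decomposing the pairing graph (maximum degree $2$) into paths and cycles, and orienting each consistently gives exactly one won and one lost edge in every pair. A path end that is unpaired at one endpoint must be oriented toward that endpoint. The one-won-one-lost property survives reversing all virtual edges. Charging the lost edge of pair $r$ to the won edge of pair $r-1$, and an unpaired last edge to the won edge of the last pair, then yields your two-sided bound. You only assert that this ``should'' work, and the Eulerian-circuit idea is a detour, since an arbitrary circuit does not follow the sorted pairing. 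The idea itself is right.

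The genuine gap is in the pairs that \textsc{Pairwise-Resolve} removes. Your averaging inequality needs every edge outside $C$ to sit at an endpoint valuing it at least as much as the other. Your mapping-back argument via Lemma~\ref{lem:reduced} also presupposes that every contested non-virtual pair keeps the local half-share in both orientations. But the bounds $0\le w_\ell\le 2\min_{e\in P_{ij}}v_\ell(e)$ hold only for pairs in case (C2) with no removable edge. So (C1) pairs, and (C2) pairs with a removable edge, can in general go neither into the virtual graph nor into $C$.

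For example, take $E_{ij}=\{a,b\}$ with $v_i(a)=2$, $v_i(b)=0$, $v_j(a)=3$, $v_j(b)=10$. This is case (C1), and no symmetric split exists. The only split giving both endpoints their local half-share is $a\to i$, $b\to j$, so $a$ is fixed at its strictly lower-valuing endpoint. If you follow your rule, giving $b$ to $j$ and contesting $a$, then in one of $\pi,\bar\pi$ agent $j$ holds $\{a,b\}$. There $i$ has a local deficit of $1$, and its correction from $j$ is $\min\{v_i(a),v_i(b)\}=0$. The virtual-graph bound does not cover this deficit.

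The missing observation is that none of this is needed: keep the resolved pairs fixed in both orientations. Each endpoint gets at least half its value on $E_{ij}$, so $v_i(\pi_i\cap E_{ij})+v_j(\pi_j\cap E_{ij})\ge\frac12\bigl(v_i(E_{ij})+v_j(E_{ij})\bigr)$. That is all the final comparison with $\sum_{e=\{i,j\}\in E}(v_i(e)+v_j(e))$ requires. Reversal and averaging are then needed only on the unresolved edges $U$. There, the welfare collected on $U$ under $\Pi$ and under $\widehat{\Pi}$ sums to exactly $\sum_{e=\{i,j\}\in U}(v_i(e)+v_j(e))$. So the one with larger total won virtual weight gets at least half on $U$, which is how the paper concludes.
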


To prove Proposition~\ref{prop:priceUpperBound}, we use the same algorithm as
in the proof of Theorem~\ref{thm:propmExists}, except that we replace
\textsc{Greedy-Walk} (Subroutine~\ref{sub:greedy_walk}) by \textsc{Orient-Virtual-Graph} (Subroutine~\ref{sub:orient_virtual}).
The orientation of $\cG$ computed by \textsc{Orient-Virtual-Graph} has two properties. 
First, it (and its reversal achieved by reversing every edge) satisfies the same
local deficit bound needed for PROPm. Second, by comparing it with its reversal,
we can choose one of the two orientations whose total virtual weight is at
least half of the total virtual endpoint weight.
In the auxiliary graph $H$ constructed in \textsc{Orient-Virtual-Graph} (Subroutine~\ref{sub:orient_virtual}), a vertex $x_e$ represents an edge $e$ of the
virtual graph $\mathcal G$.
An edge of $H$ labeled by $i$ records that the two corresponding virtual edges were paired in the sorted list of edges incident to $u_i$. 
The properties of $H$ are stated in \Cref{prop:auxiliary-graph}, with proof deferred in Appendix~\ref{sec:auxiliary-graphAppendix}.

\begin{subroutine}[t]
\caption{\textsc{Orient-Virtual-Graph}}
\label{sub:orient_virtual}
\begin{algorithmic}[1]
\Require A simple virtual graph $\mathcal G=(\mathcal N,\mathcal E)$ with endpoint weights
\Ensure An orientation $\Gamma$ of $\mathcal E$
\State Create an auxiliary graph $H$ as follows.
\State For every edge $e\in\mathcal E$, create one vertex $x_e$ in $H$.
\For{each virtual vertex $u_i\in\mathcal N$}
    \State Let $e_1,\ldots,e_d$ be the edges incident to $u_i$.
    \State Sort $e_1,\ldots,e_d$ so that $w_i(e_1) \geq w_i(e_2) \geq \cdots \geq w_i(e_d)$.
    \For{$r=1$ to $\lfloor\frac{1}{2}d\rfloor$}
        \State Add an edge between $x_{e_{2r-1}}$ and $x_{e_{2r}}$ in $H$,
        labeled by $i$.
    \EndFor
\EndFor
\State In each path component of $H$, choose one endpoint and orient all edges away from it.
\State In each cycle component of $H$, orient all edges in any cyclic  direction.
\State Convert this orientation of $H$ into an orientation $\Pi$ of $\mathcal G$ as follows.
\For{each edge $e=\{u_i,u_j\}\in\mathcal E$}\label{step::alg::sub5::convert-start}
    \If{$x_e$ has an incoming edge in $H$ labeled by $i$}
        \State Orient $e$ toward $u_i$ in $\Pi$.
    \ElsIf{$x_e$ has an incoming edge in $H$ labeled by $j$}
        \State Orient $e$ toward $u_j$ in $\Pi$.
    \Else
        \State Orient $e$ toward an endpoint $u_k\in\{u_i,u_j\}$ such that
        $x_e$ is not incident to any edge of $H$ labeled by $k$.\label{step::alg::sub5::convert-ends}
        \Comment{If both endpoints satisfy this, choose one arbitrarily}
    \EndIf
\EndFor
\State Let $\widehat{\Pi}$ be the orientation obtained from $\Pi$ by reversing every edge of $\mathcal G$.
\State For each $u_i\in\mathcal N$, let $W_i(\Pi)$ and $W_i(\widehat{\Pi})$ be the
sets of edges incident to $u_i$ that are oriented toward $u_i$ under $\Pi$ and
$\widehat{\Pi}$, respectively.
\If{$\sum_{u_i\in\mathcal N}\sum_{e\in W_i(\Pi)}w_i(e)\ge
\sum_{u_i\in\mathcal N}\sum_{e\in W_i(\widehat{\Pi})}w_i(e)$}
    \State \Return $\Pi$
\Else\, \Return $\widehat{\Pi}$
\EndIf
\end{algorithmic}
\end{subroutine}

\begin{restatable}{proposition}{propAuxiliaryGraph} \label{prop:auxiliary-graph}
The auxiliary graph $H$ is simple. Every edge of $H$ has exactly one label,
and every vertex of $H$ has degree at most $2$. After the path and cycle
orientations in \textsc{Orient-Virtual-Graph} (Subroutine~\ref{sub:orient_virtual}), every vertex $x_e$ has at
most one incoming edge.
\end{restatable}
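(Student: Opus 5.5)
The plan is to read off each claim of Proposition~\ref{prop:auxiliary-graph} directly from the construction in \textsc{Orient-Virtual-Graph}. The key observation is that every edge of $H$ is created while processing a single virtual vertex $u_i$, and it pairs two edges of $\mathcal G$ that are both incident to $u_i$. This gives the labeling claim at once: each $H$-edge is added exactly once, during the iteration for one vertex $u_i$, and receives the single label $i$.

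For the degree bound, I would argue as follows. Fix a vertex $x_e$ with $e=\{u_i,u_j\}$. When $u_i$ is processed, its sorted list is split into disjoint consecutive pairs $(e_{2r-1},e_{2r})$. Hence $e$ lies in at most one pair and receives at most one $H$-edge labeled $i$. Symmetrically, it receives at most one $H$-edge labeled $j$. No other vertex $u_k$ has $e$ in its incident list, so $\deg_H(x_e)\le 2$. Moreover, the two possible incident $H$-edges carry different labels, namely $i$ and $j$.

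For simplicity, loops are impossible because the paired edges $e_{2r-1}$ and $e_{2r}$ are distinct. To exclude parallel edges, suppose $x_e$ and $x_f$ were joined twice. By the argument above, the two copies would carry distinct labels $i\neq j$. Then both $e$ and $f$ would be incident to both $u_i$ and $u_j$, so $e=f=\{u_i,u_j\}$, because the virtual graph $\mathcal G$ is simple. This contradicts $e\neq f$. I expect this step to be the only one with real content, since it is where the simplicity of $\mathcal G$ (one virtual edge per adjacent pair, guaranteed by \textsc{Build-Virtual-Graph}) is used.

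Finally, a simple graph with maximum degree at most $2$ decomposes into connected components that are isolated vertices, paths, or cycles, and the latter have length at least $3$ by simplicity. The subroutine's orientation step therefore covers every component. In a path oriented away from one chosen endpoint, each vertex other than that endpoint receives exactly its edge from the predecessor, and the endpoint receives none. In a cycle oriented cyclically, every vertex has exactly one incoming edge. Isolated vertices have none. Hence every $x_e$ has at most one incoming edge, which completes the plan.
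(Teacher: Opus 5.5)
Your proof is correct and follows essentially the same route as the paper: each $H$-edge gets one label from the unique sorted list that creates it, the degree bound comes from each endpoint of $e$ pairing it at most once, parallel edges are ruled out by the simplicity of $\mathcal G$, and the path/cycle decomposition gives at most one incoming edge. The one difference is that you prove the degree bound first and use the distinct labels at $x_e$ to force two parallel copies to come from different endpoints, which makes explicit a step the paper only asserts.
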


The above property makes the conversion rule in Steps~\ref{step::alg::sub5::convert-start}--\ref{step::alg::sub5::convert-ends} of \textsc{Orient-Virtual-Graph} (Subroutine~\ref{sub:orient_virtual}) well-defined. 
If $x_e$ has an incoming edge, its unique label determines the endpoint toward which $e$ is
oriented. If $x_e$ has no incoming edge, then $x_e$ is either isolated or
the first vertex of a directed path. Hence at least one endpoint of $e$ did
not pair $e$ in the construction of $H$, and the algorithm orients $e$
toward such an endpoint.

\begin{restatable}{lemma}{lemPricePropm}\label{lem:price-propm}
Let $\Pi$ be the virtual orientation produced from the oriented auxiliary graph
$H$, and let $\widehat{\Pi}$ be the orientation obtained by reversing every
edge of $\mathcal G$. If either $\Pi$ or $\widehat{\Pi}$ is used in Steps~\ref{step::alg::main-final-start}--\ref{step::alg::main-final-end} of Algorithm~\ref{alg:main}, the resulting orientation is PROPm.
\end{restatable}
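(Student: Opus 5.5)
The proof reduces to showing that both $\Pi$ and $\widehat{\Pi}$ satisfy the local deficit bound of Lemma~\ref{lem:greedy-walk}. That is, for every virtual vertex $u_i$ and each $\Gamma\in\{\Pi,\widehat{\Pi}\}$,
\[
\sum_{e\in L_i(\Gamma)} w_i(e)-\sum_{e\in W_i(\Gamma)} w_i(e)\le \max_{e\in L_i(\Gamma)} w_i(e).
\]
Once this holds, the proof of Lemma~\ref{lem:reduced} goes through verbatim with $\Gamma$ in place of the greedy-walk orientation. That proof only used this inequality, together with the bounds $0\le w_\ell(\{u_i,u_j\})\le 2\min_{e\in P_{ij}}v_\ell(e)$ for unresolved pairs, and those bounds do not depend on how the virtual graph is oriented. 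The proof of Theorem~\ref{thm:propmExists} then applies unchanged, since resolved pairs still give each endpoint at least half its value. So the whole task is the combinatorial inequality above.

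Fix $u_i$ and its sorted incident edges $e_1,\dots,e_d$ with $w_i(e_1)\ge\cdots\ge w_i(e_d)$. For each $r\le\lfloor d/2\rfloor$, the pair $(e_{2r-1},e_{2r})$ gives an $H$-edge labeled $i$ between $x_{e_{2r-1}}$ and $x_{e_{2r}}$. I first show that exactly one edge of each pair is won by $u_i$ under $\Pi$. After orienting $H$, this $H$-edge points into exactly one of its two endpoints, say $x_{e_{2r}}$. By Proposition~\ref{prop:auxiliary-graph}, that endpoint has no other incoming edge, so the conversion rule orients $e_{2r}$ toward $u_i$. The other endpoint $x_{e_{2r-1}}$ has its own incoming edge, if any, labeled by some label other than $i$. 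This uses that the $i$-labeled $H$-edges at $x_{e_{2r-1}}$ are exactly one, because each virtual edge appears once in $u_i$'s list. Consequently $e_{2r-1}$ is oriented toward its other endpoint $u_j$ when the incoming label is $j$. If $x_{e_{2r-1}}$ has no incoming edge, the fallback rule chooses an endpoint at which $x_e$ is not paired; since $e_{2r-1}$ is paired at $u_i$, it goes to $u_j$. Either way, $u_i$ loses $e_{2r-1}$. The same argument applies with the roles of the two pair members swapped. Reversal swaps winners and losers within every pair, so in $\widehat{\Pi}$ each pair again contributes exactly one win and one loss. An unpaired last edge $e_d$ (when $d$ is odd) may be either won or lost.

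Now bound the deficit under $\Gamma\in\{\Pi,\widehat\Pi\}$. Within the pair $(e_{2r-1},e_{2r})$, the lost weight minus the won weight is either $w_i(e_{2r})-w_i(e_{2r-1})\le 0$ or $w_i(e_{2r-1})-w_i(e_{2r})\ge 0$. I telescope using the sorted order. Let $r_1$ be the first pair in which the heavier edge $e_{2r_1-1}$ is lost (or the unpaired edge, if it is lost). For pairs $r>r_1$, bound $w_i(e_{2r-1})-w_i(e_{2r})\le w_i(e_{2r-2})-w_i(e_{2r})$ and the like, so that the sum over all pairs of positive differences is at most $w_i(e_{2r_1-1})-w_i(e_d)$ plus the unpaired edge's weight. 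More cleanly, the sum of $\max\{0,w_i(e_{2r-1})-w_i(e_{2r})\}$ over $r$ is at most $\sum_r (w_i(e_{2r-1})-w_i(e_{2r+1}))$ restricted to $r\ge r_1$, which telescopes to at most $w_i(e_{2r_1-1})$. Here I set $w_i(e_{d+1})=0$, and in the odd case the lost unpaired edge $e_d$ is absorbed into the telescoping. Since all weights are nonnegative, the total deficit is at most $w_i(e_{2r_1-1})$, the weight of a lost edge, and hence at most $\max_{e\in L_i(\Gamma)}w_i(e)$.

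The main obstacle is the well-definedness step: verifying that each $i$-labeled pair yields exactly one win and one loss for $u_i$. In particular, the edge not receiving the $i$-labeled arrow must not be oriented toward $u_i$ through the fallback rule. This requires careful use of Proposition~\ref{prop:auxiliary-graph}, namely at most one incoming edge and exactly one label per $H$-edge. The telescoping itself is routine.
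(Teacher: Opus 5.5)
Your proposal is correct and follows the paper's proof. Like the paper, you reduce to the deficit inequality of Lemma~\ref{lem:greedy-walk}, show that each sorted pair $(e_{2r-1},e_{2r})$ at $u_i$ yields exactly one won and one lost edge under both $\Pi$ and $\widehat{\Pi}$ (you handle the fallback conversion rule more explicitly than the paper does), and then use the sorted order; your telescoping is a repackaging of the paper's injective charging of the lost edge in pair $r$ to the won edge in pair $r-1$. One small slip: the sum of $\max\{0,w_i(e_{2r-1})-w_i(e_{2r})\}$ should range only over pairs whose heavier edge is lost, since these are the only pairs with a positive deficit contribution; taken over all $r$, that sum need not be bounded by the telescoped sum starting at $r_1$.
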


The key to \Cref{lem:price-propm} is to show that both $\Pi$ and $\widehat{\Pi}$ satisfy the same local deficit bound as in Lemma~\ref{lem:greedy-walk}. By construction, the auxiliary graph $H$ pairs the incident virtual edges at each vertex in decreasing order of their weights. The orientation rules for the paths and cycles of $H$ guarantee a ``one-won-one-lost'' property for each pair under both $\Pi$ and its reversal $\widehat{\Pi}$. Because the pairs are sorted by weight, any lost edge can be injectively charged to a won edge from the preceding pair, which has at least the same weight. Consequently, the total weight of lost edges exceeds the total weight of won edges by at most the weight of a single lost edge. 
By applying arguments similar to the proof of Theorem~\ref{thm:propmExists}, we can prove the PROPm condition. 
The detailed proof is deferred to Appendix~\ref{sec:lem-price-propmAppendix}. It remains to choose between the two PROPm orientations. The following proof of \Cref{prop:priceUpperBound} shows
that one of them has welfare at least half of the unconstrained optimum
welfare.
\enlargethispage{\baselineskip}

\begin{proof}[Proof of Proposition~\ref{prop:priceUpperBound}]
Run Algorithm~\ref{alg:main}, replacing \textsc{Greedy-Walk} (Subroutine~\ref{sub:greedy_walk}) by
\textsc{Orient-Virtual-Graph} (Subroutine~\ref{sub:orient_virtual}). By Lemma~\ref{lem:price-propm}, the returned
orientation $\pi$ is PROPm.

Let $R$ be the set of original edges contained in edge sets resolved during the pairwise-resolution step, and let $U=E\setminus R$ be the remaining edges.
By Lemma~\ref{lem:pair-wise}, every resolved edge set $E_{ij}$ gives both endpoints at least half of their value for $E_{ij}$. 
Thus, the welfare from the edges in $R$ is at least $\frac{1}{2} \sum_{e=\{i,j\}\in R}\bigl(v_i(e)+v_j(e)\bigr)$.
Now consider the unresolved edge sets.
Let $\Gamma$ be the virtual orientation returned by \textsc{Orient-Virtual-Graph} (Subroutine~\ref{sub:orient_virtual}). 
For each virtual vertex $u_i$, let $W_i(\Gamma)$ be the set of edges incident to $u_i$ that are oriented toward $u_i$, and let $L_i(\Gamma)$ be the set of edges incident to $u_i$ that are oriented away from $u_i$. Since \textsc{Orient-Virtual-Graph} (Subroutine~\ref{sub:orient_virtual}) returns the better of $\Pi$ and $\widehat{\Pi}$,
\[
    \sum_{u_i\in\mathcal N}\sum_{e\in W_i(\Gamma)}w_i(e)
    \ge
    \frac{1}{2}
    \sum_{u_i\in\mathcal N}
    \sum_{e'\text{ incident to }u_i}w_i(e').
\]
Equivalently, the total virtual weight of edges won under $\Gamma$ is at least the total virtual weight of edges lost under $\Gamma$.
For an unresolved edge set $E_{ij}$ represented by the virtual edge $e'=\{u_i,u_j\}$, suppose first that $e'$ is oriented toward $u_i$ under $\Gamma$. 
Then agent $i$ receives $P_{ij}$ and agent $j$ receives $E_{ij}\setminus P_{ij}$. By the definition of the endpoint weights,
$v_i(P_{ij}) = \frac{1}{2}v_i(E_{ij})+\frac{1}{2}w_i(e')$, $v_j(E_{ij}\setminus P_{ij}) = \frac{1}{2}v_j(E_{ij})-\frac{1}{2}w_j(e')$.
The analogous identity holds when $e'$ is oriented toward $u_j$. 
Summing over all unresolved edge sets, the welfare \enlargethispage{\baselineskip} from the edges in $U$ is
\[
    \frac{1}{2}
    \sum_{e=\{i,j\}\in U}\bigl(v_i(e)+v_j(e)\bigr)
    +
    \frac{1}{2}
    \left(
        \sum_{u_i\in\mathcal N}\sum_{e\in W_i(\Gamma)}w_i(e)
        -
        \sum_{u_i\in\mathcal N}\sum_{e'\in L_i(\Gamma)}w_i(e')
    \right).
\]
The second term is nonnegative by the choice of $\Gamma$. Hence the welfare
from $U$ is at least
$
    \frac{1}{2}
    \sum_{e=\{i,j\}\in U}\bigl(v_i(e)+v_j(e)\bigr)
$.
Combining the resolved and unresolved edges gives
$$
    \mathsf{SW}(\pi)
    \ge
    \frac{1}{2}
    \sum_{e=\{i,j\}\in E}\bigl(v_i(e)+v_j(e)\bigr).
$$
For any orientation, the welfare contribution of an edge $e=\{i,j\}$ is at
most $\max\{v_i(e),v_j(e)\}$, which is at most $v_i(e)+v_j(e)$. Thus, the optimal welfare is at most
$
    \sum_{e=\{i,j\}\in E}\bigl(v_i(e)+v_j(e)\bigr)
    \le
    2\mathsf{SW}(\pi)
$, which implies that the price of PROPm is at most 2.

For the running time, in \textsc{Orient-Virtual-Graph}, sorting the incident virtual edges at all
vertices takes $O(m\log m)$ time in total. Constructing the auxiliary
graph $H$, orienting its path and cycle components, converting the resulting
orientation back to an orientation of $\mathcal G$, and comparing $\Pi$ with
$\widehat{\Pi}$ all take linear time. Hence the modified algorithm also runs in
$O(m\log m)$ time.
\end{proof}

After that, we are going to present the lower bound through \Cref{prop:priceLowerBound} with the proof deferred to Appendix~\ref{sec:priceLowerBoundAppendix}.

\begin{restatable}{proposition}{propPriceLowerBound}\label{prop:priceLowerBound}
For goods instances, the price of PROPm is at least $2$.
\end{restatable}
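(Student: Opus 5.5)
We need a family of goods instances where every PROPm orientation has welfare close to half the optimum. The plan is to use a star or a simple gadget in which the efficient orientation gives an edge to the agent who values it a lot. PROPm then forces giving it to an agent with low value.

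The first candidate is a single pair $\{1,2\}$ with parallel edges. Agent 1 values every edge at $1$ and agent 2 values every edge at $\epsilon$. Optimum welfare is then $k$, but PROPm for agent 2 only requires roughly half her edges minus one, so welfare is about $k/2$. That does give ratio approaching $2$ when $k$ grows: agent 2 needs $v_2(\pi_2)+\epsilon\ge k\epsilon/2$, so she must get at least $k/2-1$ edges. The best PROPm welfare is then about $(k/2+1)+(k/2-1)\epsilon$, while the optimum is $k$. The ratio is $k/(k/2+1+O(k\epsilon))\to 2$ as $k\to\infty$ and $\epsilon\to 0$.

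If multigraphs are allowed, as they are in the paper's setting, this is essentially the whole proof. For robustness I would also give a simple-graph version. Take a star with center $c$ and $k$ leaves $\ell_1,\dots,\ell_k$. The center values every edge at $\epsilon$, and each leaf values its edge at $1$. Optimum welfare is $k$, with every edge going to its leaf.

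For the center's PROPm, any edge lost to a leaf gives correction $\epsilon$, and each leaf holds only one edge. So the center needs $v_c(\pi_c)\ge k\epsilon/2-\epsilon$, i.e.\ it must keep at least $\lceil k/2\rceil-1$ edges. Each leaf that loses its edge has $N_+=\{c\}$, correction $1$, and value $0\ge 1/2-1$ trivially, so leaves are PROPm. Hence every PROPm orientation has welfare at most $(\lfloor k/2\rfloor+1)+k\epsilon$.

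The ratio $k/(\lfloor k/2\rfloor+1+k\epsilon)$ tends to $2$ as $\epsilon\to 0$ and $k\to\infty$. Since the price is a supremum, the bound is at least $2$.

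The only step needing care is verifying exactly what PROPm forces on the low-valuing agent. The correction is the maximin value of a lost edge, which is $\epsilon$ here, so at most one edge's worth is forgiven. Also note that zero values are allowed, so I could even set the center's values to exactly $0$ to avoid the limit in $\epsilon$. However, then $\mathsf{PROP}_c=0$ and the constraint disappears. So a positive $\epsilon$, taken to the limit, is essential. Those are the key steps.
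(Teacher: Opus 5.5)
Your proof is correct and is essentially the paper's argument: the paper uses the same star, with the center valuing each edge at $1$ and each leaf valuing its edge at $H$. That is your instance rescaled by $H=1/\epsilon$, since each agent's PROPm condition is invariant under scaling her own valuation; the paper shows the center must keep at least $k-1$ of $2k$ edges and lets $H\to\infty$ and then $k\to\infty$. Your two-agent parallel-edge example is an equally valid variant of the same counting argument.
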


As a result, Theorem~\ref{thm:propmPrice} follows directly from Propositions~\ref{prop:priceUpperBound} and \ref{prop:priceLowerBound}.

\subsection{The Complexity of the Optimization Problem}
\label{sec:optimization:welfare}

In this section, we show that computing a PROPm
orientation with maximum social welfare is APX-hard on multigraphs and NP-hard on simple graphs, for example a star.
We also present an FPTAS when the graph is a star.
The proofs can be found in Appendix \ref{app:proofs:optimization}. 
It is worth noting that due to the graphical constraints, the existing hardness results in the unconstrained model do not imply the hardness results in our context \cite{DBLP:journals/eor/AzizHMS23,DBLP:conf/sigecom/Bu0LST25}.

\begin{restatable}{theorem}{thmOptWelfareMultiple}\label{thm:optimization:welfare:multiple}
For additive goods valuations, computing a PROPm orientation with maximum social welfare is APX-hard.
\end{restatable}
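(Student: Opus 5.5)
We will prove APX-hardness with an approximation-preserving (L-)reduction from a bounded-occurrence APX-hard problem. The natural source is Maximum Independent Set in graphs of maximum degree $3$ or $4$; Max-2SAT or MAX-3SAT-B would also work. The reduction will use multigraphs only to build gadgets whose PROPm constraints force a binary choice. Welfare above a common baseline then encodes the objective of the source problem.

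\textbf{Gadget.} Start from a source graph $F=(V_F,E_F)$ of maximum degree $\Delta$. For every vertex $x\in V_F$ create an agent $a_x$, and attach to it a private agent $b_x$ joined by parallel edges. Agent $a_x$ has large value $M$ on the bundle toward $b_x$. Agent $b_x$ values these edges only slightly, or values them so that PROPm forces a specific split. The aim is that $a_x$ has exactly two PROPm-feasible states: ``in'', where it keeps a designated edge and gains welfare $1$, and ``out'', where it does not.

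For every source edge $\{x,y\}\in E_F$ create a conflict gadget, for instance a shared agent $c_{xy}$ adjacent to both $a_x$ and $a_y$. The valuations of $c_{xy}$ are chosen so that if both $a_x$ and $a_y$ take their ``in'' edges from $c_{xy}$, then $c_{xy}$ falls below $\mathsf{PROP}_{c_{xy}}$ even after the maximin correction. Parallel edges let us tune the correction precisely: giving a neighbour many small parallel edges makes the minimum, and hence the correction, small.

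\textbf{Correctness.} First, show that every independent set $S$ yields a PROPm orientation with welfare $B+|S|$, where $B$ is a fixed baseline. Conversely, any PROPm orientation with welfare $B+k$ can be converted in polynomial time into an independent set of size at least $k$. The conversion reads off the ``in'' states and, if necessary, repairs gadgets locally without lowering welfare.

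\textbf{L-reduction parameters.} Because $F$ has bounded degree, $\mathrm{OPT}_{\mathrm{MIS}}\ge |V_F|/(\Delta+1)$. If we keep $B=O(|V_F|)$, for example by giving default edges small weights, then $\mathrm{OPT}_{\mathrm{welfare}}\le \alpha\,\mathrm{OPT}_{\mathrm{MIS}}$ for a constant $\alpha$. The additive relation between welfare and set size gives the second L-reduction parameter $\beta=1$. Together these show that a PTAS for max-welfare PROPm would yield a PTAS for bounded-degree MIS, which is impossible unless $\mathrm{P}=\mathrm{NP}$.

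\textbf{Main obstacle.} The hardest part is the gadget analysis. The existence theorem (\Cref{thm:propmExists}) says PROPm is always satisfiable and fairly permissive. So we must check, with exact numbers, that the conflict agent truly fails PROPm when both neighbours take their edges. At the same time, every other agent, including the private helpers, must be satisfiable in all intended configurations, and there must be no unintended high-welfare PROPm orientations.

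I would first attempt a gadget in which each conflict agent $c_{xy}$ owns two edges of equal value, one to $a_x$ and one to $a_y$, so that losing both violates $\tfrac12$-proportionality even after the correction. The violation would come from padding $c_{xy}$ with extra parallel edges to a dummy agent that always goes elsewhere. I would then verify all the remaining cases by direct casework.
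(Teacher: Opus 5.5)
You have a legitimate framework: bounded-degree MIS is APX-hard, and an L-reduction with welfare $B+|S|$ would suffice. But the proposal stops where the proof has to begin. There is no concrete gadget, no valuations, and neither direction of correctness is established.

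\textbf{The coupling mechanism cannot work as stated.} Agent $i$'s PROPm condition depends only on which of its incident edges it holds, and it is upward closed: if $i$ receives one more edge, $v_i(\pi_i)$ plus its correction cannot decrease. This has two consequences.
\begin{enumerate}
\item An ``in'' state in which $a_x$ \emph{keeps} a designated edge can never be what forces $a_x$ to take edges from its conflict agents $c_{xy}$.
\item If instead the welfare gain comes from $a_x$ taking the conflict edges themselves, nothing ties together its choices at different $c_{xy}$. The optimization then essentially decomposes into independent local problems (each $c_{xy}$ gives away at most one of its two edges), which carries no hardness.
\end{enumerate}
The forcing has to run the other way. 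The welfare-gaining move must be one in which $a_x$ \emph{loses} something, so that its own PROPm constraint obliges it to collect compensation from all of its conflict neighbours at once. A lost edge also creates a maximin correction that can absorb the loss, so you need parallel edges worth $0$ to $a_x$, held by that neighbour, to make the correction collapse. This is exactly the paper's mechanism. A literal agent that loses its switch edge $s_i$ gets no correction from the opposite literal (which also holds $p^0(a)$) or from its clause agents (which hold the $q$-edges). So it cannot give up any $g$- or $r$-edge without paying welfare to restore a correction. Clause agents are pinned symmetrically by the $r$-edges.

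\textbf{The soundness direction is unsupported.} ``Exactly two PROPm-feasible states'' is not attainable in your design. The conflict-side bundles of the two states are nested (all versus none), and every agent's condition is local and upward closed in its own bundle. Hence the configuration in which $a_x$ behaves as in ``out'' toward $b_x$ but takes all its conflict edges is PROPm whenever both intended states are. Feasibility alone therefore cannot isolate the intended states.

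The claim that any PROPm orientation can be repaired into intended form ``without lowering welfare'' is exactly what must be proved, and it does not come for free. Agents can legally rescue themselves by grabbing the zero-valued edges that kill their correction, and a conflict agent can take back part of its padding. You need an explicit accounting showing that each such deviation costs at least as much welfare as it enables.

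In the paper this accounting is the heart of the argument. The comparison point is the canonical orientation with the same switch edges, whose welfare is $W^\star=\frac{73}{2}m$ for every assignment. Each false literal agent $a$ satisfies $\Delta_a\le -L_a$, where $L_a$ counts its $g$- and $r$-edges given to clause agents. Every unsatisfied clause forces at least one such edge to its clause agent, so the welfare is at most $W^\star-U(\tau(\pi))$. Completeness separately repairs the unsatisfied clauses at total cost at most $2u$, giving a gap reduction from \textsc{Max-2P2N-3SAT}.

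Your MIS route might be made to work with analogous ingredients. Its completeness direction would even be cleaner, since independent sets would map to genuinely PROPm orientations. As written, however, the construction and both directions of its analysis are missing.
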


\vspace{-0.8em} 

\begin{restatable}{theorem}{thmOptWelfareSimple}\label{thm:optimization:welfare:simple}
For additive goods valuations, it is NP-hard to compute a PROPm orientation
with maximum social welfare, even when the graph is simple.
\end{restatable}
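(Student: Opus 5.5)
We reduce from \textsc{Partition}, using a star. This fits the remark that a star already gives NP-hardness, and because \textsc{Partition} is only weakly NP-hard it is also consistent with the FPTAS on stars.

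\textbf{The star structure.} In a star with center $c$ and leaves $1,\ldots,k$, every leaf $t$ has exactly one incident edge $e_t$. So $\mathsf{PROP}_t=\frac12 v_t(e_t)$, and the leaf is automatically PROPm. If it wins $e_t$ it has its whole value. If it loses $e_t$, its correction is $v_t(e_t)$.

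Hence the only PROPm constraint is at the center. Let $S$ be the set of edges the center wins. The constraint is
\[
v_c(S)+\max_{t\notin S}v_c(e_t)\ge \tfrac12 v_c(E),
\]
or $S=E$. The welfare is $v_c(S)+\sum_{t\notin S}v_t(e_t)$.

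\textbf{The instance.} Let $a_1,\ldots,a_k$ be positive integers with $\sum_t a_t=2B$. Create:
\begin{itemize}
\item one leaf per $a_t$, with $v_c(e_t)=a_t$ and leaf value $v_t(e_t)=2a_t$;
\item two dummy leaves $d_1,d_2$, each with center value $D$ and leaf value $2D$, where $D>2B$.
\end{itemize}
Then $v_c(E)=2B+2D$, so the center's threshold is $B+D$. We claim that a PROPm orientation with welfare at least $4D+3B$ exists if and only if the \textsc{Partition} instance is a yes-instance.

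\textbf{Both dummies lost.} The center's correction is $D$, since $D>\max_t a_t$. PROPm then reduces exactly to $a(S)\ge B$. The welfare is
\[
\sum_{t\in S}a_t+\sum_{t\notin S}2a_t+4D=4D+2B+a(\overline S).
\]
Under $a(S)\ge B$ this is at most $4D+3B$, with equality exactly when $a(S)=B$, i.e., when a partition exists.

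\textbf{At least one dummy won by the center.} Each dummy kept by the center forgoes $D$ of welfare. The item leaves contribute at most $4B$ in any case. So the welfare is at most $3D+4B<4D+3B$, using $D>B$.

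\textbf{Main obstacle.} The delicate step is neutralising the $\max$-correction term, so that the PROPm constraint becomes a clean knapsack inequality. The dummies handle this. They pin the correction to exactly $D$ whenever either dummy is lost, which is why $D$ must exceed every $a_t$. Their large leaf values make losing them mandatory at the optimum, which is why $D>B$ (we take $D>2B$ for slack).

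The case analysis above then proves correctness. The construction is clearly polynomial, and membership of the decision version in NP is immediate.
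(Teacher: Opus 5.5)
Your proposal is correct and follows essentially the same route as the paper. Both use a star in which item leaves value their edge at $2a_t$ and the center at $a_t$, with heavy dummy leaves that cannot be won by the center in any high-welfare PROPm orientation; this pins the center's correction so that PROPm forces $a(S)\ge B$ while the welfare target forces $a(S)\le B$. The only difference is that you reduce from \textsc{Partition} rather than \textsc{Subset Sum}, which lets you use two symmetric dummies governed by a single parameter $D$. The paper instead handles an arbitrary target $B$ by a case split on $2B\le W$, with dummy center values $r_t$ (summing to $W+2B$, maximum $W$) and a separate large leaf value $H$.
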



\subsection{Compatibility of PROPm with Pareto Optimality}
\label{sec:PROPm:PO}

In this section, we show that PROPm is not generally compatible with PO in the graphical setting and
that deciding whether there exists an orientation that is simultaneously PROPm and PO is
NP-hard. 
Proofs are deferred to Appendix \ref{app:PROPm:PO}.

\begin{restatable}{proposition}{propPropmPoIncompatibleTwoAgents}\label{prop:propmPoIncompatibleTwoAgents}
There exists an instance where no PROPm orientation is PO.
\end{restatable}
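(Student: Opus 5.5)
The natural witness uses two agents joined by a small number of parallel edges. With only two agents, both PROPm and Pareto optimality reduce to finite conditions on subsets of $E_{12}$, so the claim can be checked by enumerating all $2^{|E_{12}|}$ orientations.

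\textbf{Step 1 (PROPm condition).} With two agents, an orientation is a partition $(\pi_1,\pi_2)$ of $E_{12}$, and Definition~\ref{def:propm} simplifies. Agent $1$ is satisfied if $\pi_2=\emptyset$, or if
\[
v_1(\pi_1)+\min_{e\in\pi_2}v_1(e)\ge \tfrac{1}{2}v_1(E_{12}),
\]
and symmetrically for agent $2$. Thus a PROPm orientation must leave each agent short of half her total by at most the value, to her, of the \emph{cheapest} edge she lost.

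\textbf{Step 2 (PO condition).} I will use the standard ratio structure for two additive agents. Apart from ties and zero values, every PO orientation gives agent $1$ a prefix of the edges sorted by decreasing ratio $v_1(e)/v_2(e)$. Edges valued $0$ by one agent and positively by the other must go to the latter. Conversely, a non-prefix orientation is dominated by an explicit one-for-one swap or a reassignment of a zero-valued edge.

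So the PO candidates are essentially the $|E_{12}|+1$ threshold orientations. The design goal is that each threshold orientation violates PROPm for at least one agent. PROPm orientations may still exist; each is then non-PO because it is not a threshold set.

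\textbf{Step 3 (choosing values).} The lever is the ``min over lost edges'' term in Step~1. I will give agent $2$ one large edge and several tiny edges, and have her value these tiny edges relatively highly compared with agent $1$.

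\begin{itemize}[leftmargin=2.4em]
\item Any orientation in which agent $2$ loses a tiny edge then yields a negligible correction. Agent $2$ must therefore either collect close to half her value outright, or lose only large edges.
\item The values of agent $1$ are chosen so that efficiency (ratio order) pushes the tiny edges to agent $2$ and the large edge to agent $1$ in an order that mixes badly with this requirement.
\item The intended effect is the following. Moving the threshold far enough to satisfy agent $2$ passes the point where agent $1$'s correction, which is itself a minimum over her lost edges, suffices. Every earlier threshold fails agent $2$.
\end{itemize}

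Concretely, I would first try three or four edges with small integer values and tune them so that the two PROPm windows along the threshold order are disjoint. My first few naive attempts, for example $v_2=(1,1,2)$ against $v_1=(1,1,1)$, still admit a PO and PROPm threshold. This step therefore needs a genuine search, not a guess.

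\textbf{Main obstacle.} The hard part is Step~3, finding numbers for which the two one-sided PROPm windows along the threshold order are disjoint. The difficulty is that the correction term is a minimum over the lost bundle, so it behaves non-monotonically as the threshold moves. If two agents prove impossible, the fallback is three agents on a triangle, where PROPm correction comes from the best neighbor. The same enumeration over PO orientations, combined with an explicit dominating reorientation for each PROPm orientation, then finishes the proof.

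Once valid values are found, verification is mechanical. I would list all orientations and, for each one, either exhibit the violated PROPm inequality or exhibit a Pareto improvement.
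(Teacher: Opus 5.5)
Your proposal stops before producing the instance, and for an existence claim the witness is the entire proof. You leave Step~3 as ``a genuine search'', so nothing is yet established.

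The missing idea is to combine zero values with the minimum in the PROPm correction. With two agents, if agent $i$ loses even one edge she values at $0$, her correction $\min_{e\in E_{12}\cap\pi_j}v_i(e)$ (with $j\neq i$) is $0$. Pareto optimality forces every edge valued $0$ by one endpoint and positively by the other to go to the positive endpoint. The paper uses three parallel edges $p,q,r$ with $v_1(p)=2$, $v_1(q)=0$, $v_1(r)=1$ and $v_2(p)=2$, $v_2(q)=1$, $v_2(r)=0$, so both shares are $\frac{3}{2}$. In any PO orientation agent $1$ gets $r$ and agent $2$ gets $q$. Whichever agent does not get $p$ then has value $1$ and has lost a zero-valued edge, so her correction is $0$ and PROPm fails. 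The paper phrases this equivalently: the only PROPm orientations are $(\{p\},\{q,r\})$ and $(\{q,r\},\{p\})$, and each is dominated by moving the zero-valued edge. No triangle fallback is needed.

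Your Step~2 is also wrong as stated. For indivisible items, PO does not imply that agent $1$ receives a prefix in decreasing order of $v_1(e)/v_2(e)$; that characterizes \emph{fractional} PO. Take two parallel edges $a,b$ with $v_1(a)=3$, $v_2(a)=4$, $v_1(b)=2$, $v_2(b)=1$. The orientation $(\{a\},\{b\})$ has values $(3,1)$ and is PO, since the other orientations give $(0,5)$, $(2,4)$, $(5,0)$. Yet it is not a ratio prefix, and the one-for-one swap you invoke hurts agent $1$.

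So your principle that any PROPm orientation outside the threshold family is automatically non-PO is unsound. You need an explicit Pareto improvement for each PROPm orientation, as your final verification plans, or you can rely only on the zero-value rule, which is valid.
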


\begin{restatable}{theorem}{thmHardPOPROPm}\label{thm:hard:PO:PROPm}
For additive goods valuations, it is NP-hard to decide whether there exists
an orientation that is both PROPm and Pareto optimal, even when all values
belong to $\{0,1,2,4\}$.
\end{restatable}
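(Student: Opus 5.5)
We plan a polynomial reduction from a restricted variant of 3-SAT, such as the variant in which every variable occurs at most a bounded number of times. The guiding observation is that Pareto optimality pins down most edges locally. If an edge $e=\{i,j\}$ has $v_i(e)=0<v_j(e)$, then every PO orientation gives $e$ to $j$: otherwise flipping $e$ strictly helps $j$ and hurts nobody. We will therefore build the instance mostly from such \emph{forced} edges, which fix each agent's proportional share and baseline bundle. The genuine choices are confined to a few \emph{contested} edges with positive value at both endpoints. PROPm is then a family of local threshold constraints on these choices, with the $\{0,1,2,4\}$ scale tuned so the thresholds bite exactly where we need them.

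For the gadgets, each variable $x$ gets a small structure in which an agent $a_x$ must give away one of two contested edges, toward a ``true'' side or a ``false'' side. Making one side take value $4$ and the other $2$ while $a_x$ values both at $2$ gives two assignments that are both PROPm and mutually Pareto-incomparable. Each clause $C$ gets an agent $c_C$ whose PROPm inequality holds only if $c_C$ receives at least one contested good from a literal agent consistent with a satisfying choice. The values $1$ serve as the correction amounts, with proportional shares set so that one missing unit is exactly what the maximin correction can cover. The key ingredient is a copy of the two-agent instance from Proposition~\ref{prop:propmPoIncompatibleTwoAgents} where no PROPm orientation is PO. We attach it to each clause so that the incompatibility is broken, through an extra good or an extra Pareto-blocking trade, exactly when the clause is satisfied.

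Correctness has two directions. From a satisfying assignment, we orient the contested edges accordingly and verify PROPm agent by agent; this is routine arithmetic. We then certify PO by exhibiting strictly positive weights $\lambda_i$ such that every edge goes to an endpoint maximizing $\lambda_i v_i(e)$. A weighted-welfare maximizer with positive weights is PO, and this avoids reasoning about arbitrary improving reorientations. Conversely, given a PROPm and PO orientation, forced edges are forced as above. A short exchange argument, flipping a pair of contested edges along a 2-cycle, shows that PO forces each variable gadget into a consistent true or false state. PROPm at the clause agents then yields a satisfying assignment.

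The main obstacle is the PO side of the forward direction. PO is a global condition, and we must rule out improving reorientations that cycle through several gadgets. The weight certificate handles this only if one vector $(\lambda_i)$ works for all gadgets simultaneously under the values $\{0,1,2,4\}$. We will first try choosing $\lambda$ layer by layer: variable agents, then literal agents, then clause agents, then the incompatibility gadgets. At each layer we pick ratios, such as factors of $2$, that make the intended orientation of every contested edge weakly optimal. If this fails, we will fall back to a direct argument that any Pareto improvement must strictly improve some clause gadget, which is impossible there.
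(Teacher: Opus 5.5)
There is a genuine gap. What you have is a plan whose decisive gadgets are never specified, and the mechanism you sketch for making PROPm constraints bite is not the one that works.

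\textbf{The missing idea: neutralizing the correction.} The obstacle to using PROPm as a threshold is the correction term. An agent that loses a contested good to a neighbor can use that very good as its correction, so a clause agent that loses its contested goods is not obviously in violation.

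The paper removes this obstacle with your own observation about PO-forced edges. In parallel with each contested edge it places an edge worth $0$ to the agent who would need the correction and positive to the other endpoint:
\begin{itemize}
\item $q_j^t$ (values $0$ for the literal, $1$ for the clause) is forced to $c_j$;
\item $r_j^t$ (values $1$ for the literal, $0$ for the clause) is forced to the literal agent;
\item $p^0(a)$ is forced to $a^{\mathrm{opp}}$.
\end{itemize}
The correction from a neighbor $j$ is a minimum over \emph{all} edges lost to $j$, so every correction at literal and clause agents becomes $0$. PROPm then collapses to $v_i(\pi_i)\ge \frac{9}{2}$.

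With $s_i$ worth $4$ to both literal agents and $g_j^t$ worth $1$ to the literal and $2$ to the clause, this gives the needed link:
\begin{itemize}
\item A literal agent without its switch edge has value $3+k$, so it must keep both main occurrence edges.
\item A clause agent has value $3+2k$, so it must win at least one.
\item A literal agent holding its switch edge has value $7$ and can afford to give both away.
\end{itemize}
Your proposal never supplies such a link between truth values and clause agents. Your design (``values $1$ serve as the correction amounts'') goes the other way and leaves corrections live. Your variable gadget, a single agent $a_x$ with a $4$/$2$ split, also does not explain how its state reaches several clause occurrences.

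\textbf{Two further pieces are unsupported.}
\begin{itemize}
\item \emph{Variable consistency via PO.} Consistency should be enforced by PROPm at literal agents, as above, not by PO. In the paper's construction every orientation that respects the forced edges is already PO, so PO exerts no control over switch and occurrence edges. The ``exchange along a 2-cycle'' argument would need a gadget you have not described.
\item \emph{The two-agent incompatible instance.} It is not an independent ingredient. There, PO forces $r$ to agent $1$ and $q$ to agent $2$, which makes both corrections $0$; it is the zero-correction trick in miniature. Attaching it to a clause agent only shifts that agent's threshold and does not by itself encode satisfiability.
\end{itemize}

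\textbf{What to keep.} Your weighted-welfare certificate for PO is sound. In the paper's construction, take weights $2$ on literal agents and $1$ on clause agents. Every contested edge is then a tie ($s_i$: $8$ vs.\ $8$; $g_j^t$: $2$ vs.\ $2$), and every forced edge strictly favors its positive endpoint. So the forward-direction orientation maximizes weighted welfare and is PO. This is a shorter route than the paper's direct argument that no Pareto improvement exists.
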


It is not clear whether the above decision problem belongs to NP. 
While PROPm can be verified in polynomial time for a given orientation, verifying Pareto optimality is coNP-complete \cite{10.1007/978-3-642-04428-1_9}. 
Indeed, it remains weakly coNP-complete already for two agents \cite{aziz2016optimal}. 
This two-agent hardness carries over to the orientation model, which represents each item as a parallel edge between the two agents.

\section{PROPavg Orientations for Goods} \label{sec:PROPavg}

We also consider the notion of PROPavg. Unfortunately, PROPavg orientations may not exist, and even worse, deciding whether they exist is NP-complete.

\begin{restatable}{proposition}{propPropAvgNotExist}\label{prop:prop-avgNotExist}
PROPavg orientations may fail to exist for goods, even in simple graphs with
binary valuations.
\end{restatable}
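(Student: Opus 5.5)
The plan is to exhibit one explicit small simple graph with $0/1$ valuations and rule out every orientation by a short counting argument. It helps first to write the PROPavg condition in a binary simple graph in closed form. Agent $i$ has $a_i$ incident edges of value $1$ and $b_i$ of value $0$, so $\mathsf{PROP}_i=a_i/2$. Suppose $i$ receives $k$ valuable edges and loses $\ell\ge 1$ edges, $t$ of which are valuable. Since every lost edge is alone in $E_{ij}$, the condition reads
\[
k+\frac{t}{\ell}\ \ge\ \frac{a_i}{2}.
\]
Two consequences drive the construction. First, a lost zero-valued edge does not change the target but enlarges the denominator $\ell$, which shrinks the correction below $1$. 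Second, when all incident edges are valuable, the condition is simply $k\ge \lceil a_i/2\rceil-1$.

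Zero-valued edges help only if the agent cannot simply keep them. Giving a zero edge to its own endpoint costs that endpoint nothing and can only help the other endpoint. So I would use edges valued $0$ by \emph{both} endpoints (``dead'' edges): each dead edge must be lost by one endpoint, which is then penalized. The concrete plan is a core of mutually valuable edges, for example $K_4$ or a small odd-degree regular graph with all values $1$, and in it every core vertex has the tight requirement $k\ge\lceil a_i/2\rceil-1$ when its correction is a full $1$. On top of this I add a perfect matching (or a cycle) of dead edges among the core vertices, or between core vertices and extra vertices that have nothing else to gain.

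Any vertex that loses a dead edge and also loses some valuable edge now has correction at most $t/(t+1)<1$. In the tight case this forces it to receive $\lceil a_i/2\rceil$ valuable edges rather than $\lceil a_i/2\rceil-1$. Summing the required numbers of received valuable edges over the core vertices and using that each dead edge penalizes at least one endpoint, I aim to show that the total demand exceeds the number of valuable edges. That contradiction proves that no PROPavg orientation exists.

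The main obstacle is tuning the instance so the count is genuinely tight. Agents that lose only dead edges ($t=0$) need $k\ge a_i/2$. Agents that lose no valuable edge might avoid the penalty altogether. The degrees $a_i$ must be chosen so that $\lceil a_i/2\rceil-1$ is tight, which means odd $a_i$ around $3$ or $5$. I would first try $K_4$ with a dead perfect matching added via two extra pendant-free vertices, or directly $K_{3,3}$ plus a dead matching on one side. If the counting does not close, I would fall back to a brute-force search over graphs with at most $6$ vertices to find the instance, and then verify it by case analysis on which endpoint receives each dead edge.
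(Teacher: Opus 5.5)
Your proposal never produces an instance. For a non-existence statement, the explicit instance and its case check are the whole proof, and deferring to a brute-force search over at most six vertices is not an argument; nothing guarantees such a search succeeds.

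The mechanism you sketch also cannot close as stated. In your template, every edge is valued $1$ by both ends or $0$ by both ends. Take a vertex with odd $a_i$ that receives $k=\frac{a_i-1}{2}$ valuable edges. It only needs correction $\frac12$. With $t=\frac{a_i+1}{2}$ lost valuable edges and $y$ lost dead edges, this holds iff $y\le t$. So at odd degree a single lost dead edge is always tolerated. Your claim that it ``forces $\lceil a_i/2\rceil$'' holds only for even $a_i$, where a balanced orientation already gives exactly $a_i/2$.

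Concretely, build an orientation in two steps. Orient the valuable edges so that in- and out-degree differ by at most one at every vertex (Euler-tour argument). Orient the dead edges so that each vertex loses at most one of them. This is possible whenever the dead graph is a matching, a union of paths and cycles, or more generally has no component with more edges than vertices. Dead edges to extra vertices with no valuable edges can simply go to the core vertex, since those extra vertices have share $0$. Every vertex is then PROPavg. Hence all your concrete candidates ($K_4$ or $K_{3,3}$ with a dead matching, dead edges to extra vertices with nothing to gain) admit PROPavg orientations and are not counterexamples at all.

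The missing idea has two parts. First, force vertices to lose their \emph{only} valuable edge. Such a vertex can then lose at most one further edge, since $\frac{1}{1+y}\ge\frac12$ forces $y\le 1$. Second, make the zero-valued edges among these forced losers outnumber them. The count that closes is on zero-valued edges, not on valuable ones.

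The paper does this with two $K_6$ gadgets. The root that loses the bridge has value $0$ and may give away at most one of its five spokes. So at least four leaves lose their only valuable edge. Each of them must then receive at least three of its four edges inside the leaves' $K_5$, which needs $12>10$ edges. The spokes are valued $0$ by the root and $1$ by the leaf. These are exactly the one-sided edges you dismissed: they let one forced loss at the root propagate into four forced losses at the leaves.

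Your own template can also be repaired, but only with a dense dead graph. Take $K_8$ with a perfect matching of edges valued $1$ by both ends, and all other edges valued $0$ by both ends. In any orientation, the four losers (one per matching pair) span $\binom{4}{2}=6$ dead edges. Each of these must be lost by one of them, yet each loser may lose at most one.
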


\begin{figure}[H]
\centering
\begin{tikzpicture}[
    every node/.style={circle, draw, inner sep=1.2pt, minimum size=14pt, font=\scriptsize},
    zeroedge/.style={draw=gray!45, line width=0.35pt},
    spoke/.style={draw=black, line width=0.55pt},
    bridge/.style={draw=black, line width=0.75pt},
    panel/.style={font=\small}
]

\node (a1) at (0,0) {$1$};
\node (a2) at (-1.55,1.15) {$2$};
\node (a3) at (-0.55,1.85) {$3$};
\node (a4) at (0.75,1.75) {$4$};
\node (a5) at (1.55,0.85) {$5$};
\node (a6) at (1.25,-0.55) {$6$};

\node (b1) at (4.2,0) {$1'$};
\node (b2) at (2.65,1.15) {$2'$};
\node (b3) at (3.65,1.85) {$3'$};
\node (b4) at (4.95,1.75) {$4'$};
\node (b5) at (5.75,0.85) {$5'$};
\node (b6) at (5.45,-0.55) {$6'$};

\foreach \x/\y in {a2/a3,a2/a4,a2/a5,a2/a6,a3/a4,a3/a5,a3/a6,a4/a5,a4/a6,a5/a6}
    \draw[zeroedge] (\x) -- (\y);
\foreach \x/\y in {b2/b3,b2/b4,b2/b5,b2/b6,b3/b4,b3/b5,b3/b6,b4/b5,b4/b6,b5/b6}
    \draw[zeroedge] (\x) -- (\y);

\foreach \x in {a2,a3,a4,a5,a6}
    \draw[spoke] (a1) -- (\x);
\foreach \x in {b2,b3,b4,b5,b6}
    \draw[spoke] (b1) -- (\x);

\draw[bridge] (a1) -- (b1)
    node[midway, above, draw=none, rectangle, inner sep=1pt, font=\scriptsize] {$1,1$};

\end{tikzpicture}
\vspace{-0.4cm}
\caption{A compact PROPavg counterexample for goods. Each side is a clique on
six vertices. For every spoke edge between the root and a leaf, the root's
value is $0$, and the leaf's value is $1$. The edge $\{1,1'\}$ has value $1$
for both endpoints. All gray edges have value $0$ for both endpoints.}
\label{fig:PROPavg}
\end{figure}

Figure~\ref{fig:PROPavg} illustrates the counterexample for the above proposition. 
The proof of \Cref{prop:prop-avgNotExist} is deferred to Appendix~\ref{sec:prop-avgNotExistAppendix}.

\begin{restatable}{theorem}{thmPropAvgNPC}\label{thm:prop-avgNPC}
Deciding whether a goods instance admits a PROPavg orientation is NP-complete, even for simple graphs with values in $\{0,1,3\}$.
\end{restatable}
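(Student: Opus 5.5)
Membership in NP is immediate: given an orientation, we can evaluate $v_i(\pi_i)$, $N^{\pi}_+(i)$ and the average-minimum correction for every agent in polynomial time. For hardness, I will reduce from a satisfiability variant, most likely 3-SAT, or (3,B2)-SAT in which each literal occurs a bounded number of times. The gadgets will reuse the mechanism behind the counterexample of Proposition~\ref{prop:prop-avgNotExist}, which already uses values in $\{0,1\}$.

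The key phenomenon in Figure~\ref{fig:PROPavg} is this. A root $r$ adjacent to $k$ leaves via spokes of value $0$ to $r$ and $1$ to the leaf, with the leaves forming a zero-valued clique, forces every leaf to take its spoke unless some other structure compensates. Each leaf has $\mathsf{PROP}=\frac12$, and a leaf losing its only valuable edge gets correction $1$ only if that is its only lost neighbor. The zero edges inside the clique must go somewhere, and they create lost neighbors with correction $0$, which dilutes the average. As a result, the root ends up with many lost neighbors of correction $0$, and its average correction is driven toward $1/|N^\pi_+(r)|$. I will isolate this as a lemma: in a root gadget, the root loses all spokes, and so every valuable edge of the root that it loses is averaged against many zero corrections. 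Thus the root must essentially obtain its share from its remaining edges, with correction worth at most $1/(\text{large})$.

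With this lemma, the construction is as follows. For each variable $x$, create a pair of adjacent agents $a_x, b_x$ joined by an edge valued $1$ by both; its orientation encodes the truth value. Each endpoint carries a root gadget, so losing that edge is costly, as in the counterexample. For each clause $C$, create a clause agent $c_C$. It has an edge to a literal-side agent of each of its literals, valued $1$ or $3$ by the appropriate endpoints, chosen so that:
\begin{itemize}
\item the clause agent's share is met exactly when at least one incident edge can be taken without violating the literal agent;
\item the literal agent can afford to give that edge away only when its variable edge is oriented toward it, that is, when the literal is true.
\end{itemize}
The value $3$ is presumably needed so that the shares of the clause and literal agents reach these thresholds exactly while remaining integral. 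I will then tune the number of zero-valued leaves in each clique so that averaged corrections stay strictly below the needed slack.

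For the two directions of the reduction:
\begin{itemize}
\item From a satisfying assignment, orient the variable edges accordingly, give each clause one edge from a true literal, and orient the gadget edges as in a fixed template. Then verify PROPavg agent by agent.
\item Conversely, use the gadget lemma to show that every variable pair is oriented consistently and every clause agent takes an edge from a true literal, which yields a satisfying assignment.
\end{itemize}

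The main obstacle is the exact calibration of values in $\{0,1,3\}$ and of the clique sizes. The averaged correction depends on the number of lost neighbors, including zero-value ones, so every agent's $N^\pi_+$ must be controlled in all orientations, not just the intended one. I would first pin down the root-gadget lemma rigorously, then choose the clause and literal values and check the inequalities by cases.
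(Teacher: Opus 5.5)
Your outline has the right architecture: NP membership, a reduction from bounded-occurrence SAT with a switch edge per variable and clause agents adjacent to literal agents, and the clique structure of Proposition~\ref{prop:prop-avgNotExist} used to feed zero-valued lost neighbors into a literal agent's average. But it stops before the part that carries the proof. The valuations, shares and case analysis are all deferred, and the one lemma you state is false.

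A root with $k$ spoke-leaves need not lose all its spokes. A spokeless leaf has value $0$ and may lose at most one clique edge, so it must receive at least $k-2$ of them. Counting clique edges incident to $s$ spokeless leaves gives $s(k-2)\le\binom{s}{2}+s(k-s)$, i.e.\ $s\le 3$. Moreover, $s=3$ is achievable: the three leaves take all edges to the other leaves and orient their triangle cyclically, which gives each exactly $\frac12$. So the correct statement is only that the number $z$ of spokes the root loses satisfies $k-3\le z\le k$, and the orientation is free to choose within that range.

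More importantly, the tuning principle you commit to fails for values in $\{0,1,3\}$. That principle is to make cliques large so that the literal agent's correction is negligible. Give a literal agent a switch edge worth $3$ and two clause edges worth $1$; its share is then $\frac52$. A false literal that keeps both clause edges has value only $2$. It survives only through the correction from its lost switch edge, $\frac{3}{1+z}\ge\frac12$, which requires $z\le 5$. Large cliques force $z\ge k-3>5$, which breaks even the intended orientation. What is needed is a precisely controlled dilution, not a large one.

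The paper obtains this by not making the literal agent a root at all. The gadget hangs off an intermediate vertex $d_a^1$, whose only valued edge is the attachment edge $\{a,d_a^1\}$, worth $0$ to $a$. The counterexample argument with $d_a^1$ as root forces this edge to $d_a^1$, so every literal agent loses exactly one zero-valued neighbor. The key inequalities then follow:
\begin{itemize}
\item a false literal keeping both clause edges has $2+\frac{3+0}{2}\ge\frac52$;
\item a false literal keeping at most one clause edge has at most $1+\frac{3+0+1}{3}=\frac73<\frac52$;
\item a clause agent (values $1$, share $\frac32$) is PROPavg if and only if it keeps a clause edge.
\end{itemize}

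Your direct-root variant can be rescued with a small clique, e.g.\ $k=5$. There $2\le z\le 5$ in every PROPavg orientation, so $2+\frac{3}{1+z}\ge\frac52$ and $1+\frac{4}{2+z}\le 2<\frac52$. But this calibration and the full two-direction verification are the content of the hardness proof, and they are missing from your proposal.
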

\begin{proof}
The problem is in NP, since for a given orientation, one can check the
PROPavg condition for every agent in polynomial time. We prove NP-completeness by
a reduction from \textsc{2P2N-3SAT}~\cite{berman2005approximation}. Let the input formula have variables
$x_1, \ldots, x_n$ and clauses $C_1, \ldots, C_m$, where every clause contains
three distinct literals, and every variable appears exactly twice positively
and exactly twice negatively.

For each variable $x_i$, we create two literal
vertices, denoted by $x_i$ and $\bar x_i$, and add the switch edge
$s_i=\{x_i,\bar x_i\}$. For each clause $C_j$, create one clause vertex
$c_j$. If
$
    C_j=(\ell_j^1\vee \ell_j^2\vee \ell_j^3),
$
then for each $t\in[3]$, add one edge
$e_j^t=\{c_j,a(\ell_j^t)\}$, where $a(x_i)=x_i$ and
$a(\bar x_i)=\bar x_i$.
We also create dummy vertices for each literal vertex. For any literal vertex $a$, attach a
dummy gadget $D(a)$ as follows. Create six dummy vertices
$d_a^1,\ldots,d_a^6$. Add edge $\{a,d_a^1\}$, and add all
edges among the six dummy vertices. Thus the dummy vertices induce a copy of
$K_6$. We attach one copy of this gadget to every literal vertex
$x_i$ and $\bar x_i$.
This completes the graph construction. The graph is simple. It has
$14n+m$ vertices and $33n+3m$ edges.
The valuations are defined as follows. For every switch edge $s_i$, set
$v_{x_i}(s_i)=v_{\bar x_i}(s_i)=3$. For every clause edge
$e_j^t=\{c_j,a(\ell_j^t)\}$, set
$v_{c_j}(e_j^t)=v_{a(\ell_j^t)}(e_j^t)=1$.
For each dummy gadget $D(a)$, set
$  v_a(\{a,d_a^1\})=0$ and $
    v_{d_a^1}(\{a,d_a^1\})=1.
$
For every $r\in\{2,\ldots,6\}$, set
$v_{d_a^1}(\{d_a^1,d_a^r\})=0$ and
$v_{d_a^r}(\{d_a^1,d_a^r\})=1$. Every edge between two vertices in
$\{d_a^2,\ldots,d_a^6\}$ has value $0$ for both endpoints. All valuations are
additive, and all values belong to $\{0,1,3\}$.
The construction is illustrated in Figure~\ref{fig:propavg-reduction}.

For the proportional shares, each literal vertex has a total value $3+2=5$ and hence has proportional share $\frac{5}{2}$. 
Each clause vertex has total value $3$ and hence has a proportional share $\frac{3}{2}$. 
In every dummy gadget, each dummy vertex has a total value $1$ and hence has a proportional share $\frac{1}{2}$.

We first present a property of the dummy gadget.
In any PROPavg orientation, the attachment edge $\{a,d_a^1\}$ must be assigned to $d_a^1$. 
Suppose not, then $d_a^1$ loses its only edge of value $1$.
Since all five edges $\{d_a^1,d_a^r\}$ have value $0$ for $d_a^1$, the PROPavg condition for $d_a^1$ implies that $d_a^1$ can lose at most one of these five edges.
Therefore, at least four vertices among $d_a^2,\ldots,d_a^6$ lose their edge
to $d_a^1$. Each such vertex then loses its only edge of value $1$. To satisfy PROPavg, it
can lose at most one of its four remaining edges to the other vertices in
$\{d_a^2,\ldots,d_a^6\}$. Hence, each of these at least four vertices must
receive at least three edges inside the set $\{d_a^2,\ldots,d_a^6\}$. This
requires at least $12$ assigned edges inside a graph with only
$\binom{5}{2}=10$ edges, a contradiction. Thus $\{a,d_a^1\}$ is assigned to
$d_a^1$ in every PROPavg orientation.

\begin{figure}[t]
\centering
\resizebox{\textwidth}{!}{%
\begin{tikzpicture}[
    agent/.style={circle,draw,thick,minimum size=7mm,inner sep=1pt,font=\small},
    smallagent/.style={circle,draw,minimum size=5.5mm,inner sep=0.5pt,font=\scriptsize},
    literalagent/.style={circle,draw,thick,minimum size=10mm,inner sep=1pt,font=\scriptsize},
    gadgetbox/.style={draw,rounded corners=1.5pt,minimum width=1.45cm,
        minimum height=6.5mm,align=center,font=\scriptsize},
    edge/.style={draw,thick},
    zeroedge/.style={draw=gray!55,thin},
    attach/.style={draw,thick,densely dashed},
    lab/.style={font=\scriptsize,fill=white,inner sep=1pt},
    panel/.style={font=\small}
]

\begin{scope}[shift={(0,0)}]
\node[agent] (a) at (0,0) {$a$};

\node[smallagent] (d1) at (1.35,0) {$d^1$};
\node[smallagent] (d2) at (2.25,0.82) {$d^2$};
\node[smallagent] (d3) at (3.20,0.48) {$d^3$};
\node[smallagent] (d4) at (3.20,-0.48) {$d^4$};
\node[smallagent] (d5) at (2.25,-0.82) {$d^5$};
\node[smallagent] (d6) at (2.55,0) {$d^6$};

\foreach \u/\v in {d2/d3,d2/d4,d2/d5,d2/d6,d3/d4,d3/d5,d3/d6,d4/d5,d4/d6,d5/d6} {
    \draw[zeroedge] (\u)--(\v);
}

\foreach \k in {2,3,4,5,6} {
    \draw[edge] (d1)--(d\k);
}

\draw[edge] (a)--(d1)
    node[lab,midway,above] {$(0,1)$};

\node[lab] at (1.58,0.62) {$(0,1)$};

\node[panel] at (1.75,-1.42) {(a) Dummy gadget $D(a)$};
\end{scope}

\begin{scope}[shift={(5.65,0)}]
\node[agent] (x) at (0,0) {$x_i$};
\node[agent] (nx) at (2.35,0) {$\bar x_i$};

\draw[edge] (x)--(nx) node[lab,midway,above] {$3$};

\node[gadgetbox] (Dx) at (0,-1.05) {$D(x_i)$};
\node[gadgetbox] (Dnx) at (2.35,-1.05) {$D(\bar x_i)$};

\draw[edge] (x)--(Dx) node[lab,midway,left] {$(0,1)$};
\draw[edge] (nx)--(Dnx) node[lab,midway,right] {$(0,1)$};

\draw[edge] (x)--++(-0.78,0.52) node[lab,midway,above right] {$1$};
\draw[edge] (x)--++(-0.78,-0.32) node[lab,midway,above left] {$1$};

\draw[edge] (nx)--++(0.78,0.52) node[lab,midway,above left] {$1$};
\draw[edge] (nx)--++(0.78,-0.32) node[lab,midway,above right] {$1$};

\node[panel] at (1.17,-1.75) {(b) Variable gadget for $x_i$};
\end{scope}

\begin{scope}[shift={(11.35,0)}]
\node[literalagent] (l1) at (0,1.18) {$a(\ell_j^1)$};
\node[literalagent] (l2) at (0,0) {$a(\ell_j^2)$};
\node[literalagent] (l3) at (0,-1.18) {$a(\ell_j^3)$};
\node[agent] (c) at (2.85,0) {$c_j$};

\draw[edge] (l1)--(c) node[lab,midway,above right] {$1$};
\draw[edge] (l2)--(c) node[lab,midway,above] {$1$};
\draw[edge] (l3)--(c) node[lab,midway,above left] {$1$};

\node[panel] at (1.05,-2) {(c) Clause gadget for $C_j$};
\end{scope}

\end{tikzpicture}%
}
\caption{Illustration of the gadgets for the reduction. We suppress
the subscript $a$ and write $d^t$ for $d_a^t$. A single label means that
both endpoints have that value. A label $(0,1)$ means value $0$ at
the left or upper endpoint and value $1$ at the right or lower endpoint. In
(a), $d^1,\ldots,d^6$ induce a clique; the
gray edges have value $0$ for both endpoints. The four dangling edges incident
to $x_i$ and $\bar x_i$ are clause edges.}
\label{fig:propavg-reduction}
\end{figure}
Now suppose the \textsc{2P2N-3SAT} instance has a satisfying assignment. 
We construct an orientation as follows.
If $x_i$ is true, assign $s_i$ to $x_i$; otherwise, assign $s_i$ to $\bar x_i$. 
Assign every edge $\{a,d_a^1\}$ to $d_a^1$ and assign every edge $\{d_a^1,d_a^r\}$ to $d_a^r$. 
The remaining dummy edges have value $0$ for both endpoints and may
be oriented arbitrarily. Finally, assign a clause edge $e_j^t$ to $c_j$ if the literal $\ell_j^t$ is true and assign it to $a(\ell_j^t)$ otherwise.
Every dummy vertex is PROPavg by construction. 
A true literal vertex receives its switch edge and has value $3> \frac{5}{2}$. 
A false literal vertex receives its two clause edges, of total value $2$ and loses the switch edge of value $3$ and
an edge (the one between it and the dummy vertex) of value $0$. 
Thus its PROPavg correction is at least $\frac{3+0}{2}$, so it satisfies PROPavg. Finally, every clause has at least one true literal. Hence every clause vertex receives at least one clause edge and satisfies PROPavg.
Conversely, suppose that the constructed instance admits a PROPavg orientation.
By the dummy-gadget property above, every edge between the literal vertex and the dummy vertex is assigned to the
dummy vertex. 
Define a truth assignment by setting $x_i$ to true if and only if $s_i$ is assigned to $x_i$.

We claim that any literal vertex that does not receive its switch edge must receive both of its clause edges. Indeed, such a vertex loses the switch edge, whose value is $3$, and loses the edge incident to the dummy agent, whose value is $0$. 
If it receives at most one of its two clause edges, then its own value is at most $1$, and its PROPavg correction is at most $\frac{4}{3}$. 
This gives total value less than $\frac{5}{2}$, contradicting PROPavg. Now consider any clause $C_j$. If $C_j$ were not satisfied, then none of its three literal vertices would receive its switch edge. 
By the claim, all three clause edges incident to $c_j$ would be assigned to the literal vertices, so $c_j$ would receive value $0$. Its PROPavg correction would be exactly $1$,
which is less than its proportional share $\frac{3}{2}$.
This contradicts PROPavg.
Therefore every clause is satisfied.
The reduction is in polynomial time, the graph is simple, and all values belong to $\{0,1,3\}$. 
Hence, deciding whether a PROPavg orientation exists is NP-complete.
\end{proof}


\section{Conclusion}

 Our results identify a sharp boundary among proportionality relaxations in the graphical allocation model: PROPm orientations always exist and can be computed efficiently for goods, chores, and mixed instances, whereas PROPavg orientations may not.
Thus, unlike in the unconstrained setting, graphical feasibility separates PROPm from the stronger PROPavg requirement.
We also quantified the efficiency cost of imposing PROPm. For goods, the price of PROPm is exactly $2$, and it is unbounded for chores. We also showed that optimizing welfare subject to PROPm is computationally hard and that PROPm is not generally compatible with PO. These findings suggest several directions for future work, including identifying graph classes where PROPavg can be guaranteed, approximating welfare-maximizing PROPm orientations, and extending proportional fairness beyond edge-orientation models.

\section*{Acknowledgments}

Bo Li is supported in part by the Hong Kong SAR Research Grants Council (Grant No. PolyU 15224823) and the Guangdong Basic and Applied Basic Research Foundation (Grant No. 2024A1515011524).

\bibliographystyle{plain}
\IfFileExists{mylib.bib}{%
  \bibliography{mylib}
}{%
  \bibliography{../mylib}
}

\newpage
\appendix
\section*{Appendix}

\section{Results for Chores and Mixed Items}\label{sec:chores-mixed}

We now extend the model from goods to chores and mixed items. In a chores
instance, each agent $i$ has a nonnegative additive cost function $c_i$ over
incident edges. In a mixed instance, an incident edge is either a good or a chore for each
endpoint. If $e$ is a good for agent $i$, then it has value $v_i(e)\ge 0$; if
$e$ is a chore for agent $i$, then it has cost $c_i(e)>0$. We treat value-zero
edges as goods. Hence each incident edge belongs to exactly one of
$E_i^g$ and $E_i^c$.
For a mixed instance, define the utility of agent $i$ for a set
$S\subseteq E_i$ by
\[
    u_i(S)=\sum_{e\in S\cap E_i^g}v_i(e)-\sum_{e\in S\cap E_i^c}c_i(e).
\]
The proportional share of agent $i$ is
$\mathsf{PROP}_i=\frac{1}{2} u_i(E_i)$.

For an orientation $\pi$, define
\[
    E_{ij}^{g,-}(\pi)=E_{ij}\cap \pi_j\cap E_i^g
    \quad\text{and}\quad
    E_{ij}^{c,+}(\pi)=E_{ij}\cap \pi_i\cap E_i^c .
\]
Thus $E_{ij}^{g,-}(\pi)$ is the set of goods that agent $i$ loses to $j$,
while $E_{ij}^{c,+}(\pi)$ is the set of chores that agent $i$ receives from
$j$. Let
\[
    N_i^{g,-}(\pi)=\{j\in N_i:E_{ij}^{g,-}(\pi)\neq\emptyset\},
    \quad
    N_i^{c,+}(\pi)=\{j\in N_i:E_{ij}^{c,+}(\pi)\neq\emptyset\}.
\]

For convenience, define
\[
    M_i^g(\pi)=
    \max_{j\in N_i^{g,-}(\pi)}
    \min_{e\in E_{ij}^{g,-}(\pi)}v_i(e),
    \quad
    M_i^c(\pi)=
    \max_{j\in N_i^{c,+}(\pi)}
    \min_{e\in E_{ij}^{c,+}(\pi)}c_i(e),
\]
where the maximum is taken to be $0$ if the corresponding neighbor set is
empty.

\begin{definition}[PROPm for mixed items]\label{def:mixed-propm}
An orientation $\pi$ of a mixed instance is \emph{PROPm} if, for every agent
$i\in N$,
\[
    u_i(\pi_i)+\max\{M_i^g(\pi),M_i^c(\pi)\}\ge \mathsf{PROP}_i .
\]
\end{definition}

For PROPavg, define
\[
    A_i^g(\pi)=
    \frac{1}{|N_i^{g,-}(\pi)|}
    \sum_{j\in N_i^{g,-}(\pi)}
    \min_{e\in E_{ij}^{g,-}(\pi)}v_i(e),
\]
when $N_i^{g,-}(\pi)\neq\emptyset$, and set $A_i^g(\pi)=0$ otherwise. Also define
\[
    A_i^c(\pi)=
    \frac{1}{|N_i^{c,+}(\pi)|}
    \sum_{j\in N_i^{c,+}(\pi)}
    \min_{e\in E_{ij}^{c,+}(\pi)}c_i(e),
\]
when $N_i^{c,+}(\pi)\neq\emptyset$, and set $A_i^c(\pi)=0$ otherwise.

\begin{definition}[PROPavg for mixed items]\label{def:mixed-propavg}
An orientation $\pi$ of a mixed instance is \emph{PROPavg} if, for every agent
$i\in N$,
\[
    u_i(\pi_i)+\max\{A_i^g(\pi),A_i^c(\pi)\}\ge \mathsf{PROP}_i .
\]
\end{definition}

When all items are goods, the above
definitions reduce to the goods definitions. When all items are chores,
$u_i(S)=-c_i(S)$, and PROPm becomes
\[
    c_i(\pi_i)-
    \max_{j\in N_-^\pi(i)}
    \min_{e\in E_{ij}\cap \pi_i}c_i(e)
    \le \frac{1}{2} c_i(E_i).
\]
Similarly, PROPavg becomes
\[
    c_i(\pi_i)-
    \frac{1}{|N_-^\pi(i)|}
    \sum_{j\in N_-^\pi(i)}
    \min_{e\in E_{ij}\cap \pi_i}c_i(e)
    \le \frac{1}{2} c_i(E_i),
\]
where the correction is $0$ when $N_-^\pi(i)=\emptyset$.

For chores, the price of fairness is the supremum of the minimum fair social cost divided by the unconstrained minimum social cost, over instances with positive unconstrained minimum cost.
For chores, the social cost of an orientation is
$\mathsf{SC}(\pi)=\sum_{i\in N}c_i(\pi_i)$. Pareto dominance is defined with
respect to costs: an orientation $\pi'$ Pareto dominates $\pi$ if
$c_i(\pi'_i)\le c_i(\pi_i)$ for every agent $i$, and the inequality is strict
for at least one agent.

In what follows, we first present the price of PROPm for chores.

\begin{restatable}{theorem}{thmChoresPriceInfinite}\label{thm:chores-price-infinite}
For chores instances, the price of PROPm is unbounded, even for simple graphs.
\end{restatable}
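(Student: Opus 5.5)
We will build a family of simple-graph chores instances in which the unconstrained minimum social cost is positive but tiny, while every PROPm orientation must put a large cost on some agent. Unboundedness then follows by letting a parameter grow.

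The construction rests on a chore edge that is almost free for one endpoint and expensive for the other. Take an edge $e=\{i,j\}$ with $c_i(e)=\epsilon$ and $c_j(e)=1$. Unconstrained, $e$ goes to $i$ at cost $\epsilon$. We then load $i$ so that receiving $e$ breaks PROPm for $i$.

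A first attempt is the star with center $i$ and leaves $j_1,\dots,j_k$, with $c_i(e_t)=\epsilon$ and $c_{j_t}(e_t)=1$. Each leaf has $\mathsf{PROP}_{j_t}=\frac12$. A leaf that receives its edge has cost $1$ and correction $1$, so it satisfies PROPm anyway. The center has $\mathsf{PROP}_i=\frac{k\epsilon}{2}$. If it receives $r$ edges, its PROPm condition reads $r\epsilon-\epsilon\le \frac{k\epsilon}{2}$. So about half the edges must go to leaves, the fair cost is about $k/2$, and the optimum is $k\epsilon$. The ratio is then roughly $\frac{1}{2\epsilon}\to\infty$ as $\epsilon\to 0$.

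So the plan is the following.
\begin{itemize}
\item[(i)] Fix $k\ge 3$ and $\epsilon>0$, and use this star. It is a simple graph.
\item[(ii)] Show that the optimum social cost is $k\epsilon>0$, attained by giving every edge to the center.
\item[(iii)] Show that any PROPm orientation gives the center $r\le \frac{k}{2}+1$ edges. Otherwise $c_i(\pi_i)-\epsilon>\frac{k\epsilon}{2}$, since all of the center's edges have equal cost $\epsilon$ and so the correction is exactly $\epsilon$.
\item[(iv)] Conclude that at least $k-\frac{k}{2}-1=\frac{k}{2}-1$ leaves each incur cost $1$. Every PROPm orientation therefore has social cost at least $\frac{k}{2}-1$, and a PROPm orientation exists by the chores existence result (or directly, by giving the center $\lfloor k/2\rfloor$ edges).
\item[(v)] The ratio is at least $\frac{k/2-1}{k\epsilon}$, which is unbounded as $\epsilon\to 0$ for fixed $k$ (say $k=4$, giving $\frac{1}{4\epsilon}$).
\end{itemize}

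The main point to get right is the correct reading of the chores PROPm condition: the correction is a chore the agent received, taken as the max over neighbors of the min over received edges. For the center, each neighbor contributes exactly one edge, so the correction is $\epsilon$. For a leaf, the condition $1-1\le\frac12$ holds, so leaves impose no constraint. With $k=2$ the center could take both edges ($2\epsilon-\epsilon\le\epsilon$), so $k\ge 4$ is needed to force a strictly positive number of unit-cost leaves. Checking this case and the exact threshold is the step to verify carefully.
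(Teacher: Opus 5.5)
Your proof is correct and is essentially the paper's argument. The paper uses the same star with $k=3$, center cost $1$ and leaf cost $H\to\infty$ (equivalent to your $\epsilon\to 0$ by scaling), and shows the center cannot take every chore, so some leaf must pay the large cost. One small slip: your closing remark that $k\ge 4$ is needed is wrong, because for $k=3$ the bound $r\le \frac{k}{2}+1$ already gives $r\le 2$ and forces one leaf to take its chore, exactly as in the paper.
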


\begin{proof}
Fix $H>1$. Consider the star with center $0$ and leaves $1$, $2$, and $3$. For each leaf $\ell$, let $e_\ell=\{0,\ell\}$, and set $c_0(e_\ell)=1$ and $c_\ell(e_\ell)=H$.

The unconstrained optimum assigns all three chores to the center, with social cost $3$.
We claim that any PROPm orientation assigns at least one chore to a leaf. 
If all chores are assigned to the center, then the center has cost $3$ and proportional cost share $\frac{3}{2}$. Its PROPm correction can remove only one assigned chore of cost $1$, leaving adjusted cost $2 > \frac{3}{2}$, a contradiction.

Thus every PROPm orientation has social cost at least $H+2$.
This bound is tight: assign one chore to a leaf and the other two to the center. 
The leaf can remove its only assigned chore, the center has adjusted cost $2-1 \leq \frac{3}{2}$, and the other leaves receive no chore. 
Hence the minimum social cost among PROPm orientations is $H+2$.

The price of PROPm on this instance is therefore $\frac{1}{3}(H+2)$, which tends to infinity as $H\to\infty$.
\end{proof}

Next, we establish a goods-chores duality lemma, which is helpful for establishing existence and computation results for chores and mixed items.

\begin{restatable}[Goods--chores duality]{lemma}{lemGoodsChoresDuality}\label{lem:goods-chores-duality}
Let $I=(G,\mathbf c)$ be a chores instance. Construct a goods instance
$I^g=(G,\mathbf v)$ on the same graph by setting $v_i(e)=c_i(e)$ for every
agent $i$ and incident edge $e$. For an orientation $\pi$ of $I^g$, let
$\bar\pi$ be the flipped orientation, where
$\bar\pi_i=E_i\setminus \pi_i$ for every agent $i$. Then:
\begin{enumerate}[leftmargin=2.4em]
    \item $\pi$ is PROPm for $I^g$ if and only if $\bar\pi$ is PROPm for $I$.
    \item $\pi$ is PROPavg for $I^g$ if and only if $\bar\pi$ is PROPavg for $I$.
    \item $\pi$ is Pareto optimal for $I^g$ if and only if $\bar\pi$ is Pareto optimal for $I$.
    \item $\mathsf{SW}_{I^g}(\pi)+\mathsf{SC}_{I}(\bar\pi)=
    \sum_{e=\{i,j\}\in E}\big(c_i(e)+c_j(e)\big)$.
\end{enumerate}
\end{restatable}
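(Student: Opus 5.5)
The plan is to verify each of the four items directly from the definitions, using the single observation that flipping turns "what $i$ receives" into "what $i$ loses". For every agent $i$ and neighbor $j$, $\bar\pi_i=E_i\setminus\pi_i$ gives $E_{ij}\cap\bar\pi_i=E_{ij}\cap\pi_j$. It also gives $E_{ij}\cap\bar\pi_j=E_{ij}\cap\pi_i$. Consequently, the set of neighbors from whom $i$ receives at least one edge under $\bar\pi$ equals the set of neighbors to whom $i$ loses at least one edge under $\pi$, i.e.\ $N^{\bar\pi}_-(i)=N^{\pi}_+(i)$. Moreover, for each such $j$ the relevant bundle of chores $E_{ij}\cap\bar\pi_i$ coincides with the lost bundle of goods $E_{ij}\cap\pi_j$. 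Since $v_i=c_i$ edgewise, the corrections therefore agree term by term:
\[
\max_{j\in N^{\bar\pi}_-(i)}\min_{e\in E_{ij}\cap\bar\pi_i}c_i(e)=\max_{j\in N^{\pi}_+(i)}\min_{e\in E_{ij}\cap\pi_j}v_i(e).
\]
The same holds for the averaged correction, because the index sets, and hence their cardinalities, are identical.

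For items 1 and 2, I would rewrite the chores PROPm inequality. We have $c_i(\bar\pi_i)=c_i(E_i)-v_i(\pi_i)$, so the condition $c_i(\bar\pi_i)-M\le\frac12 c_i(E_i)$ is equivalent to $v_i(\pi_i)+M\ge\frac12 v_i(E_i)$, with the same correction $M$ as above. The degenerate cases also match. $N^{\pi}_+(i)=\emptyset$ means $\pi_i=E_i$, so $\bar\pi_i=\emptyset$, which has cost $0\le\frac12c_i(E_i)$. With the convention that the correction is $0$, both sides are then trivially satisfied. The PROPavg case is literally the same computation with the average in place of the max.

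For item 3, the map $\pi\mapsto\bar\pi$ is a bijection on orientations. For every $i$ it satisfies $c_i(\bar\pi_i)=c_i(E_i)-v_i(\pi_i)$, where $c_i(E_i)$ is a constant. Hence $v_i(\pi'_i)\ge v_i(\pi_i)$ holds if and only if $c_i(\bar\pi'_i)\le c_i(\bar\pi_i)$, and the strict versions correspond in the same way. So $\pi'$ dominates $\pi$ in $I^g$ exactly when $\bar\pi'$ dominates $\bar\pi$ in $I$, and PO transfers in both directions.

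For item 4, I would sum the identity $v_i(\pi_i)+c_i(\bar\pi_i)=c_i(E_i)$ over all agents, which gives $\sum_i c_i(E_i)=\sum_{e=\{i,j\}}(c_i(e)+c_j(e))$. The only step needing care is the bookkeeping of the degenerate or empty-neighbor-set conventions in items 1 and 2; everything else is mechanical.
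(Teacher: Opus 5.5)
Your proposal is correct and follows essentially the same route as the paper: identify $N^{\bar\pi}_-(i)=N^{\pi}_+(i)$ and $E_{ij}\cap\bar\pi_i=E_{ij}\cap\pi_j$ so the corrections coincide, rewrite via $c_i(\bar\pi_i)=c_i(E_i)-v_i(\pi_i)$ for items 1--3, and sum that identity over agents for item 4. Your explicit handling of the empty-neighbor-set case and your remark that flipping is a bijection on orientations (so dominance transfers in both directions) add a bit of rigor that the paper leaves implicit.
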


\begin{proof}
Fix an agent $i$. Under the flipped orientation, $i$ receives exactly the edges it does not receive under $\pi$. 
Hence $\bar\pi_i=E_i\setminus\pi_i$, $N_+^\pi(i)=N_-^{\bar\pi}(i)$, and
$E_{ij}\cap\pi_j=E_{ij}\cap\bar\pi_i$ for every $j\in N_+^\pi(i)$. Therefore the goods-PROPm condition for $i$ under $\pi$ is
equivalent to
\[
    c_i(\bar\pi_i)-
    \max_{j\in N_-^{\bar\pi}(i)}
    \min_{e\in E_{ij}\cap\bar\pi_i}c_i(e)
    \le \frac{1}{2} c_i(E_i),
\]
which is the chores-PROPm condition for $i$ under $\bar\pi$. The same argument, with the maximum replaced by the average over the same neighbor set, proves the PROPavg statement.

For Pareto optimality, note that
$c_i(\bar\pi_i)=c_i(E_i)-v_i(\pi_i)$ for every agent $i$. Thus increasing an
agent's value in the goods instance is exactly the same as decreasing the same
agent's cost in the flipped chores instance. Pareto improvements are therefore
preserved in both directions.

The welfare-cost identity follows by summing
$c_i(\bar\pi_i)=c_i(E_i)-v_i(\pi_i)$ over all agents.
\end{proof}

We now transfer the goods results to chores. 

\begin{restatable}{theorem}{thmChoresPropmExists}\label{thm:chores-propm-exists}
PROPm orientations for chores are guaranteed to exist in multigraphs. One can
compute a PROPm orientation for chores in $O(m\log m)$ time.
\end{restatable}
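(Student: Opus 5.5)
The plan is to reduce the chores statement to the goods result through the goods--chores duality of Lemma~\ref{lem:goods-chores-duality}. Given a chores instance $I=(G,\mathbf c)$ on a multigraph, I construct the goods instance $I^g=(G,\mathbf v)$ on the same multigraph by setting $v_i(e)=c_i(e)$ for every agent $i$ and incident edge $e$. These values are nonnegative, so $I^g$ is a valid goods instance on a multigraph. Theorem~\ref{thm:propmExists} therefore applies, and Algorithm~\ref{alg:main} returns a PROPm orientation $\pi$ of $I^g$.

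Next I output the flipped orientation $\bar\pi$, defined by $\bar\pi_i=E_i\setminus\pi_i$. This is well defined as an orientation: each edge $\{i,j\}$ is in exactly one of $\pi_i,\pi_j$, so it is in exactly one of $\bar\pi_i,\bar\pi_j$. In other words, flipping reverses every edge. By item~1 of Lemma~\ref{lem:goods-chores-duality}, $\bar\pi$ is PROPm for $I$. This gives existence on multigraphs.

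For the running time, the construction of $I^g$ takes $O(m)$ time, since it only copies one value per endpoint of each edge. Theorem~\ref{thm:propmExists} runs in $O(m\log m)$ time, and reversing every edge takes $O(m)$ time. The total is $O(m\log m)$.

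I see no real obstacle. The one point to check is that the duality lemma's chores-PROPm condition is stated for multigraphs with the per-neighbor minimum over parallel edges. Its proof uses $E_{ij}\cap\pi_j=E_{ij}\cap\bar\pi_i$ and $N_+^\pi(i)=N_-^{\bar\pi}(i)$, which hold edge by edge and so also hold on multigraphs.

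One boundary case needs care: an agent with $N_-^{\bar\pi}(i)=\emptyset$, where the chores correction is $0$. This corresponds to $N_+^\pi(i)=\emptyset$ in the goods instance, meaning $i$ keeps all her goods. Then $\bar\pi_i=\emptyset$, her cost is $0$, and she trivially meets the chores condition.
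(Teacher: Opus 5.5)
Your proposal is correct and follows the paper's proof: it sets $v_i(e)=c_i(e)$, computes a goods PROPm orientation in $O(m\log m)$ time via Theorem~\ref{thm:propmExists}, and flips it using item~1 of Lemma~\ref{lem:goods-chores-duality}. Your extra checks on multigraphs and on the case $N_-^{\bar\pi}(i)=\emptyset$ are sound, though the lemma already covers them.
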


\begin{proof}
Given a chores instance $I=(G,\mathbf c)$, construct the goods instance
$I^g=(G,\mathbf v)$ by setting $v_i(e)=c_i(e)$ for every incident edge. By
Theorem~\ref{thm:propmExists}, one can compute a PROPm orientation $\pi$ of
$I^g$ in $O(m\log m)$ time. By Lemma~\ref{lem:goods-chores-duality}, the
flipped orientation $\bar\pi$ is PROPm for the chores instance. Flipping all
edges takes linear time, so the total running time remains $O(m\log m)$.
\end{proof}

\begin{restatable}{theorem}{thmChoresPropmApxHard}\label{thm:chores-propm-apx-hard}
For additive chores costs, computing a PROPm orientation with minimum social
cost is APX-hard in multigraphs.
\end{restatable}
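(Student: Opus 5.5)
The plan is to transfer Theorem~\ref{thm:optimization:welfare:multiple} through the goods--chores duality of Lemma~\ref{lem:goods-chores-duality}. Welfare and cost are linked by an additive constant, so the transfer is not automatically approximation-preserving. I will therefore re-examine the goods reduction rather than treat it as a black box.

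The first step is the exact correspondence. Given a chores instance $I=(G,\mathbf c)$, let $I^g$ be the goods instance with $v_i(e)=c_i(e)$. By item~1 of Lemma~\ref{lem:goods-chores-duality}, flipping $\pi\mapsto\bar\pi$ is a bijection between PROPm orientations of $I^g$ and PROPm orientations of $I$. By item~4,
\[
\mathsf{SC}_I(\bar\pi)=T-\mathsf{SW}_{I^g}(\pi), \qquad T=\sum_{e=\{i,j\}\in E}\bigl(c_i(e)+c_j(e)\bigr).
\]
Hence minimizing social cost over PROPm orientations of $I$ is exactly maximizing welfare over PROPm orientations of $I^g$, and exact NP-hardness transfers immediately.

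For APX-hardness I need the gap to survive the subtraction of $T$. I would go back to the APX-hard family behind Theorem~\ref{thm:optimization:welfare:multiple}, presumably a gap-preserving reduction from a bounded-occurrence MAX problem with thresholds $\alpha$ versus $(1-\epsilon)\alpha$ on the optimal PROPm welfare. I would then check that on this family $T=O(\alpha)$. This should hold if the gadgets have constant size per variable or clause with bounded values, and if $\alpha$ is linear in the instance size. Under that condition the welfare gap $\epsilon\alpha$ becomes a cost gap: the chores optimum is at most $T-\alpha$ in YES instances and at least $T-(1-\epsilon)\alpha$ in NO instances. The relative gap is then
\[
\frac{\epsilon\alpha}{T-\alpha},
\]
which is a constant whenever $T-\alpha=\Theta(\alpha)$. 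That gives APX-hardness for minimum-cost PROPm chores orientations on multigraphs.

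The main obstacle is the degenerate case $T-\alpha\approx 0$ or $T\gg\alpha$, where the transferred gap could vanish. If the goods reduction does not give $T-\alpha=\Theta(\alpha)$ directly, I would pad it. One option is to add disjoint constant-size gadgets whose cost is forced in any PROPm orientation, which controls the ratio. Another is to rescale the values of the original gadget so that the lost welfare corresponds to a constant fraction of the minimum cost.

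If neither fix works cleanly, I would build a direct L-reduction from the same source problem in the chores setting. I would use the flipped gadgets, where each literal endpoint's cost is bounded, and count unsatisfied constraints as additive extra cost. Constant-size gadgets keep the L-reduction parameters bounded.
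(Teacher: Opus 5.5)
Your proposal is correct and is essentially the paper's proof: flip via the goods--chores duality (Lemma~\ref{lem:goods-chores-duality}), then check that the additive welfare gap of the goods APX-hardness reduction remains a constant fraction of the cost after subtracting from $T$. The paper does this check concretely --- $T=\frac{103}{2}m$ and $W^\star=\frac{73}{2}m$, so YES-instances have PROPm cost at most $15m+2\epsilon_0 m$ and NO-instances at least $15m+(\frac{1}{1016}-\epsilon_0)m$ --- so your padding and direct L-reduction fallbacks are unnecessary (and $T-\alpha\approx 0$ is not a danger, since it only enlarges the relative gap; the only threat is $T-\alpha\gg\epsilon\alpha$).
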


\begin{proof}
Apply Lemma~\ref{lem:goods-chores-duality} to the APX-hardness construction
for goods. In that construction, the total value over all edge-endpoint pairs is
\[
    T=\sum_{e=\{i,j\}\in E}\big(v_i(e)+v_j(e)\big)=\frac{103}{2}m,
\]
and the canonical welfare is $W^\star=\frac{73}{2}m$. Hence
$T-W^\star=15m$.

In the completeness case, there is a PROPm goods orientation with welfare at
least $W^\star-2\epsilon_0m$. Flipping this orientation gives a PROPm chores
orientation with social cost at most $15m+2\epsilon_0m$.

In the soundness case, every PROPm goods orientation has welfare at most $W^\star-(\frac{1}{1016}-\epsilon_0)m$. 
Therefore, every PROPm chores orientation has
social cost at least
\[
    15m+\left(\frac{1}{1016}-\epsilon_0\right)m .
\]
With $\epsilon_0=\frac{1}{4\cdot 1016}$, the two bounds differ by $\frac{1}{4\cdot 1016}m$, giving a constant multiplicative gap. 
Thus, a PTAS for minimum-social-cost PROPm orientations for chores would distinguish the two cases of the \textsc{Max-2P2N-3SAT} gap theorem in polynomial time, unless $\text{P} = \text{NP}$.
\end{proof}

\begin{restatable}{theorem}{thmChoresStarNphard}\label{thm:chores-star-nphard}
For additive chores costs, computing a social cost-minimizing PROPm orientation is
NP-hard even when the graph is a star.
\end{restatable}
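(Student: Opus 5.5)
The plan is to transfer the star NP-hardness for goods (Theorem~\ref{thm:optimization:welfare:simple}, whose construction the paper describes as a star) through the goods--chores duality of Lemma~\ref{lem:goods-chores-duality}, in the same way Theorem~\ref{thm:chores-propm-apx-hard} is derived from the goods APX-hardness.

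Start from an instance $I^g=(G,\mathbf v)$ produced by the goods star reduction, and define the chores instance $I=(G,\mathbf c)$ on the same star by setting $c_i(e)=v_i(e)$. This construction is clearly polynomial, and it keeps the graph a star, since the graph itself is unchanged.

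By item~1 of Lemma~\ref{lem:goods-chores-duality}, the map $\pi\mapsto\bar\pi$ is a bijection between PROPm orientations of $I^g$ and PROPm orientations of $I$. By item~4, we have
\[
\mathsf{SC}_I(\bar\pi)=T-\mathsf{SW}_{I^g}(\pi),\qquad T=\sum_{e=\{i,j\}\in E}\big(c_i(e)+c_j(e)\big),
\]
and $T$ is a constant that does not depend on the orientation. It follows that $\pi$ maximizes welfare among PROPm goods orientations exactly when $\bar\pi$ minimizes social cost among PROPm chores orientations. So any polynomial-time algorithm for the chores optimization problem on stars would solve the goods problem on stars: run it on $I$ and flip its output. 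Equivalently, in decision form, asking whether some PROPm goods orientation has welfare at least $K$ is the same as asking whether some PROPm chores orientation has cost at most $T-K$.

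The only step that needs care is confirming that the NP-hardness in Theorem~\ref{thm:optimization:welfare:simple} really holds for star graphs. The paper states this in Section~\ref{sec:optimization:welfare} ("even on simple graphs, for example a star"), and the proof is in Appendix~\ref{app:proofs:optimization}. If that reduction were only for general simple graphs, I would instead give a direct reduction from \textsc{Partition} on a star. The center would value or cost each leaf edge by the corresponding integer, and each leaf would have a large cost for its edge. The PROPm condition at the center would then force a near-balanced split, so the optimal cost would reveal whether a perfect partition exists. I expect the duality argument to suffice, so this backup should not be needed.
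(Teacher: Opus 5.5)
Your proposal is correct and follows the paper's proof: the paper takes the \textsc{Subset Sum} reduction behind Theorem~\ref{thm:optimization:welfare:simple}, sets $c_i(e)=v_i(e)$ on the same graph, and uses items 1 and 4 of Lemma~\ref{lem:goods-chores-duality} to show that ``a PROPm orientation with welfare at least $K$ exists'' holds exactly when ``a PROPm orientation with cost at most $T-K$ exists''. That reduction is indeed a star (center $c$ with item leaves $x_i$ and dummy leaves $y_t$), so your backup \textsc{Partition} construction is not needed.
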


\begin{proof}
Apply Lemma~\ref{lem:goods-chores-duality} to the simple-star reduction from
\textsc{Subset Sum}. Let $I^g$ be the goods instance in that reduction, let
$K$ be the welfare threshold, and let
\[
    T=\sum_{e=\{i,j\}\in E}\big(v_i(e)+v_j(e)\big).
\]
Construct the chores instance $I$ on the same star by setting $c_i(e)=v_i(e)$.
For every goods orientation $\pi$, Lemma~\ref{lem:goods-chores-duality} gives
$\mathsf{SW}_{I^g}(\pi)+\mathsf{SC}_{I}(\bar\pi)=T$. Hence $I^g$ admits a
PROPm orientation with welfare at least $K$ if and only if $I$ admits a PROPm
orientation with social cost at most $T-K$. The reduction preserves the star
graph, so the chores problem is NP-hard on stars.
\end{proof}

\begin{restatable}{proposition}{propChoresPropmPoIncompatible}\label{prop:chores-propm-po-incompatible}
PROPm orientations for chores may not be compatible with Pareto optimality,
even when there are only two agents.
\end{restatable}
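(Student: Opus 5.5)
The plan is to derive the statement from the goods counterexample through the goods--chores duality, rather than build a new chores instance. Let $I^g=(G,\mathbf v)$ be the goods instance from Proposition~\ref{prop:propmPoIncompatibleTwoAgents}, in which no PROPm orientation is PO. Its label indicates it uses only two agents, so $G$ consists of parallel edges between agents $1$ and $2$; I would first confirm this against the appendix construction. Define the chores instance $I=(G,\mathbf c)$ on the same multigraph by $c_i(e)=v_i(e)$ for every agent $i$ and incident edge $e$. This keeps the number of agents at two.

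The argument is by contradiction. Suppose some orientation $\sigma$ of $I$ is both PROPm and PO for chores. Let $\pi$ be the flipped orientation, with $\pi_i=E_i\setminus\sigma_i$, so that $\bar\pi=\sigma$ because flipping is an involution. Lemma~\ref{lem:goods-chores-duality}(1) shows that $\pi$ is PROPm for $I^g$, since its flip $\sigma$ is PROPm for $I$. Part~(3) of the same lemma shows that $\pi$ is PO for $I^g$. This contradicts the choice of $I^g$, so no orientation of $I$ is both PROPm and PO.

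The only step needing care is that the flip really gives a bijection between orientations. In the two-agent multigraph every edge lies in $E_{12}$, so $\bar\pi_1=\pi_2$ and $\bar\pi_2=\pi_1$. Hence $\bar\pi$ is a valid orientation and the correspondence is well defined, which the duality lemma already uses implicitly.

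I expect no real obstacle, as long as the goods example is indeed a two-agent instance. If it instead uses more agents, my fallback is a direct two-agent chores instance in which every PO orientation gives one agent all her chores. I would try three parallel chores $e_1,e_2,e_3$ with $c_1=(1,1,1)$ and $c_2=(1,1,\epsilon)$ for small $\epsilon>0$. A PROPm orientation for agent $1$ can leave her at most two chores, because $3-1>\tfrac32$. I would then search over the few orientations that are PROPm and check that each is Pareto dominated. I would tune the costs until that last check succeeds, since that check is the delicate part.
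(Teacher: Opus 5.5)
Your proposal is correct and follows essentially the same route as the paper: define $c_i(e)=v_i(e)$ on the goods instance of Proposition~\ref{prop:propmPoIncompatibleTwoAgents}, then use parts (1) and (3) of Lemma~\ref{lem:goods-chores-duality} to flip any PROPm and Pareto optimal chores orientation into a PROPm and Pareto optimal goods orientation, which is a contradiction. That goods instance is indeed two agents joined by three parallel edges $p,q,r$, so your fallback construction is never needed.
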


\begin{proof}
Apply Lemma~\ref{lem:goods-chores-duality} to the two-agent goods instance in
Proposition~\ref{prop:propmPoIncompatibleTwoAgents}. Let $I^g$ be that goods
instance, and construct the chores instance $I$ on the same multigraph by
setting $c_i(e)=v_i(e)$ for every agent $i$ and incident edge $e$.

If $I$ admitted an orientation that is both PROPm and Pareto optimal, then
flipping it would give an orientation of $I^g$ that is both PROPm and Pareto
optimal, by Lemma~\ref{lem:goods-chores-duality}. This contradicts
Proposition~\ref{prop:propmPoIncompatibleTwoAgents}.
\end{proof}

\begin{restatable}{theorem}{thmChoresPropmPoHard}\label{thm:chores-propm-po-hard}
For additive chores costs, it is NP-hard to decide whether there exists an
orientation that is both PROPm and Pareto optimal, even when all costs belong
to $\{0,1,2,4\}$.
\end{restatable}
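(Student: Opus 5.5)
We reduce to the goods case via Lemma~\ref{lem:goods-chores-duality}, following the same pattern as Theorems~\ref{thm:chores-propm-apx-hard} and~\ref{thm:chores-star-nphard} and Proposition~\ref{prop:chores-propm-po-incompatible}. Take the NP-hard goods family from Theorem~\ref{thm:hard:PO:PROPm}, whose values lie in $\{0,1,2,4\}$. Given a goods instance $I^g=(G,\mathbf v)$ from that reduction, we build the chores instance $I=(G,\mathbf c)$ on the same multigraph by setting $c_i(e)=v_i(e)$ for every agent $i$ and incident edge $e$. The costs then lie in $\{0,1,2,4\}$, and the construction is clearly polynomial (in fact linear).

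The core claim is that $I^g$ has an orientation that is both PROPm and PO if and only if $I$ does. The flip map $\pi\mapsto\bar\pi$, with $\bar\pi_i=E_i\setminus\pi_i$, is a bijection between orientations of $G$, and it is an involution. For the forward direction, suppose $\pi$ is PROPm and PO for $I^g$. Items~1 and~3 of Lemma~\ref{lem:goods-chores-duality} give that $\bar\pi$ is PROPm and PO for $I$. For the reverse direction, take an orientation $\sigma$ of $I$ that is PROPm and PO. Write $\sigma=\bar\pi$ with $\pi=\bar\sigma$, and apply the same two items in the other direction to conclude that $\pi$ is PROPm and PO for $I^g$. Since deciding the goods question on this family is NP-hard, the chores question is NP-hard as well.

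The one point needing care concerns zero values. In the goods instance they are goods, while in the chores instance they are chores of cost $0$. The duality lemma is stated for pure chores instances with nonnegative costs, so a cost of $0$ is allowed there. It also uses the chores-specific PROPm form, where the correction ranges over $N_-^{\bar\pi}(i)$. That form coincides with Definition~\ref{def:mixed-propm} restricted to chores, so no mixed-instance subtlety arises. I do not expect any other obstacle: the hardness is inherited entirely through the bijection.
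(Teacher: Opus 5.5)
Your proposal is correct and is essentially the paper's proof. Both apply Lemma~\ref{lem:goods-chores-duality} to the goods reduction of Theorem~\ref{thm:hard:PO:PROPm} with $c_i(e)=v_i(e)$, and use that the flip is a bijection preserving PROPm and Pareto optimality, so costs stay in $\{0,1,2,4\}$.

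One correction to your side remark. When cost-$0$ edges are present, the chores-specific PROPm used in the lemma does not coincide with Definition~\ref{def:mixed-propm}. The mixed definition treats cost-$0$ edges as goods and excludes them from the chore correction; for example, a literal agent receiving $s_i$ and its cost-$0$ $p$-edge from the same neighbor gets correction $4$ there rather than $0$. So the theorem, like the paper's proof, should be read with the chores-specific form.
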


\begin{proof}
Reduce from the corresponding goods problem. Given a goods instance
$I^g=(G,\mathbf v)$, construct the chores instance $I=(G,\mathbf c)$ on the
same graph by setting $c_i(e)=v_i(e)$ for every incident edge. By
Lemma~\ref{lem:goods-chores-duality}, an orientation $\pi$ is both PROPm and
Pareto optimal in $I^g$ if and only if the flipped orientation $\bar\pi$ is
both PROPm and Pareto optimal in $I$. The construction preserves the graph and
the set of numerical values. Since the goods problem is NP-hard for values in
$\{0,1,2,4\}$, the chores problem is NP-hard for costs in $\{0,1,2,4\}$.
\end{proof}

\begin{restatable}{theorem}{thmChoresPropavgNonexist}\label{thm:chores-propavg-nonexist}
PROPavg orientations for chores may not exist, even for simple graphs and
binary costs.
\end{restatable}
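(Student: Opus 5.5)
We will derive Theorem~\ref{thm:chores-propavg-nonexist} from the goods counterexample of Proposition~\ref{prop:prop-avgNotExist} via the goods--chores duality of Lemma~\ref{lem:goods-chores-duality}.

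Let $I^g=(G,\mathbf v)$ be the goods instance of Figure~\ref{fig:PROPavg}. It consists of two copies of a $K_6$ whose roots $1$ and $1'$ are joined by an edge, and all values are in $\{0,1\}$. We build the chores instance $I=(G,\mathbf c)$ on the same simple graph by setting $c_i(e)=v_i(e)$ for every agent $i$ and every incident edge $e$. The costs are then binary and the graph is simple, which matches what the theorem asks for.

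Suppose, for contradiction, that $I$ admits a PROPavg orientation $\sigma$. Define the flipped orientation $\pi$ by $\pi_i=E_i\setminus\sigma_i$ for every agent $i$. This is a valid orientation, since each edge is assigned to exactly the endpoint that did not receive it under $\sigma$. Flipping is an involution, so $\bar\pi=\sigma$. Part~2 of Lemma~\ref{lem:goods-chores-duality} then says that $\pi$ is PROPavg for $I^g$ if and only if $\bar\pi=\sigma$ is PROPavg for $I$. Hence $\pi$ would be a PROPavg orientation of $I^g$, contradicting Proposition~\ref{prop:prop-avgNotExist}.

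There is one point to check. The duality lemma needs the chores PROPavg condition with a zero-cost edge treated as a chore, so that the chores correction ranges over the neighbor set $N_-^{\bar\pi}(i)$. The mixed-items framework instead treats value-zero edges as goods, and in this instance many costs are $0$. Since the theorem is stated for chores instances, we use the pure-chores form of PROPavg displayed after Definition~\ref{def:mixed-propavg}. That form averages over $N_-^\pi(i)$, the set of all neighbors from whom $i$ receives at least one edge, including zero-cost ones. This is exactly the form the duality lemma uses, so the argument goes through as stated.
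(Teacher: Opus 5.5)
Your proof is correct and is the paper's own argument: the paper obtains this theorem directly from Proposition~\ref{prop:prop-avgNotExist} and part~2 of Lemma~\ref{lem:goods-chores-duality}, exactly as you do by setting $c_i(e)=v_i(e)$ on the instance of Figure~\ref{fig:PROPavg} and using that flipping is an involution. Your remark that zero-cost edges must be counted in $N_-^{\pi}(i)$ is a correct and worthwhile clarification the paper leaves implicit. This is the pure-chores form that the duality lemma uses. Under the mixed-items convention, which treats such edges as value-zero goods, a root receiving the bridge would already be satisfied via the bridge alone, and the counterexample would collapse.
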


\Cref{thm:chores-propavg-nonexist} follows from Proposition~\ref{prop:prop-avgNotExist} and Lemma~\ref{lem:goods-chores-duality}.


\begin{restatable}{theorem}{thmChoresPropavgNpc}\label{thm:chores-propavg-npc}
Deciding whether a PROPavg orientation exists for chores is NP-complete,
even for simple graphs with costs in $\{0,1,3\}$.
\end{restatable}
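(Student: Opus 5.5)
The plan is to reduce from the goods problem via the goods--chores duality of \Cref{lem:goods-chores-duality}, in the same way \Cref{thm:chores-propm-po-hard} was derived from its goods counterpart.

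\emph{Membership in NP.} Given an orientation $\pi$ of a chores instance, we check for every agent $i$ that
\[
c_i(\pi_i)-\frac{1}{|N_-^\pi(i)|}\sum_{j\in N_-^\pi(i)}\min_{e\in E_{ij}\cap\pi_i}c_i(e)\le \tfrac12 c_i(E_i),
\]
with the correction taken to be $0$ when $N_-^\pi(i)=\emptyset$. This requires only one pass over the incident edges of each agent, so it runs in polynomial time and serves as a certificate check.

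\emph{Hardness.} We reduce from the goods decision problem, which \Cref{thm:prop-avgNPC} shows is NP-hard on simple graphs with values in $\{0,1,3\}$. Given a goods instance $I^g=(G,\mathbf v)$ of that form, we build the chores instance $I=(G,\mathbf c)$ on the same graph with $c_i(e)=v_i(e)$ for every agent $i$ and incident edge $e$. This map is computable in linear time, keeps the graph simple, and keeps all numerical values in $\{0,1,3\}$.

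\emph{Correctness.} We need $I^g$ to admit a PROPavg orientation exactly when $I$ does.
\begin{itemize}
\item If $\pi$ is PROPavg for $I^g$, then the flipped orientation $\bar\pi$ is PROPavg for $I$ by item~2 of \Cref{lem:goods-chores-duality}.
\item Conversely, let $\sigma$ be PROPavg for $I$. Set $\pi=\bar\sigma$, the flip of $\sigma$. Flipping is an involution, so $\bar\pi=\sigma$. The same item~2 then gives that $\pi$ is PROPavg for $I^g$.
\end{itemize}
So the two decision problems have the same answer.

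The only point needing care is the zero-cost edges. \Cref{lem:goods-chores-duality} treats value-zero edges in the pure chores model simply as cost-$0$ items. The chores PROPavg definition displayed above counts every neighbor in $N_-^\pi(i)$, including those from which $i$ receives only cost-$0$ edges. Under $v_i=c_i$, this neighbor set is exactly $N_+^{\bar\pi}(i)$ in the goods instance, where goods of value $0$ are likewise counted. The averages therefore match term by term, and the equivalence in item~2 applies without modification. I do not expect any other obstacle. Together with the NP membership argument, this gives NP-completeness for simple graphs with costs in $\{0,1,3\}$.
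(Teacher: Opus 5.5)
Your proposal is correct and follows the paper's proof essentially verbatim. Membership in NP is by direct verification, and hardness comes from applying \Cref{lem:goods-chores-duality} with $c_i=v_i$ to the goods reduction of \Cref{thm:prop-avgNPC}, which preserves simplicity and the value set $\{0,1,3\}$; your note that zero-cost edges must be counted in $N_-^\pi(i)$ under the pure-chores definition, rather than treated as zero-value goods as in the mixed model, is the right reading and is what makes item~2 of the lemma apply.
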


\begin{proof}
Membership in NP is immediate. For hardness, apply the transformation in Lemma~\ref{lem:goods-chores-duality} to the goods instance in the NP-completeness reduction for
PROPavg. By Lemma~\ref{lem:goods-chores-duality}, the goods instance admits a
PROPavg orientation if and only if the flipped chores instance admits one. The
construction preserves simplicity of the graph and preserves all numerical
values.
\end{proof}

\vspace{-0.8em}

\begin{restatable}{theorem}{thmChoresStarFptas}\label{thm:chores-star-fptas}
For additive chores costs, there is an FPTAS for computing a minimum-social-cost
PROPm orientation when the graph is a star.
\end{restatable}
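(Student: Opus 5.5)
The plan is to reduce to the goods setting through Lemma~\ref{lem:goods-chores-duality}, as for Theorem~\ref{thm:chores-star-nphard}, and then build an FPTAS for the star directly. Given a chores instance $I$ on a star with center $0$, form the goods instance $I^g$ with $v_i(e)=c_i(e)$. By the duality lemma, $\pi\mapsto\bar\pi$ is a bijection between PROPm orientations of $I^g$ and of $I$, and $\mathsf{SC}_I(\bar\pi)=T-\mathsf{SW}_{I^g}(\pi)$. Minimizing cost is therefore equivalent to maximizing goods welfare. However, a multiplicative $(1-\epsilon)$ approximation of welfare does not transfer to a $(1+\epsilon)$ approximation of cost, because $T-\mathsf{SW}$ can be much smaller than $T$. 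So I will not simply invoke the goods FPTAS. Instead I will design the FPTAS on the cost side directly, reusing the structure of the star.

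First I characterize the PROPm orientations on a star in chores form. Each leaf $\ell$ has a single incident edge $e_\ell$. If the leaf receives $e_\ell$, its correction removes $e_\ell$, so it always satisfies PROPm. Hence the only constraint is at the center. Let $S$ be the set of edges the center receives. The center needs
\[
c_0(S)-\max_{e\in S}c_0(e)\le \tfrac12 c_0(E_0),
\]
or $S=\emptyset$. The social cost is $c_0(S)+\sum_{\ell\notin S}c_\ell(e_\ell)$.

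Next I guess the edge $e^\ast\in S$ of maximum center cost, which takes $n$ guesses, plus the case $S=\emptyset$. Given $e^\ast$, only edges with $c_0(e)\le c_0(e^\ast)$ may be added. Adding edge $e$ to $S$ changes the cost by $c_0(e)-c_\ell(e)$, so edges with $c_0(e)\le c_\ell(e)$ are beneficial. The problem then becomes a knapsack: choose $S'$ among the other allowed edges with $c_0(S')\le B:=\frac12 c_0(E_0)$ so as to maximize the savings $\sum_{e\in S'}(c_\ell(e)-c_0(e))$ over beneficial edges. The cost objective is a constant minus these savings, so again an additive error in the savings matters relative to the optimal cost, not relative to the savings.

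The main obstacle is exactly this: obtaining a relative guarantee on the cost rather than on the savings. My plan is to handle it with a cost-side knapsack dynamic program. Each beneficial edge $e$ is either in $S$, paying $c_0(e)$, or left to its leaf, paying $c_\ell(e)$. I round the pay values to multiples of $\delta=\epsilon\,\mathrm{LB}/n$, where $\mathrm{LB}$ is a lower bound on the optimum within a factor $n$. One candidate is the maximum over edges of $\min(c_0,c_\ell)$, which is a valid lower bound since every edge costs at least that much in any orientation. The DP then computes, for each rounded total cost, the minimum center load $c_0(S')$ exactly. The table has polynomially many states, namely $O(n^2/\epsilon)$ cost levels. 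The rounding error totals at most $n\delta=\epsilon\,\mathrm{LB}\le\epsilon\,\mathrm{OPT}$. Feasibility is checked exactly, because the load is stored exactly rather than rounded.

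Finally, I take the best solution over all guesses of $e^\ast$ and over $S=\emptyset$. This gives cost at most $(1+\epsilon)\mathrm{OPT}$ in time polynomial in $n$ and $1/\epsilon$.
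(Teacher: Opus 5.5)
Your structure matches the paper's: only the center is constrained, you guess the center's most expensive received edge $e^\ast$, and you run a cost-side DP that stores the exact center load for each rounded cost level. You are also right that the goods FPTAS does not transfer through the duality lemma. The gap is in the scaling step, which breaks the polynomial running time. To bound the table by $O(n^2/\epsilon)$ levels with $\delta=\epsilon\,\mathrm{LB}/n$, you need $\mathrm{OPT}\le n\,\mathrm{LB}$. But $\mathrm{LB}=\max_e\min(c_0(e),c_\ell(e))$ is within a factor $n$ only of the \emph{unconstrained} optimum $\sum_e\min(c_0(e),c_\ell(e))$. The PROPm constraint at the center can force an edge onto a leaf for which it is far more expensive; this is exactly why the price of PROPm for chores is unbounded.

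Take the star of Theorem~\ref{thm:chores-price-infinite}: three edges with $c_0=1$ and $c_\ell=H$. There $\mathrm{LB}=1$ but $\mathrm{OPT}=H+2$, so your DP needs about $3(H+2)/\epsilon$ levels, which is exponential in the bit length of $H$. If you cap the table at $O(n^2/\epsilon)$ levels instead, every feasible solution lies above the cap and nothing is returned. Fixing $e^\ast$ does not help: in that example every guess still has all per-edge minima equal to $1$ and optimum $H+2$. Separately, $\mathrm{LB}=0$ gives $\delta=0$. That case is trivial, since then $\mathrm{OPT}=0$ (send each edge to a zero-cost endpoint), but it must be handled on its own.

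What you are missing is a correct estimate of the scale of the \emph{constrained} optimum for each guess. The paper gets it by geometric search. After testing for a zero-cost feasible assignment, it tries $\Gamma=2^t z_{\min}$ up to $n z_{\max}$. For each $\Gamma$ it rounds with a step proportional to $\epsilon\Gamma/n$ and caps the DP at $O(n/\epsilon)$ levels. It then relies on the round with $C_h\le\Gamma<2C_h$. This takes $O(\log(n z_{\max}/z_{\min}))$ rounds, which is polynomial in the input length.

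A strongly polynomial alternative, closer to your plan, is to also guess the largest single cost $z^\ast$ paid in the optimal solution (at most $2n$ candidates). Forbid every choice costing more than $z^\ast$ and round with $\delta=\epsilon z^\ast/n$. Then $z^\ast\le\mathrm{OPT}\le n z^\ast$ holds by construction and the optimum stays feasible. Your $O(n^2/\epsilon)$ table bound and your $n\delta\le\epsilon\,\mathrm{OPT}$ error analysis then go through unchanged.
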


\begin{proof}
Let the star have center $c$ and leaves $\ell_1,\ldots,\ell_m$, and let
$e_i=\{c,\ell_i\}$. Write $a_i=c_c(e_i)$ and
$b_i=c_{\ell_i}(e_i)$, and let $A=\sum_i a_i$.

Every leaf is PROPm in every orientation. If a leaf receives its only edge, it
can remove that edge in the PROPm correction. If it does not receive the edge,
its cost is zero. Thus only the center needs to be checked.

Let $S$ be the set of edges assigned to the center. The center is PROPm if and
only if either $S=\emptyset$, or
\[
    \sum_{i\in S}a_i-\max_{i\in S}a_i\le \frac{1}{2}A .
\]
The social cost of the orientation is
$\sum_{i\in S}a_i+\sum_{i\notin S}b_i$.

We guess an edge $h\in S$ that has maximum cost for the center among edges in
$S$. Once $h$ is fixed, every other edge assigned to the center has cost at
most $a_h$ for the center, and the PROPm condition becomes
\[
    \sum_{i\in S\setminus\{h\}}a_i\le \frac{1}{2}A .
\]
For this guess, the remaining problem is to choose
$T\subseteq\{i\neq h:a_i\le a_h\}$ with $\sum_{i\in T}a_i \le \frac{1}{2}A$ so as to
minimize
\[
    a_h+\sum_{i\in T}a_i+\sum_{i\notin T\cup\{h\}}b_i .
\]
We also consider the orientation that assigns all edges to leaves.

For each fixed $h$, regard every edge as having two assignment choices, with costs $a_i$ and $b_i$. Force $h$ to the center and each $i\ne h$ with $a_i>a_h$ to its leaf. The resource consumed by assigning $i\ne h$ to the center is $a_i$; all other choices consume zero resource. The capacity is $\frac{1}{2}A$.

First test for a feasible zero-cost assignment: among the allowed zero-cost choices for each edge, choose one of minimum resource consumption and check the capacity. If one exists, return it. Otherwise let $z_{\min}$ and $z_{\max}$ be the smallest and largest positive endpoint costs. Try bounds $\Gamma=2^t z_{\min}$, starting at $t=0$ and ending at the first bound at least $m z_{\max}$. For $0<\epsilon<1$, set $\kappa=\frac{\epsilon}{2m}\Gamma$ and round each assignment cost $z$ down to $\lfloor \frac{z}{\kappa}\rfloor$. A dynamic program processes the edges and stores, for each total rounded cost up to $\lfloor\frac{\Gamma}{\kappa}\rfloor$, the minimum resource consumption and a corresponding assignment. Return the least rounded-cost feasible assignment, if any.

For a guess $h$ containing an optimal positive-cost solution of cost $C_h$, one of the tested bounds satisfies $C_h\le\Gamma<2C_h$. The optimum fits within the rounded-cost cap, and rounding introduces total error less than $m\kappa$. The returned actual cost is at most $C_h+m\kappa\le(1+\epsilon)C_h$. Taking the best over all bounds and guesses, including the all-to-leaves orientation, proves the claim. There are $O(\frac{m}{\epsilon})$ rounded-cost states and $O(1+\log(\frac{m}{z_{\min}}z_{\max}))$ bounds. The running time is polynomial in the input length and $\frac{1}{\epsilon}$.
\end{proof}

We finally turn to mixed items. The definitions above are chosen so that the
pure-goods and pure-chores notions are obtained as special cases.

\begin{restatable}{theorem}{thmMixedPropmExists}\label{thm:mixed-propm-exists}
Every mixed instance admits a PROPm orientation. Moreover, such an orientation
can be computed in polynomial time.
\end{restatable}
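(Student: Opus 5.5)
The plan is to re-run the goods machinery of Algorithm~\ref{alg:main} on a ``favorability'' reformulation, rather than to reduce globally to goods. A single global flip as in Lemma~\ref{lem:goods-chores-duality} does not work, because one edge can be a good for one endpoint and a chore for the other.

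\textbf{Reformulation.} For an agent $i$ and an edge $e\in E_i$, let $|u_i(e)|$ denote $v_i(e)$ if $e\in E_i^g$ and $c_i(e)$ if $e\in E_i^c$. For a set $X\subseteq E_{ij}$ given to $i$, define $i$'s \emph{favorable set} $F_i(X)=(X\cap E_i^g)\cup\big((E_{ij}\setminus X)\cap E_i^c\big)$, and symmetrically $F_j(X)$ for $j$, who receives $E_{ij}\setminus X$. A direct computation gives
\[
u_i(X)-\tfrac12 u_i(E_{ij})=|u_i|(F_i(X))-\tfrac12|u_i|(E_{ij}).
\]
So locally, every agent behaves like a goods agent who ``owns'' exactly her favorable edges. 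Moreover, the edges of $E_{ij}$ that are unfavorable for $i$ are precisely the goods $i$ loses to $j$ and the chores $i$ receives from $j$. Hence both $M_i^g(\pi)$ and $M_i^c(\pi)$ are at least the minimum of $|u_i|$ over the unfavorable edges with the relevant neighbor whenever that set is nonempty and has the relevant type. In particular, $\max\{M_i^g,M_i^c\}$ is at least $\min_{e\in E_{ij}\setminus F_i}|u_i(e)|$ for any single neighbor $j$. This means it suffices to prove the analogue of Lemma~\ref{lem:reduced} in which the correction is the minimum of $|u_i|$ over the unfavorable edges of one neighbor.

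\textbf{Per-pair step.} For each pair I run \textsc{Min-Majority-Partition} with weights $|u_i|$ to get an inclusion-minimal set $F\subseteq E_{ij}$ with $|u_i|(F)\ge\frac12|u_i|(E_{ij})$. I then realize it by the assignment $X_F$ that gives $i$ the goods of $F$ and the chores outside $F$, so that $F_i(X_F)=F$. The complementary assignment $\overline{X_F}$ gives $i$ the favorable set $E_{ij}\setminus F$. By minimality, $i$'s deficit under $\overline{X_F}$ is at most $\min_{e\in F}|u_i(e)|$, and $F$ is exactly $i$'s unfavorable set there. This mirrors the weight bound in the proof of Lemma~\ref{lem:reduced}.

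For $j$, I let $Q=F_j(\overline{X_F})$. The case analysis then runs as follows.
\begin{itemize}
\item If $|u_j|(Q)\ge\frac12|u_j|(E_{ij})$, then both endpoints are at least half-satisfied under $X_F$, and the pair is resolved. This is the analogue of (C1).
\item Otherwise, $j$ prefers $\overline{X_F}$, and I try the analogue of the (C2) exchange. I look for a single edge $e\in F$ whose reassignment moves it from $i$-favorable to $i$-unfavorable while keeping $j$ at least half-satisfied. Minimality of $F$ then makes $i$ strictly more than half-satisfied, exactly as in Lemma~\ref{lem:pair-wise}.
\item If no such edge exists, the pair becomes a virtual edge. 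Orienting it toward $u_i$ means $X_F$, and toward $u_j$ means $\overline{X_F}$. The weights are $w_i=2(|u_i|(F)-\frac12|u_i|(E_{ij}))$ and $w_j=2(|u_j|(F_j(X_F))-\frac12|u_j|(E_{ij}))$.
\end{itemize}

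\textbf{Main obstacle.} The step I expect to be hardest is the unresolved case. There I must show $0\le w_j\le 2\min_{e}|u_j(e)|$, where the minimum ranges over $j$'s unfavorable edges when the virtual edge is oriented toward $u_i$. In the goods case this came for free, because $j$'s unfavorable set under $X_F$ was exactly $P_{ij}$. Here $j$'s unfavorable set under $X_F$ is $E_{ij}\setminus F_j(X_F)$, which need not equal $F$. My first attempt would be to define the exchange test over that set instead, namely to ask whether some single edge of $E_{ij}\setminus F_j(X_F)$ can be flipped. If the flip of a $j$-unfavorable edge does not necessarily hit $F$, I would instead run the minimization symmetrically: first pick whichever endpoint's minimal favorable set makes the other endpoint's unfavorable set coincide with it, or swap the roles of $i$ and $j$.

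\textbf{Conclusion.} Once nonnegativity and the min bound hold for both endpoints, Lemma~\ref{lem:greedy-walk} applies verbatim. Then the argument of Lemma~\ref{lem:reduced} and of Theorem~\ref{thm:propmExists}, now phrased with favorable sets and the displayed identity, yields Definition~\ref{def:mixed-propm}, with the same $O(m\log m)$ running time.
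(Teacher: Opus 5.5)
\textbf{The unresolved-case bound fails for your construction.} You leave open exactly the step that matters, and with the construction as stated that step is false, not merely unproven. The culprit is the realization $X_F$, which hands $i$ every $i$-chore outside $F$. Call an edge \emph{mixed-type} if it is a good for one endpoint and a chore for the other. For such an edge, favorability is shared: it is favorable for both endpoints or for neither. So a mixed-type edge outside $F$ becomes unfavorable for both endpoints under $X_F$.

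A mixed-type edge inside $F$ would let your Case~2 flip it and resolve the pair, since the flip helps $j$ as well. Hence in the unresolved case $F$ consists of mutual edges. Then $j$'s unfavorable set under $X_F$ is $F$ together with \emph{all} mixed-type edges of $E_{ij}$, and nothing in your tests controls their values.

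Concretely, take two agents with $E_{ij}=\{a,b\}$. Edge $a$ is a good of value $1$ for both; edge $b$ is a chore of cost $\epsilon$ for $i$ and a good of value $\epsilon$ for $j$, with $0<\epsilon<1$.
\begin{itemize}
\item Then $F=\{a\}$ and $X_F=\{a,b\}$, so $j$ gets nothing and Case~1 fails.
\item The only exchange candidate is $a$, and moving it leaves $j$ with value $\epsilon<\frac12(1+\epsilon)$, so Case~2 fails.
\item The virtual edge has $w_j=1+\epsilon$, while $j$'s unfavorable set under $X_F$ is $\{a,b\}$ with minimum $\epsilon$.
\item If the walk orients toward $u_i$ (it starts at an arbitrary vertex), $j$ ends with $u_j(\pi_j)+\max\{M_j^g,M_j^c\}=0+\epsilon<\frac12(1+\epsilon)=\mathsf{PROP}_j$.
\end{itemize}

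Your proposed remedies do not help. No allocation of $\{a,b\}$ gives both endpoints half, so the pair must become virtual whatever exchange test you use. The failure therefore lies in how the two orientations are realized, not in the test. Swapping roles also fails: add a mirror edge $b'$ (a good of value $\epsilon$ for $i$, a chore of cost $\epsilon$ for $j$). Then for either choice of which endpoint plays $i$, the losing agent ends with value $-\epsilon$ and correction $\epsilon$, which is below its share of $\frac12$.

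\textbf{The missing idea: mixed-type edges should never be contested.} The paper first orients every mixed-type edge toward the endpoint for which it is a good. If $e$ is a good for $i$ and a chore for $j$, this contributes $\frac12 v_i(e)\ge 0$ to $i$'s surplus and $\frac12 c_j(e)>0$ to $j$'s, and it never creates an unfavorable edge.

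The remaining edges are mutual goods or mutual chores, and for these your identity becomes an exact per-edge correspondence. Set $\tilde v_i(e)=c_i(e)$ on mutual chores, compute a PROPm orientation of this goods instance using Theorem~\ref{thm:propmExists} as a black box, and flip the mutual chores. Each edge $i$ loses in the auxiliary instance is a mutual good $i$ loses or a mutual chore $i$ receives, so the auxiliary correction is at most $\max\{M_i^g(\pi),M_i^c(\pi)\}$.

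In your framework, this amounts to excluding mixed-type edges from the minimization, with half-shares computed on the mutual edges only, and realizing them as favorable for both under either orientation of the virtual edge. After that, $j$'s unfavorable set under $X_F$ is exactly $F$ and the goods argument goes through unchanged. There is no need to rerun the machinery.
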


\begin{proof}
We reduce the mixed instance to a goods instance. First orient every edge that
is a good for one endpoint and a chore for the other toward the endpoint for
whom it is a good. This contributes nonnegatively to both endpoints relative to
their proportional shares: the endpoint receiving a good gains value, while the
other endpoint avoids receiving a chore.

It remains to orient edges that are mutual goods or mutual chores. Construct an
auxiliary goods instance on these remaining edges. For a mutual-good edge
$e$, set $\tilde v_i(e)=v_i(e)$ for each endpoint $i$. For a mutual-chore edge
$e$, set $\tilde v_i(e)=c_i(e)$ for each endpoint $i$. By
Theorem~\ref{thm:propmExists}, compute a PROPm orientation $\tilde\pi$ of this
auxiliary goods instance.

Map $\tilde\pi$ back to the mixed instance as follows. Mutual-good edges keep
their orientation. Mutual-chore edges are flipped. Together with the edges
oriented in the first step, this gives an orientation $\pi$ of the mixed
instance.

Fix an agent $i$. The first-step edges contribute nonnegatively to
$u_i(\pi_i)-\frac{1}{2}u_i(E_i)$. On every remaining edge, the contribution of the
mixed orientation $\pi$ is exactly the same as the contribution of
$\tilde\pi$ in the auxiliary goods instance. Therefore, if $i$ reaches its
proportional share in the auxiliary goods instance, then it also reaches its
proportional share in the mixed instance.
Otherwise, since $\tilde\pi$ is PROPm in the auxiliary goods instance, there is
a neighbor $j$ such that
\[
    \tilde v_i(\tilde\pi_i)+
    \min_{e\in E_{ij}\cap\tilde\pi_j}\tilde v_i(e)
    \ge \frac{1}{2} \tilde v_i(E_i).
\]
The set $E_{ij}\cap\tilde\pi_j$ corresponds in the mixed instance to goods that
$i$ loses to $j$, chores that $i$ receives from $j$, or both. Hence, the
auxiliary correction is no larger than
$\max\{M_i^g(\pi),M_i^c(\pi)\}$. Thus agent $i$ satisfies
Definition~\ref{def:mixed-propm}. Since $i$ was arbitrary, $\pi$ is PROPm.
\end{proof}

\section{Missing Proofs from \Cref{sec:PROPmExists}}

\subsection{Proof of \Cref{lem:partition}} \label{sec:lem-partitionAppendix}

\lemPartition*

\begin{proof}
The algorithm removes an edge only when the remaining set is still an
$i$-majority set. Hence, it maintains the invariant
\[
    v_i(P)\ge \frac{1}{2} v_i(E_{ij}),
\]
and the returned set is an $i$-majority set.

It remains to verify the minimality. Let $P^{(\ell)}$ be the set immediately after
the algorithm has processed $e_\ell$ for all $\ell$. Consider an edge $e_k$ that belongs to the
returned set $P$. Then the algorithm did not remove $e_k$, and 
\[
    v_i(P^{(k-1)}\setminus\{e_k\})<\frac{1}{2} v_i(E_{ij}).
\]
Since the set $P$ only shrinks during the algorithm and valuations are
nonnegative, we have
\[
    v_i(P_{ij}\setminus\{e_k\})
    \le
    v_i(P^{(k-1)}\setminus\{e_k\})
    <
    \frac{1}{2} v_i(E_{ij}).
\]
Thus $P_{ij}$ is inclusion-minimal. Finally, since $P_{ij}$ is an
$i$-majority set, the only remaining distinction is whether
$v_j(P_{ij})$ is below or at least $\frac{1}{2}v_j(E_{ij})$, which yields (C1) and
(C2).
\end{proof}

\subsection{Proof of \Cref{lem:greedy-walk}} \label{sec:lem-greedy-walkAppendix}

\lemGreedyWalk*

\begin{proof}
Fix a vertex $u_i$ and consider the subsequence of the greedy walk consisting
only of the times at which an edge incident to $u_i$ is processed. Such an
edge is either \emph{won} by $u_i$, meaning that the walk is currently at $u_i$ and
the edge is oriented toward $u_i$, or \emph{lost} by $u_i$, meaning that the walk is
currently at the other endpoint and the edge is oriented away from $u_i$.

We compare lost edges with won edges as follows. Consider the subsequence of processed edges that are incident to $u_i$. Two lost edges cannot appear consecutively. 
Indeed, suppose an edge $\{u_i,u_j\}$ is lost by $u_i$, so it is oriented toward $u_j$. After processing this edge, the algorithm sets the current vertex to $u_i$. Therefore, if the next processed edge incident to $u_i$ exists, it is chosen while the current vertex is $u_i$ and hence is oriented toward $u_i$.

Hence, except possibly for the first lost edge in this subsequence, every lost
edge is immediately preceded by a distinct won edge. Pair each such lost edge
$e$ with the won edge $e'$ that immediately precedes it. When $e'$ was chosen,
the edge $e$ was still unprocessed and incident to $u_i$. Since the greedy walk
chooses a remaining incident edge of maximum weight from the perspective
of $u_i$,
\[
    w_i(e')\ge w_i(e).
\]
Thus, every lost edge except possibly the first lost edge is paired with a distinct won edge of at least the same weight. 
The first lost edge, if it exists, has weight at most $\max_{e\in L_i(\Pi)}w_i(e)$. Therefore,
\[
    \sum_{e\in L_i(\Pi)}w_i(e)
    \le
    \sum_{e\in W_i(\Pi)}w_i(e)
    +
    \max_{e\in L_i(\Pi)}w_i(e),
\]
which proves the lemma.
\end{proof}

\section{Missing Proofs from Section \ref{sec:priceOfPROPm}} 

\subsection{Proof of \Cref{prop:auxiliary-graph}} \label{sec:auxiliary-graphAppendix}

\propAuxiliaryGraph*

\begin{proof}
Each edge of $H$ is created from a pair of distinct virtual edges in the
sorted list of some virtual vertex $u_i$. Hence, it connects two distinct
vertices of $H$, and so $H$ has no self-loop. Moreover, every edge of $H$ is
created in exactly one sorted list and therefore receives exactly one label.
We next show that $H$ has no parallel edges. Suppose, toward a contradiction,
that two vertices $x_e$ and $x_{e'}$ are joined by two distinct edges in $H$.
Then the virtual edges $e$ and $e'$ must have been paired in the sorted list
of two distinct endpoints in $\mathcal G$. Since a virtual edge has exactly
two endpoints, this implies that $e$ and $e'$ have the same two endpoints in
$\mathcal G$. This contradicts the fact that $\mathcal G$ is simple. Thus
$H$ is simple.

Finally, fix a vertex $x_e$ of $H$. The corresponding virtual edge $e$ has
two endpoints in $\mathcal G$. At each endpoint, $e$ appears once in the
sorted list and can be paired with at most one other edge. Hence $x_e$ is
incident to at most one edge of $H$ from each endpoint of $e$ and therefore
has degree at most $2$.
Since every vertex of $H$ has degree at most $2$, each connected component of
$H$ is a path, a cycle, or an isolated vertex. Orienting each path
consistently and each cycle cyclically gives every vertex at most one incoming
edge.
\end{proof}

\subsection{Proof of \Cref{prop:priceLowerBound}} \label{sec:priceLowerBoundAppendix}

\propPriceLowerBound*

\begin{proof}
Fix an even integer $d=2k$ with $k\ge 1$ and fix a number $H>1$. Consider a
star with center $0$ and leaves $1,\ldots,d$. For each edge
$e_\ell=\{0,\ell\}$, set $v_0(e_\ell)=1$ and $v_\ell(e_\ell)=H$.
The unconstrained welfare-maximizing orientation assigns every edge to its
leaf and therefore has welfare $dH=2kH$.
We now characterize the PROPm orientations. First, every leaf satisfies PROPm regardless of the orientation of its incident edge.
Suppose the center receives $t$ edges. If $t<d$, then the center loses at least one edge. Since all
edges have value $1$ for the center, its PROPm correction is $1$. Hence the
PROPm condition for the center requires $t+1\ge \frac{1}{2}d =k$, so $t\ge k-1$. Thus, any PROPm orientation must assign at least $k-1$ edges to the center.
Since $H>1$, welfare is maximized by assigning as few edges as possible to the
center, namely exactly $k-1$ edges. The PROPm orientation in this instance has welfare at most
$(k-1)+(k+1)H$.
Consequently, the price of PROPm is at least
\[
    \frac{2}{(k-1)+(k+1)H}kH.
\]
Letting $H\to\infty$ gives $\frac{2}{k+1}k$, and the ratio approaches $2$ as $k\to\infty$.
\end{proof}

\subsection{Proof of \Cref{lem:price-propm}} \label{sec:lem-price-propmAppendix}

\lemPricePropm*

\begin{proof}
For an orientation $\Gamma\in\{\Pi,\widehat{\Pi}\}$ of the virtual graph
$\mathcal G$, recall that $W_i(\Gamma)$ denotes the set of virtual edges
incident to $u_i$ that are oriented toward $u_i$, and $L_i(\Gamma)$ denotes
the set of virtual edges incident to $u_i$ that are oriented away from $u_i$.
We first prove that for every $u_i$ and every
$\Gamma\in\{\Pi,\widehat{\Pi}\}$,
\[
    \sum_{e\in L_i(\Gamma)}w_i(e)
    \le
    \sum_{e\in W_i(\Gamma)}w_i(e)
    +
    \max_{e\in L_i(\Gamma)}w_i(e),
\]
where the maximum is defined as $0$ if $L_i(\Gamma)=\emptyset$.

Fix a virtual vertex $u_i$, and let $e_1,\ldots,e_d$ be the virtual edges
incident to $u_i$, sorted so that
\[
    w_i(e_1)\ge w_i(e_2)\ge\cdots\ge w_i(e_d).
\]
By construction, for every $r\le \lfloor\frac{1}{2}d\rfloor$, the auxiliary graph
$H$ contains an edge between $x_{e_{2r-1}}$ and $x_{e_{2r}}$ labeled by $i$.
Exactly one endpoint receives this auxiliary edge. The other endpoint either receives an edge with the other label or is a path endpoint; in either case, it is not assigned to $u_i$. Therefore,
by the conversion rule, exactly one of $e_{2r-1}$ and $e_{2r}$ is oriented
toward $u_i$ in $\Pi$, and the other is oriented away from $u_i$. The same
one-won-one-lost property holds in $\widehat{\Pi}$, since reversing every
virtual edge also reverses the status of both edges in each pair.
We now prove the above displayed inequality. In the first pair
$\{e_1,e_2\}$, the lost edge, if any, is left uncharged. For each later pair
$\{e_{2r-1},e_{2r}\}$ with $r\ge 2$, charge the lost edge in this pair to the
edge won in the previous pair $\{e_{2r-3},e_{2r-2}\}$. This charging is
injective. Moreover, the won edge in the previous pair has weight at least
$w_i(e_{2r-2})$, while the lost edge in the current pair has weight at most
$w_i(e_{2r-1})$. Since the list is sorted,
\[
    w_i(e_{2r-2})\ge w_i(e_{2r-1}),
\]
so each charged lost edge is charged to a distinct won edge of at least the
same weight. The cases $d\le 1$ are immediate. If $d\ge 3$ is odd and the unpaired edge $e_d$ is lost, then it can be
charged to the won edge in the last pair, whose weight is at least
$w_i(e_{d-1})\ge w_i(e_d)$. Hence, all lost edges except possibly one are
charged to distinct won edges of at least the same weight. The only possible
uncharged lost edge has weight at most
$\max_{e\in L_i(\Gamma)}w_i(e)$, proving the inequality.
Note that the inequality is the same as the inequality in the statement of Lemma~\ref{lem:greedy-walk}.
Then, by arguments almost the same as those in the proofs of Lemma~\ref{lem:reduced} and of Theorem~\ref{thm:propmExists}, one can easily verify that using either $\Pi$ or $\widehat{\Pi}$ in the Steps~\ref{step::alg::main-final-start}--\ref{step::alg::main-final-end} of Algorithm~\ref{alg:main} gives a PROPm
orientation.
\end{proof}

\section{Missing Proofs from Section \ref{sec:optimization:welfare}}
\label{app:proofs:optimization}

\subsection{Proof of Theorem \ref{thm:optimization:welfare:multiple}}

\thmOptWelfareMultiple*
\begin{proof}
We give a reduction from \textsc{Max-2P2N-3SAT}. In this problem, each clause contains exactly three literals, and each variable appears exactly twice as a positive literal and exactly twice as a negative literal; the objective is
to find a truth assignment that maximizes the number of satisfied clauses.
We use the hardness result of Berman et al. \cite{berman2005approximation}: for every fixed $\epsilon>0$, it is NP-hard to distinguish between an instance in which at least a $(1-\epsilon)$ fraction of the clauses can be satisfied and an instance in which every assignment satisfies at most a $(\frac{1015}{1016}+\epsilon)$ fraction of the clauses. Set $\epsilon_0=\frac{1}{4\cdot 1016}$.

Let $\varphi$ be an instance with variables $x_1,\ldots,x_n$ and clauses $C_1,\ldots,C_m$, where each clause has exactly three literals and each variable appears exactly twice positively and twice negatively. Thus $4n=3m$.
For each variable $x_i$, create two literal agents, denoted by $x_i$ and $\bar{x}_i$. For each clause $C_j$, create a clause agent $c_j$ and a dummy agent $d_j$. For a literal $\ell$, let $a(\ell)$ denote its corresponding literal agent. The constructed instance has $2n+2m$ agents.
For each variable $x_i$, create three parallel edges between $x_i$ and $\bar{x}_i$, denoted by $s_i,p_i^+,p_i^-$. Set
\[
v_{x_i}(s_i)=v_{\bar{x}_i}(s_i)=10,\quad v_{x_i}(p_i^+)=v_{\bar{x}_i}(p_i^-)=5, \quad v_{\bar{x}_i}(p_i^+) = v_{x_i}(p_i^-) = 0.
\]
The edge $s_i$ refers to the \emph{switch} edge. For a literal agent $a$, let $a^{\mathrm{opp}}$ be the opposite literal agent of the same variable, and let $p^0(a)$ be the edge among $p_i^+,p_i^-$ that has value $0$ for $a$.
For each clause $C_j=(\ell_j^1\vee \ell_j^2\vee \ell_j^3)$ and each $t\in[3]$, create three parallel edges $g_j^t,q_j^t,r_j^t$ between $a(\ell_j^t)$ and $c_j$. The edge $g^t_j$ refers to the \emph{main occurrence} edge, so each $c_j$ has three main occurrence edges.
Set
\begin{align*}
&\big(v_{a(\ell_j^t)}(g_j^t),v_{c_j}(g_j^t)\big)=(2,3), \\
&\big(v_{a(\ell_j^t)}(q_j^t),v_{c_j}(q_j^t)\big)=(0,2),\\
&\big(v_{a(\ell_j^t)}(r_j^t),v_{c_j}(r_j^t)\big)=(2,0).
\end{align*}
Finally, for each clause $C_j$, create an edge $h_j=(c_j,d_j)$ with
$v_{c_j}(h_j)=2$ and $v_{d_j}(h_j)=0$. The construction has $3n+10m$ edges, and all valuations are additive. The gadgets and valuations are illustrated in Figure~\ref{fig:propM-welfare-gadgets}.

\begin{figure}[t]
\centering
\begin{tikzpicture}[
    agent/.style={circle,draw,minimum size=8mm,inner sep=1pt,font=\small},
    litagent/.style={circle,draw,minimum size=13mm,inner sep=1pt,font=\small},
    val/.style={font=\scriptsize,inner sep=1pt,fill=white},
    elab/.style={font=\scriptsize,inner sep=1pt,fill=white},
    panel/.style={font=\small}
]

\node[agent] (xi) at (0,0) {$x_i$};
\node[agent] (nxi) at (3,0) {$\bar{x}_i$};

\draw (xi) to[bend left=35] node[elab,above] {$s_i$} (nxi);
\node[val] at ($(xi)!0.22!(nxi)+(0,0.55)$) {$10$};
\node[val] at ($(xi)!0.78!(nxi)+(0,0.55)$) {$10$};

\draw (xi) -- node[elab,above] {$p_i^+$} (nxi);
\node[val] at ($(xi)!0.22!(nxi)+(0,0.15)$) {$5$};
\node[val] at ($(xi)!0.78!(nxi)+(0,0.15)$) {$0$};

\draw (xi) to[bend right=35] node[elab,below] {$p_i^-$} (nxi);
\node[val] at ($(xi)!0.22!(nxi)+(0,-0.55)$) {$0$};
\node[val] at ($(xi)!0.78!(nxi)+(0,-0.55)$) {$5$};

\node[panel] at (1.5,-1.3) {(a) Variable gadget};

\node[litagent] (lit) at (5.4,0) {$a(\ell_j^t)$};
\node[agent] (cj) at (8.8,0) {$c_j$};
\node[agent] (dj) at (10.6,0) {$d_j$};

\draw (lit) to[bend left=35] node[elab,above] {$g_j^t$} (cj);
\node[val] at ($(lit)!0.22!(cj)+(0,0.55)$) {$2$};
\node[val] at ($(lit)!0.78!(cj)+(0,0.55)$) {$3$};

\draw (lit) -- node[elab,above] {$q_j^t$} (cj);
\node[val] at ($(lit)!0.22!(cj)+(0,0.15)$) {$0$};
\node[val] at ($(lit)!0.78!(cj)+(0,0.15)$) {$2$};

\draw (lit) to[bend right=35] node[elab,below] {$r_j^t$} (cj);
\node[val] at ($(lit)!0.22!(cj)+(0,-0.55)$) {$2$};
\node[val] at ($(lit)!0.78!(cj)+(0,-0.55)$) {$0$};

\draw (cj) -- node[elab,above] {$h_j$} (dj);
\node[val] at ($(cj)!0.28!(dj)+(0,0.18)$) {$2$};
\node[val] at ($(cj)!0.72!(dj)+(0,0.18)$) {$0$};

\node[panel] at (8.0,-1.3) {(b) Occurrence and dummy gadget};

\end{tikzpicture}
\caption{The gadgets used in the APX-hardness reduction. The label closer to
an endpoint is the value of the edge for that endpoint. For each clause
$C_j=(\ell_j^1\vee \ell_j^2\vee \ell_j^3)$, the construction creates three
copies of the occurrence gadget in panel (b), one for each $t\in[3]$, all
sharing the same clause agent $c_j$ and the same dummy edge $h_j=(c_j,d_j)$.}
\label{fig:propM-welfare-gadgets}
\end{figure}

Each literal agent has a total value of $10+5+2(2+2)=23$ and hence a proportional share $\frac{23}{2}$. Each clause agent has a total value $3(3+2)+2=17$ and hence a proportional share $\frac{17}{2}$. Each dummy agent has a total value of $0$ and is henceforth always PROPm.

For a truth assignment $\tau$, define its \emph{canonical orientation} $\pi^\tau$ as follows. For each $i\in [n]$, assign $s_i$ to $x_i$ if $x_i$ is true and to $\bar{x}_i$ otherwise. 
Assign $p_i^+$ to $x_i$ and $p_i^-$ to $\bar{x}_i$. Moreover, for every $j\in [m]$ and $t\in[3]$, assign every $q_j^t$ to $c_j$, every $r_j^t$ to $a(\ell_j^t)$, and every $h_j$ to $c_j$. Finally, assign $g_j^t$ to $c_j$ if $\ell_j^t$ is true under $\tau$ and to $a(\ell_j^t)$ otherwise.
Since every assignment makes exactly $2n$ literal occurrences true and $2n$ literal occurrences false, the welfare of $\pi^\tau$ is always equal to
\begin{align*}
W^\star & =10n+10n+3(2n)+2(2n)+2(3m)+2(3m)+2m \\
& =30n+14m \\
&=\frac{73}{2}m,
\end{align*}
independent of $\tau$. The last equality uses $4n=3m$.

We first prove the completeness of the reduction. Suppose that there exists a truth assignment $\tau$ that leaves at most $\epsilon_0 m$ clauses unsatisfied. We show that there is a PROPm orientation with welfare at least $W^\star-2\epsilon_0 m$.
More generally, let $u$ be the number of clauses left unsatisfied by $\tau$.
Starting from $\pi^\tau$, every satisfied clause agent receives $h_j$, its three $q$-edges, and at least one main occurrence edge, for a value at least $2+3\cdot 2+3=11>\frac{17}{2}$. An unsatisfied clause agent receives only $h_j$ and its three $q$-edges, for a value of $2+3\cdot 2=8$.
For each unsatisfied clause, choose one of its three literals arbitrarily and move the corresponding edge $g_j^t$ from the literal agent to $c_j$. If a false literal agent $a$ is chosen once, say for occurrence $(j,t)$, also move $q_j^t$ from $c_j$ to $a$. Then $a$ has own value $11$, and the bundle assigned to $c_j$ among the edges between $a$ and $c_j$ consists only of $g_j^t$, which has value $2$ for $a$. Thus $a$ remains PROPm. The repaired clause agent has value at least $2+2\cdot 2+3=9>\frac{17}{2}$.

If a false literal agent $a$ is chosen twice, move both corresponding $g$-edges to the two clause agents and move $p^0(a)$ from $a^{\mathrm{opp}}$ to $a$. Then $a$ has value $5+2\cdot 2=9$ but receives a PROPm correction of value $10$ from the switch edge assigned to $a^{\mathrm{opp}}$. Hence $a$ remains PROPm. The opposite literal agent loses an edge of value $5$ but still receives the switch edge and its two $r$-edges, for value at least $10+2\cdot 2=14>\frac{23}{2}$. Each repaired clause agent again has value at least $9$.

It remains to bound the welfare loss. If a false literal is chosen once, moving its $g$-edge to the clause agent gains $1$ welfare, while moving the corresponding $q$-edge to the literal agent loses $2$, for a net loss of $1$. 
If a false literal is chosen twice, the two moved $g$-edges gain $2$ welfare, while moving $p^0(a)$ loses $5$, for a net loss of $3$. 
If $n_1$ and $n_2$ are the numbers of false literals chosen once and twice, then $n_1 + 2n_2 = u$, and the total loss is $n_1 + 3n_2 \leq 2u$. 
Therefore, there is a PROPm orientation with welfare at least $W^\star-2u$.

In particular, some assignments leave at most $\epsilon_0 m$ clauses unsatisfied. 
Hence,
\[
\mathsf{OPT}_{\mathrm{PROPm}}\ge W^\star-2\epsilon_0 m.
\]

We next prove soundness. 
Suppose that every truth assignment leaves at least $(\frac{1}{1016}-\epsilon_0)m$ clauses unsatisfied. 
We show that every PROPm orientation has welfare at most
\[
    W^\star-\left(\frac{1}{1016}-\epsilon_0\right)m.
\]

Let $\pi$ be any PROPm orientation. Decode a truth assignment $\tau(\pi)$ by setting $x_i$ to true if and only if $s_i$ is assigned to $x_i$ in $\pi$. Let $\widehat{\pi}=\pi^{\tau(\pi)}$ be the canonical orientation induced by this assignment, with the same switch-edge allocation as $\pi$. Then $\mathsf{SW}(\widehat{\pi})=W^\star$.

Let $U(\tau(\pi))$ be the number of clauses unsatisfied by $\tau(\pi)$. We show that
\[
\mathsf{SW}(\pi)\le W^\star-U(\tau(\pi)).
\]
Fix a literal agent $a$ that is false under $\tau(\pi)$. Let $G_a$ be the number of its two $g$-edges assigned to clause agents in $\pi$, let $R_a$ be the number of its two $r$-edges assigned to clause agents in $\pi$, let $Q_a$ be the number of its two $q$-edges assigned to $a$ in $\pi$, and let $P_a\in\{0,1\}$ indicate whether $p^0(a)$ is assigned to $a$ in $\pi$. Define $L_a=G_a+R_a$.
Compared with $\widehat{\pi}$, these counted deviations change welfare by at most
\[
\Delta_a=G_a-2R_a-2Q_a-5P_a.
\]
We claim that $\Delta_a\le -L_a$. If $L_a=0$, this is immediate. Suppose $L_a=1$. In $\widehat{\pi}$, agent $a$ has value $5+2\cdot 2+2\cdot 2=13$. Losing one incident $g$- or $r$-edge leaves it with value at most $11<\frac{23}{2}$. 
If $P_a=Q_a=0$, then PROPm correction is zero: the opposite literal agent receives $p^0(a)$, and each adjacent clause agent receives the corresponding $q$-edge. Thus, $a$ is not PROPm, a contradiction. Therefore, PROPm implies $P_a+Q_a\ge 1$, and so $\Delta_a\le 1-2=-1=-L_a$.

Now suppose $L_a\ge 2$. If $P_a=0$, then the opposite literal agent receives $p^0(a)$, which has value
$0$ for $a$, so the opposite literal agent contributes no positive PROPm
correction for $a$.
Moreover, for any adjacent clause agent, the corresponding PROPm correction term is at most $2$. 
Since $a$ loses at least two incident edges of value $2$, its own value is at most $9$, and even with the best possible witness, it reaches at most $11<\frac{23}{2}$, contradicting PROPm. 
Hence $P_a=1$. Using $G_a \leq 2$,
\begin{align*}
\Delta_a &\le G_a-2R_a-5 \\
&=3G_a-2L_a-5 \\
&\le 1-2L_a \\
&\le -L_a.
\end{align*}
This proves the claim.

Consider now a clause $C_j$ that is unsatisfied by $\tau(\pi)$. All three of its literals are false. 
We claim that in $\pi$, at least one of the six edges $g_j^1,g_j^2,g_j^3,r_j^1,r_j^2,r_j^3$ is assigned to $c_j$. 
Otherwise, $c_j$ receives no $g$-edge and no $r$-edge. 
If $h_j$ is assigned to $c_j$, then its value is at most $2+3\cdot 2=8$. If $h_j$ is assigned to $d_j$, then its own value is at most $6$, and the edge $h_j$ can contribute a correction of at most $2$. 
In both cases, any correction obtained from a literal neighbor is $0$, because the corresponding edge $r_j^t$ is assigned to that literal agent and has value $0$ for $c_j$. 
Thus $c_j$ would violate PROPm. Therefore, every unsatisfied clause contributes at least one edge counted in $L_a$ for some false literal agent $a$, and hence
\[
\sum_{a\text{ false}} L_a\ge U(\tau(\pi)).
\]

All remaining edges are assigned in $\widehat{\pi}$ to welfare-maximizing endpoints or are switch edges whose orientation is the same in $\pi$ and $\widehat{\pi}$. 
Therefore, their welfare contribution in $\pi$ is no larger than in $\widehat{\pi}$. 
Consequently,
\begin{align*}
    \mathsf{SW}(\pi)-\mathsf{SW}(\widehat{\pi}) 
    &\le \sum_{a\text{ false}}\Delta_a \\
    &\le -\sum_{a\text{ false}}L_a \\
    &\le -U(\tau(\pi)).
\end{align*}
This proves the desired comparison.

In the soundness case, every truth assignment leaves at least $(\frac{1}{1016}-\epsilon_0)m$ clauses unsatisfied. Hence, every PROPm orientation has welfare at most
\[
W^\star-\left(\frac{1}{1016}-\epsilon_0\right)m.
\]

The completeness value is at least $W^\star-2\epsilon_0m$, while the soundness value is at most $W^\star-(\frac{1}{1016}-\epsilon_0)m$. Since $\epsilon_0=\frac{1}{4\cdot 1016}$, the additive gap between these two bounds is
\[
\left(\frac{1}{1016}-3\epsilon_0\right)m=\frac{1}{4\cdot 1016}m.
\]
Because $W^\star= \frac{73}{2}m$, this is a constant multiplicative gap. 
Therefore, a PTAS for computing a maximum-welfare PROPm orientation would distinguish the two cases of the \textsc{Max-2P2N-3SAT} gap theorem in polynomial time, unless $P=NP$. 
Thus, the problem is APX-hard.
\end{proof}

\subsection{Proof of Theorem \ref{thm:optimization:welfare:simple}}

\thmOptWelfareSimple*
\begin{proof}
We reduce from \textsc{Subset Sum}. In this problem, the input consists of positive integers $a_1,\ldots,a_m$ and a target $B$, and the goal is to decide whether there exists a subset $S\subseteq[m]$ such that $\sum_{i\in S} a_i=B$. Let $W=\sum_{i=1}^m a_i$. We may assume that $0<B\le W$, since otherwise the instance is trivial.

We now construct the orientation instance. Create one agent $c$, which will be the center of the star. For each integer $a_i$, create an item agent $x_i$ and an edge $e_i=\{c,x_i\}$. We also create a constant number of dummy agents, as defined below.
If $2B\le W$, create two dummy agents $y_1,y_2$ and two dummy edges $f_1=\{c,y_1\}$ and $f_2=\{c,y_2\}$. Set $r_1=W$ and $r_2=2B$.
If $2B>W$, create three dummy agents $y_1,y_2,y_3$ and three dummy edges $f_1=\{c,y_1\}$, $f_2=\{c,y_2\}$, and $f_3=\{c,y_3\}$. Set $r_1=W$, $r_2=W$, and $r_3=2B-W$.
Let $q$ be the number of dummy agents created. In both cases, $q\in\{2,3\}$,
\[
    \sum_{t=1}^q r_t=W+2B
    \quad\text{and}\quad
    \max_{t\in[q]}\; r_t=W .
\]
Thus, the constructed graph is a star with $m+q+1$ vertices and $m+q$ edges.

The valuations are additive. For each item edge $e_i$, set $v_c(e_i)=a_i$ and $v_{x_i}(e_i)=2a_i$. 
For each dummy edge $f_t$, set $v_c(f_t)=r_t$ and $v_{y_t}(f_t)=H$, where $H=5W+1$. 
Let $K=qH+2W-B$.
We show that the \textsc{Subset Sum} instance is a yes-instance if and only if the constructed instance admits a PROPm orientation with welfare at least $K$.

First, suppose that there is a set $S\subseteq[m]$ with $\sum_{i\in S}a_i=B$. 
Orient each item edge $e_i$ toward $c$ if $i\in S$ and toward $x_i$ otherwise. 
Orient every dummy edge $f_t$ toward $y_t$.

Every leaf is PROPm under any orientation of a star: if it receives its unique incident edge, it receives its full value, while if it does not, then the single edge assigned to its neighbor gives exactly the PROPm correction.
It remains only to check the center. 
The center's total value for all incident edges is $W+(W+2B)=2W+2B$, so $\mathsf{PROP}_c=W+B$.
In the orientation above, the center receives item edges of total value $B$. 
Since all dummy edges are assigned away from $c$ and the largest dummy edge has value $W$ for $c$, the PROPm correction for $c$ is at least $W$.
Hence, $v_c(\pi_c)+W=B+W=\mathsf{PROP}_c$, so the orientation is PROPm.

Its social welfare is
\begin{align*}
    \sum_{i\in S}a_i+\sum_{i\notin S}2a_i+qH
    & =B+2(W-B)+qH \\
    &=qH+2W-B \\
    &=K .
\end{align*}

Conversely, suppose that there is a PROPm orientation $\pi$ with $\mathrm{SW}(\pi)\ge K$. We first claim that every dummy edge is assigned to its dummy vertex. 
If some dummy edge is assigned to $c$, then the dummy edges contribute at most $(q-1)H+\sum_{t=1}^q r_t\le (q-1)H+3W$ to the social welfare, while the item edges contribute at most $2W$. 
Thus, $\mathrm{SW}(\pi) \leq (q-1)H+5W$. 
But
\begin{align*}
    K-\big((q-1)H+5W\big)
    & =H-3W-B \\ 
    & \ge (5W+1)-3W-W \\
    &=W+1 \\
    &>0,
\end{align*}
contradicting $\mathsf{SW}(\pi)\ge K$. Therefore, all dummy edges are assigned to dummy vertices.

Let $S=\{i\in[m]: e_i \text{ is assigned to } c \text{ in } \pi\}$ and let
$X=\sum_{i\in S}a_i$. Since all dummy edges are assigned away from $c$, the
center loses an edge of value $W$, and no lost edge has a value larger than $W$
from the center's perspective. Because the graph is simple, the PROPm
correction for $c$ is therefore exactly $W$. Since $\mathsf{PROP}_c=W+B$ and
$\pi$ is PROPm, we have $X+W\ge W+B$, and hence $X\ge B$.

On the other hand, since all dummy edges are assigned to dummy vertices, the
social welfare of $\pi$ is
\begin{align*}
    qH+\sum_{i\in S}a_i+\sum_{i\notin S}2a_i
    &=qH+X+2(W-X) \\
    &=qH+2W-X .
\end{align*}
The inequality $\mathsf{SW}(\pi) \ge K=qH+2W-B$ implies $X\le B$. Therefore, $X=B$, and the set $S$ is a solution to \textsc{Subset Sum}.
Thus, the constructed instance admits a PROPm orientation with welfare at least $K$ if and only if the original \textsc{Subset Sum} instance is a yes-instance.
Therefore, computing a PROPm orientation with maximum social welfare is NP-hard, even on simple graphs.
\end{proof}

Theorem \ref{thm:optimization:welfare:simple} implies that the problem remains NP-hard, even when the graph is a star. 
In the following, we prove that there is an FPTAS for computing a maximum-welfare
PROPm orientation if the graph is a star.

\begin{restatable}{proposition}{propStarFPTAS}
For additive goods valuations, there is an FPTAS for computing a maximum-welfare
PROPm orientation when the graph is a star.
\end{restatable}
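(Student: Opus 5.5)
The plan is to mirror the chores FPTAS of Theorem~\ref{thm:chores-star-fptas}, but in the goods language. First we reduce to a single constraint. Let the star have center $c$ and leaves $\ell_1,\ldots,\ell_m$, with $e_i=\{c,\ell_i\}$, $a_i=v_c(e_i)$, $b_i=v_{\ell_i}(e_i)$, and $A=\sum_i a_i$.

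As observed in the proof of Theorem~\ref{thm:optimization:welfare:simple}, every leaf is PROPm in every orientation. So only the center matters. Let $S$ be the set of edges assigned to $c$. If $S=[m]$, the center is trivially PROPm. Otherwise, because the graph is simple, the PROPm condition is
\[
\sum_{i\in S}a_i+\max_{i\notin S}a_i\ge \tfrac12 A .
\]
The welfare is $\sum_{i\in S}a_i+\sum_{i\notin S}b_i$.

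Next we guess the witness. We include the all-to-center orientation as one candidate. For the others, we guess the index $h\notin S$ attaining the maximum of $a_i$ over lost edges. Given $h$, we force $e_h$ to the leaf and force every $e_i$ with $a_i>a_h$ to the center; the latter is harmless, since such edges cannot be lost if $h$ is the maximizer. Ties can be broken by index so each orientation is covered by some guess.

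For the remaining free edges, the problem becomes: choose $T$ to maximize $\sum_{i\in T}a_i+\sum_{i\notin T}b_i$ subject to a covering constraint $\sum_{i\in T}a_i\ge \frac12A-a_h-\sum_{i\text{ forced}}a_i$. Equivalently, each free edge has two choices, contributing profit $a_i$ or $b_i$ and resource $a_i$ or $0$. This is a two-choice knapsack with a lower bound on the resource. It is feasible if and only if assigning all free edges to $c$ satisfies the bound, which is a trivial check.

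For the approximation, we use standard profit scaling. Let $P$ be the maximum single-edge welfare contribution $\max_i\max\{a_i,b_i\}$ over forced and free edges. Then the optimum for this guess, $\mathrm{OPT}_h$, satisfies $\mathrm{OPT}_h\ge P$: the maximizing edge either is forced (and contributes its value) or can be placed at its better endpoint. The second case needs care, because placing it at the leaf may break the covering constraint.

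To avoid this, we work instead with $P_h=\max$ over forced contributions together with $\max_i\min\{a_i,b_i\}$ over free edges, or more simply we use the doubling of bounds $\Gamma$ as in Theorem~\ref{thm:chores-star-fptas}. We round each profit down to $\lfloor z/\kappa\rfloor$ with $\kappa=\epsilon\Gamma/m$. A dynamic program then stores, for each rounded profit value, the \emph{maximum} resource consumed, and checks feasibility against the lower bound. Rounding loses at most $m\kappa=\epsilon\Gamma$. Choosing $\Gamma\le \mathrm{OPT}_h<2\Gamma$ gives a $(1-2\epsilon)$-approximation. Profits above the cap can be truncated, since any solution reaching the cap already suffices. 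Taking the best over all guesses and the all-to-center candidate yields an FPTAS.

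The main obstacle is getting a valid lower bound on $\mathrm{OPT}_h$ to calibrate $\kappa$, since the covering constraint couples the choices. The geometric search over $\Gamma$ from the smallest positive value up to $m\max z$ sidesteps this, exactly as in the chores case. Checking that the DP state (rounded profit, max resource) is monotone, so that keeping maximal resource is without loss, is routine.
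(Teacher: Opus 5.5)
Your proof is correct, and its combinatorial half matches the paper's: leaves are always PROPm, you guess the lost edge $e_h$ maximizing $a_h$, and you force every edge with $a_i>a_h$ to the center. The difference is in the optimization step.

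The paper works with the complementary set $L\setminus\{h\}$ of edges sent to leaves. Your covering constraint on $T$ is equivalent to $\sum_{i\in L\setminus\{h\}}a_i\le \frac{1}{2}A$. On the free edges $F$, your objective equals $\sum_{i\in F}a_i+\sum_{i\in F\setminus T}(b_i-a_i)$. So the subproblem is an ordinary $0$-$1$ packing knapsack with weights $a_i$ and profits $b_i-a_i$, with nonpositive-profit items discarded. The paper calls the standard knapsack FPTAS as a black box. It then transfers the $(1-\epsilon)$ guarantee to welfare, using the fact that the optimal knapsack profit is at most the welfare of the corresponding orientation.

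Your route instead rounds the full welfare contributions and runs a tailored (rounded profit, maximum resource) DP with a geometric search over $\Gamma$, mirroring the chores FPTAS. This is self-contained and never has to translate a guarantee on a difference objective. The cost is an extra $O(\log(m z_{\max}/z_{\min}))$ factor and more bookkeeping. Your bound is in fact $(1-\epsilon)$, since the rounding loss is at most $\epsilon\Gamma\le\epsilon\,\mathrm{OPT}_h$.

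The obstacle you flag can also be removed without doubling. The covering constraint is monotone, so any free edge sent to its leaf in some feasible solution can be sent there alone, with all other free edges at the center. Let $P_h$ be the largest single-edge contribution realizable in a feasible solution. Then $P_h\le\mathrm{OPT}_h\le mP_h$. This is exactly the standard knapsack step of discarding items heavier than the capacity.
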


\begin{proof}
Let the star have center $c$ and leaves $\ell_1,\ldots,\ell_m$, and let
$e_i=(c,\ell_i)$. Write $a_i=v_c(e_i)$ and $b_i=v_{\ell_i}(e_i)$, and let
$A=\sum_i a_i$.

First observe that every leaf is always PROPm. Indeed, if $\ell_i$ receives
$e_i$, then it obtains its full value. If it does not receive $e_i$, then the
edge is assigned to $c$, and the PROPm correction term is exactly
$v_{\ell_i}(e_i)$.
Thus, we only need to enforce PROPm for the center. Let $L$ be the set of edges
not assigned to $c$. Since the graph is simple, the center's PROPm condition is
equivalent to
\[
    \sum_{i\in L} a_i-\max_{i\in L} a_i \le \frac{1}{2}A,
\]
with the convention that the condition is trivially satisfied when $L=\emptyset$.

The welfare of an orientation assigning exactly the edges in $L$ to the leaves is
\[
    A+\sum_{i\in L}(b_i-a_i).
\]
Therefore, the problem is to choose a set $L$ maximizing this expression subject
to the above constraint.

We solve this by guessing the edge $h\in L$ with maximum $a_h$.
Once $h$ is fixed, every other edge assigned to a leaf must have value at most $a_h$ for the center. Since $h$ is the edge that provides the PROPm correction, the PROPm condition for the center is equivalent to $\sum_{i\in L\setminus\{h\}} a_i\le \frac{1}{2}A$. 
Hence, the remaining edges assigned to leaves must be chosen from $I_h=\{i\neq h:a_i\le a_h\}$ and have total value at most $\frac{1}{2}A$ for the center.
Thus, for each choice of $h$, solve the knapsack problem:
\begin{align*}
    \max\quad &\sum_{i\in T}(b_i-a_i) \\
    \text{s.t.} \quad
    &T\subseteq I_h, \\
    &\sum_{i\in T}a_i \leq \frac{1}{2}A,
\end{align*}
discarding items with nonpositive profit $b_i-a_i$.
The candidate orientation assigns $e_h$ and all edges in $T$ to their leaves, and all other edges to $c$.
We also consider the orientation that assigns all edges to $c$.

This enumeration is exact if the knapsack subproblems are solved exactly: every feasible PROPm orientation either has $L=\emptyset$ or has some edge $h\in L$ with maximum $a_h$, in which case $L\setminus\{h\}$ is feasible for the
corresponding knapsack instance.

Using the standard FPTAS for knapsack for each guess $h$ gives an FPTAS for the star instance. 
Indeed, the FPTAS returns $T_h$ with profit at least $(1-\epsilon)$ times the optimal knapsack profit, so the resulting welfare is within a factor $(1-\epsilon)$ of the best orientation for that guess. 
Taking the best over all $h$, together with the all-to-center orientation, gives a $(1-\epsilon)$-approximation to the optimum PROPm welfare. 
The running time is polynomial in $m$ and $\frac{1}{\epsilon}$.
\end{proof}

\section{Missing Proofs from Section \ref{sec:PROPm:PO}}
\label{app:PROPm:PO}

\subsection{Proof of Proposition \ref{prop:propmPoIncompatibleTwoAgents}}

\propPropmPoIncompatibleTwoAgents*
\begin{proof}
Consider two agents connected by three parallel edges $p,q,r$, with valuations towards them
given by:
\[
    v_1(p)=2,\quad v_1(q)=0,\quad v_1(r)=1,
    \quad \text{and} \quad
    v_2(p)=2,\quad v_2(q)=1,\quad v_2(r)=0.
\]
Both agents have a proportional share $\frac{3}{2}$. We claim that there are two PROPm orientations between them:
\[
    \pi^{(1)}_1=\{p\},\quad \pi^{(1)}_2=\{q,r\},
    \quad \text{and} \quad
    \pi^{(2)}_1=\{q,r\},\quad \pi^{(2)}_2=\{p\}.
\]
In $\pi^{(1)}$, agent $1$ receives value $2$, while
agent $2$ receives value $1$ and can use the lost edge $p$, worth $2$ to her.
The argument for $\pi^{(2)}$ is symmetric.
We claim that no other orientation is PROPm. If agent $1$ receives $\emptyset$, $\{q\}$,
$\{r\}$, or $\{p,q,r\}$, then one of agents 1 and 2 violates PROPm. If agent $1$ receives $\{p,q\}$, then agent $2$
receives value $0$ and has PROPm correction
$1$. 
If agent $1$ receives $\{p,r\}$, then agent $2$
receives value $1$ and has PROPm correction
$0$. Thus, these orientations are not PROPm.

However, neither of the two PROPm orientations is PO. In $\pi^{(1)}$, moving
$r$ from agent $2$ to agent $1$ results in an orientation that Pareto dominates $\pi^{(1)}$. In $\pi^{(2)}$, moving $q$ from agent $1$ to agent $2$ results in an orientation that Pareto dominates $\pi^{(2)}$.
Therefore, no PROPm orientation is PO in this instance.

The impossibility result can be extended to instances with any $n\geq 3$ agents. One can add $n-2$ dummy agents, each adjacent to agent 1. For each dummy agent $d$ and agent 1, they value edge $\{1,d\}$ at 0.
One can easily verify that no PROPm orientation in the $n$-agent instance is PO.
\end{proof}

\subsection{Proof of Theorem \ref{thm:hard:PO:PROPm}}

\thmHardPOPROPm*
\begin{proof}
We reduce from \textsc{2P2N-3SAT}~\cite{berman2005approximation}. Given an instance with variables
$x_1,\ldots,x_n$ and clauses $C_1,\ldots,C_m$, where each clause has exactly
three distinct literals, and each variable appears exactly twice positively and
twice negatively, we construct a goods instance as follows.

For each variable $x_i$, create two literal agents, also denoted by $x_i$ and
$\bar{x}_i$. For each clause $C_j$, create one clause agent $c_j$. For a literal
$\ell$, let $a(\ell)$ be its corresponding literal agent. Thus the instance has
$2n+m$ agents.
The edges are defined as follows. For each variable $x_i$, create three parallel
edges between $x_i$ and $\bar{x}_i$, denoted by $s_i$, $p_i^+$, and $p_i^-$. Edge $s_i$ refers to the \emph{switch} edge.
For each clause $C_j=(\ell_j^1\vee \ell_j^2\vee \ell_j^3)$ and each $t\in[3]$, create three parallel edges between $a(\ell_j^t)$ and $c_j$, denoted by $g_j^t$, $q_j^t$, and $r_j^t$. Edge $g_j^t$ refers to the \emph{main occurrence} edge.
Hence, the constructed multigraph has $3n+9m$ edges.

It remains to define the valuations. In the variable gadget for $x_i$, both
$x_i$ and $\bar{x}_i$ value $s_i$ at $4$; moreover, $x_i$ values $p_i^+$ at $1$ and $p_i^-$ at $0$, while $\bar{x}_i$ values $p_i^+$ at $0$ and $p_i^-$ at $1$.
For the gadget corresponding to the literal $\ell_j^t$ in the clause $C_j$, the literal agent $a(\ell_j^t)$ values $g_j^t,q_j^t,r_j^t$ at $1,0,1$, respectively, while the clause agent $c_j$ values them at $2,1,0$, respectively. 
All valuations are additive, and all values belong to $\{0,1,2,4\}$.
Figure~\ref{fig:propM-hardness-gadgets} illustrates the gadgets and valuations.
Each literal agent has total value $9$: value $4$ from the switch edge, value $1$ from exactly one of $p_i^+$ and $p_i^-$, and value $2$ from each of its two literal occurrences. Each clause agent also has a total value $9$, since each of its three occurrence gadgets contributes value $2+1$. 
Therefore $\mathsf{PROP}_i = \frac{9}{2}$ for every agent $i$.

\begin{figure}[t]
\centering
\begin{tikzpicture}[
    agent/.style={circle,draw,minimum size=8mm,inner sep=1pt,font=\small},
    litagent/.style={circle,draw,minimum size=13mm,inner sep=1pt,font=\small},
    val/.style={font=\scriptsize,inner sep=1pt,fill=white},
    elab/.style={font=\scriptsize,inner sep=1pt,fill=white},
    panel/.style={font=\small}
]

\node[agent] (xi) at (0,0) {$x_i$};
\node[agent] (nxi) at (3,0) {$\bar{x}_i$};

\draw (xi) to[bend left=35] node[elab,above] {$s_i$} (nxi);
\node[val] at ($(xi)!0.22!(nxi)+(0,0.55)$) {$4$};
\node[val] at ($(xi)!0.78!(nxi)+(0,0.55)$) {$4$};

\draw (xi) -- node[elab,above] {$p_i^+$} (nxi);
\node[val] at ($(xi)!0.22!(nxi)+(0,0.15)$) {$1$};
\node[val] at ($(xi)!0.78!(nxi)+(0,0.15)$) {$0$};

\draw (xi) to[bend right=35] node[elab,below] {$p_i^-$} (nxi);
\node[val] at ($(xi)!0.22!(nxi)+(0,-0.55)$) {$0$};
\node[val] at ($(xi)!0.78!(nxi)+(0,-0.55)$) {$1$};

\node[panel] at (1.5,-1.25) {(a) Variable gadget};

\node[litagent] (lit) at (5.3,0) {$a(\ell_j^t)$};
\node[agent] (cj) at (8.7,0) {$c_j$};

\draw (lit) to[bend left=35] node[elab,above] {$g_j^t$} (cj);
\node[val] at ($(lit)!0.22!(cj)+(0,0.55)$) {$1$};
\node[val] at ($(lit)!0.78!(cj)+(0,0.55)$) {$2$};

\draw (lit) -- node[elab,above] {$q_j^t$} (cj);
\node[val] at ($(lit)!0.22!(cj)+(0,0.15)$) {$0$};
\node[val] at ($(lit)!0.78!(cj)+(0,0.15)$) {$1$};

\draw (lit) to[bend right=35] node[elab,below] {$r_j^t$} (cj);
\node[val] at ($(lit)!0.22!(cj)+(0,-0.55)$) {$1$};
\node[val] at ($(lit)!0.78!(cj)+(0,-0.55)$) {$0$};

\node[panel] at (7.0,-1.25) {(b) Literal-occurrence gadget};

\end{tikzpicture}
\caption{The gadgets used in the reduction. 
The label closer to an endpoint is the value of the corresponding edge for that endpoint.
For each clause $C_j=(\ell_j^1\vee \ell_j^2\vee \ell_j^3)$, the construction creates three copies of the literal-occurrence gadget in panel (b), one for each $t\in[3]$, all sharing the clause agent $c_j$.}
\label{fig:propM-hardness-gadgets}
\end{figure}

We first present a simple consequence of PO. 
In any PO orientation, every edge with exactly one endpoint that has a positive value on it must be assigned to that endpoint.
Hence, for each $i\in [n]$, $p_i^+$ is assigned to $x_i$, $p_i^-$ is assigned to $\bar{x}_i$, and, for every $j\in[m]$ and $t\in[3]$, $q_j^t$ is assigned to $c_j$, and $r_j^t$ is assigned to $a(\ell_j^t)$.
Otherwise, moving such an edge from the endpoint with a value of zero to the other endpoint with a positive value would strictly improve one agent without hurting others.

Suppose first that the \textsc{2P2N-3SAT} instance admits a satisfying assignment.
We construct an orientation $\pi$ as follows. For each $i\in [n]$, if $x_i$ is \texttt{True}, assign $s_i$ to $x_i$; otherwise, assign $s_i$ to $\bar{x}_i$. 
Assign all edges $p_i^+$, $p_i^-$, $q_j^t$, and $r_j^t$ to their unique positive-valuing endpoints. 
Finally, for every $j\in [m]$ and $t\in [3]$, assign $g_j^t$ to $c_j$ if $\ell_j^t$ is \texttt{True} and to $a(\ell_j^t)$ otherwise.

We claim that $\pi$ is PROPm. A literal agent receiving its switch edge has
value $7$, coming from the switch edge, its valued $p$-edge, and its two $r$-edges. 
A literal agent not receiving its switch edge corresponds to a false literal and therefore receives both of its $g$-edges, so its value is $5$. 
Thus, every literal agent has value at least $\frac{9}{2}$. Each clause agent $c_j$ receives its three $q$-edges, with a total value of $3$. 
Since the truth assignment satisfies $C_j$, at least one literal in $C_j$ is \texttt{True}, so $c_j$ also receives at least one $g$-edge of value $2$. 
Hence, $v_{c_j}(\pi_{c_j}) \geq 5 > \frac{9}{2}$, and $\pi$ is PROPm.

We next show that $\pi$ is PO. 
Suppose, for contradiction, that an orientation $\pi'$ Pareto dominates $\pi$.
By the observation above, we may assume without loss of generality that $\pi'$ agrees with $\pi$ on all edges $p_i^+,p_i^-,q_j^t,r_j^t$. 
Fix a variable $x_i$, and let $T_i$ be the literal
agent receiving $s_i$ in $\pi$ and $F_i$ the other literal agent. Agent $T_i$
has value $7$ in $\pi$. If $s_i$ were assigned to $F_i$ in $\pi'$, then even if
$T_i$ received all its other positive-valued incident edges, its value would be
at most $5$, contradicting that $\pi'$ Pareto dominates $\pi$.
Thus, every switch edge is assigned
in $\pi'$ as in $\pi$.

Now consider $F_i$. Since $F_i$ still does not receive the switch edge and
since in $\pi$ it already receives all other incident edges that it values positively, it cannot lose either of its main occurrence edges in $\pi'$.
Therefore, every main occurrence edge corresponding to a false literal remains
assigned to its literal agent. Consequently, a clause agent cannot gain any such
edge. If a clause agent lost a main occurrence edge that it receives in $\pi$,
its value would strictly decrease and could not be compensated. Hence, every main
occurrence edge is also assigned in $\pi'$ as in $\pi$. 
Thus $\pi'=\pi$, which contradicts the existence of a strict Pareto improvement. 
Therefore, $\pi$ is Pareto optimal.

Conversely, suppose that there exists an orientation $\pi$ that is both PROPm and Pareto optimal. 
By the observation above, all edges $p_i^+$, $p_i^-$, $q_j^t$, and $r_j^t$ are assigned to their unique positive-valuing endpoints.
Define a truth assignment by setting $x_i$ to \texttt{True} if and only if $s_i$ is assigned to $x_i$.

We prove that every clause is satisfied. Consider a literal agent that does not receive its switch edge, and let $k\in\{0,1,2\}$ be the number of its main occurrence edges that it receives in $\pi$. 
Its value is $3+k$: value $1$ from its valued $p$-edge, value $2$ from its two $r$-edges, and value $k$ from its main occurrence edges. 
Its PROPm correction term, in $\pi$, is $0$. Indeed, the opposite literal agent receives an edge of value $0$ from this agent's perspective, and each adjacent clause agent receives the corresponding $q$-edge, which also has value $0$ for the literal agent. Since $\pi$ is PROPm, we have $3+k\ge \frac{9}{2}$, and hence $k=2$. 
Thus, every literal agent that does not receive its switch edge must receive both of its main occurrence edges.

Now consider a clause agent $c_j$, and let $k$ be the number of main occurrence edges among $g_j^1,g_j^2,g_j^3$ assigned to it. Agent $c_j$ has a value of $3+2k$, because it receives all three $q$-edges. Its PROPm correction term is also $0$:
for every adjacent literal agent, the corresponding $r$-edge is assigned to the literal agent and has value $0$ for $c_j$. Since $\pi$ is PROPm, $3+2k\ge \frac{9}{2}$, so $k\ge 1$. Hence, every clause agent receives at least one main occurrence edge.
Let $g_j^t$ be a main occurrence edge assigned to $c_j$. Then the literal agent $a(\ell_j^t)$ does not receive one of its two main occurrence edges. 
By the previous paragraph about literal agents, $a(\ell_j^t)$ must receive its switch edge. 
By the definition of the truth assignment, the literal $\ell_j^t$ is true. 
Thus $C_j$ is satisfied.

Therefore, the \textsc{2P2N-3SAT} instance is satisfiable if and only if the constructed instance admits an orientation that is both PROPm and Pareto optimal. 
The construction is polynomial, creates $2n+m$ agents and $3n+9m$ edges, and uses only values in $\{0,1,2,4\}$. The decision problem is therefore NP-hard.
\end{proof}

\section{Missing Proofs in \Cref{sec:PROPavg}}

\subsection{Proof of \Cref{prop:prop-avgNotExist}} \label{sec:prop-avgNotExistAppendix}


\begin{proof}
Consider the two $K_6$ gadgets in Figure~\ref{fig:PROPavg}. The bridge has value $1$ at both roots. Each spoke has value $0$ at its root and $1$ at its leaf, and all leaf--leaf edges have value $0$ at both endpoints. Every agent has proportional share $\frac{1}{2}$.
In any orientation, one root, say $1$, loses the bridge. Its value is then zero, and its correction is $\frac{1}{1+z}$, where $z$ is the number of spokes it loses. Thus $z\le1$, so at least four leaves lose their spoke to root $1$. Each such leaf has value zero and must lose at most one of its four zero-valued leaf--leaf edges. It therefore receives at least three edges inside the $K_5$ on the leaves. These four leaves require at least $12$ edges, but the $K_5$ has only $10$, a contradiction.
\end{proof}

\end{document}